%% file: main.tex
\pdfoutput=1
\documentclass[11pt]{article}
\usepackage[margin=1in]{geometry}
\usepackage{lmodern}
\usepackage{amsmath,amssymb,amsthm,mathtools}
\usepackage{booktabs}
\usepackage{array}
\usepackage{graphicx}
\usepackage{tikz}
\usepackage[round]{natbib}
\usepackage[colorlinks=true,linkcolor=blue,citecolor=blue,hypertexnames=false,bookmarksnumbered=true]{hyperref}
\usepackage{enumitem}
\hypersetup{
  pdftitle={The Anatomy and Boundary of Adaptation under Temporal Tabular Shift},
  pdfauthor={Tianyu Wang, Xi Vincent Wang, Lihui Wang, Mian Li, Zhihao Liu},
  pdfkeywords={test-time adaptation, distribution shift, tabular foundation models, partial identification, minimax rates}
}

\input{appendix/app_macros}

\newcommand{\SuppFirstMentionNote}{\footnote{Online Appendix~1 is the
  supplementary document that accompanies this arXiv submission as the
  ancillary file \texttt{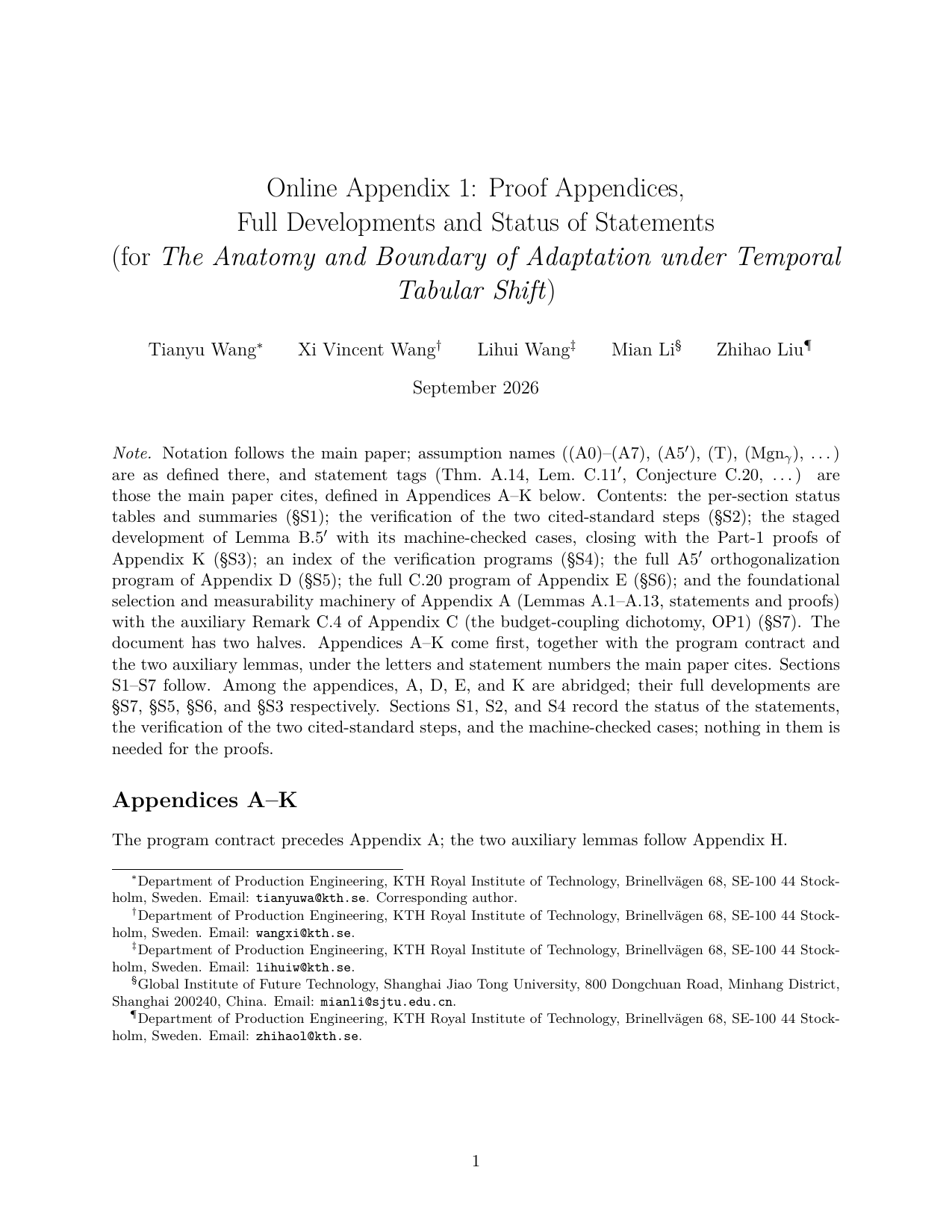}; it carries the proof
  appendices A--K under their original numbering, their full developments,
  and the status of every statement.}}

\newcommand{\ArtifactDepositNote}{are collected in a reproduction deposit
that will be released in a public archive; its citable DOI will be recorded
in a revised version of this manuscript}

\theoremstyle{plain}
\newtheorem{theorem}{Theorem}[section]
\newtheorem{proposition}[theorem]{Proposition}
\newtheorem{lemma}[theorem]{Lemma}
\newtheorem{corollary}[theorem]{Corollary}

\theoremstyle{definition}

\theoremstyle{remark}
\newtheorem{remark}[theorem]{Remark}

\title{The Anatomy and Boundary of Adaptation\\
under Temporal Tabular Shift}
\author{Tianyu Wang\thanks{Department of Production Engineering, KTH Royal
  Institute of Technology, Stockholm, Sweden.
  Email: \texttt{tianyuwa@kth.se}. Corresponding author.}
  \and Xi Vincent Wang\thanks{Department of Production Engineering, KTH Royal
  Institute of Technology, Stockholm, Sweden.
  Email: \texttt{wangxi@kth.se}.}
  \and Lihui Wang\thanks{Department of Production Engineering, KTH Royal
  Institute of Technology, Stockholm, Sweden.
  Email: \texttt{lihuiw@kth.se}.}
  \and Mian Li\thanks{Global Institute of Future Technology, Shanghai Jiao
  Tong University, Shanghai, China. Email: \texttt{mianli@sjtu.edu.cn}.}
  \and Zhihao Liu\thanks{Department of Production Engineering, KTH Royal
  Institute of Technology, Stockholm, Sweden.
  Email: \texttt{zhihaol@kth.se}.}}
\date{September 2026}

\begin{document}
\maketitle

\input{sec_abstract}

\input{sec_intro}

\input{sec_related}

\input{sec_setup}

\input{sec_anatomy}

\input{sec_boundary}

\input{sec_beyond}

\section*{Author Contributions}
Contributor roles follow the CRediT taxonomy.
\textbf{Tianyu Wang:} Conceptualization; Methodology; Formal analysis;
Software; Data curation; Investigation; Visualization; Writing -- original
draft. \textbf{Xi Vincent Wang:} Conceptualization; Funding acquisition;
Supervision; Writing -- review \& editing. \textbf{Lihui Wang:} Funding
acquisition; Supervision; Writing -- review \& editing. \textbf{Mian Li:}
Conceptualization; Methodology; Funding acquisition; Writing -- review \&
editing. \textbf{Zhihao Liu:} Software; Writing -- review \& editing.

\section*{Acknowledgments and Disclosure of Funding}
The work presented here was supported by the Swedish research centre of
eXcellence in PRoduction RESearch (XPRES) and the National Natural
Science Foundation of China (Grant No.\ 52275263). The authors declare
no competing interests.

\bibliographystyle{plainnat}
\bibliography{refs}

\clearpage
\appendix
\renewcommand{\theHsection}{appendix.\thesection}
\input{appendix/app_guide}
\setcounter{section}{1}               
\input{appendix/app_B}
\setcounter{section}{7}               
\input{appendix/app_H}
\input{appendix/app_discharges}
\setcounter{section}{9}               
\input{appendix/app_J}
\end{document}

%% file: appendix/app_macros.tex
\newcommand{\E}{\mathbb{E}}

\newcommand{\J}{\mathcal{J}}
\newcommand{\barmu}{\bar\mu}

\newcommand{\Gcov}{\Gamma_{\mathrm{cov}}}
\newcommand{\piout}{\pi_{\mathrm{out}}}
\newcommand{\epsr}{\varepsilon_r}
\newcommand{\Dcov}{D_{\mathrm{cov}}}

\newcommand{\Pmon}{\mathcal{P}_{\mathrm{mon}}}
\newcommand{\PmonL}{\mathcal{P}^L_{\mathrm{mon}}}
\newcommand{\as}{\alpha_s}

\newcommand{\Pbar}{\bar P}

\newcommand{\sgn}{\operatorname{sign}}
\DeclareMathOperator{\Var}{Var}

\DeclareMathOperator{\supp}{supp}

\newenvironment{astmt}[1]{\par\medskip\noindent\textbf{#1}\hspace{0.5em}\itshape}{\par\medskip}
\providecommand{\OAname}{Online Appendix~1}
\makeatletter
\newcommand{\taglabel}[2]{\phantomsection\def\@currentlabel{#1}\label{#2}}
\makeatother
\newenvironment{astmtrm}[1]{\par\medskip\noindent\textbf{#1}\hspace{0.5em}}{\par\medskip}

%% file: sec_abstract.tex
\begin{abstract}%
Prequential adaptation of frozen tabular foundation models under temporal
drift, with each label revealed only after prediction, helps some deployments
and harms others, yet current practice does not predict which. We study the
sources and limits of these gains. A diagnostic anatomy attributes gains to
four recurring mechanisms under a streaming protocol that removes three
optimistic biases and quantifies a fourth. Within an agnostic total-variation
drift class, the target conditional is only partially identified: its
identified-set diameter, the \emph{wall}, is irreducible from unlabeled data
uniformly in sample size. A second, orthogonal $L^2$ projection wall
quantifies what the frozen representation cannot express. Two canonical
mechanism priors collapse the first wall. Under stated nuisance-rate
conditions, the wall can be estimated from labeled historical windows at a
$\sqrt N$ rate above the margin threshold
$\gamma^\star=d_0/(2\alpha_s)$. At $\gamma=0$, the conditional lower-bound
program depends on an open affinity estimate; the positive-margin lower
branch also remains open. Semi-synthetic data illustrate the finite-sample
mechanism with calibrated exponents. Stream-level proxies on eight
industrial streams fall on the difficult side under a stated roughness
bound, while the equality case $\gamma=\gamma^\star$ remains unresolved.
\end{abstract}

%% file: sec_intro.tex
\section{Introduction}
\label{sec:intro}

A frozen tabular foundation model \citep{hollmann2025accurate,qu2025tabicl}
deployed on an industrial data stream can degrade as the world drifts. Its
operator must repeatedly decide whether to adapt (correct the model online from whatever the stream
reveals) or to freeze and wait for retraining. The empirical record for
adaptation is mixed. In our case study on the eight industrial streams of
the TabReD benchmark \citep{rubachev2024tabred}, streaming residual correction improves a frozen
model on seven (by $+0.59\%$ to $+3.94\%$ RMSE, or up to $+0.86$ AUC
points), while on the eighth it \emph{costs} $2.28$ AUC points. These outcomes
motivate a diagnostic question that method rankings alone do not answer.

Three information states recur throughout, and each headline result is
stated relative to one of them. Before a window's labels arrive, the operator
holds only the unlabeled stream and the frozen model; after each
prediction, the revealed label joins a strictly-past prefix that supports
prequential correction; and occasional fully labeled historical windows
support learning where the boundary itself lies. Every measured gain in
this paper belongs to the second state, the identification impossibility
(the identified-set diameter, or Wall~A, Section~\ref{sec:idset}) to the first, and the margin-boundary theory of Part~II to the
third; the representation deficit (Wall~B) binds in all three, with or
without labels.

Method rankings presuppose that the shift is correctable and tend to treat
failure as method immaturity. We ask a more basic question: \emph{of the
degradation a frozen model suffers under temporal tabular shift, how much is
correctable at all, from what information, and at what price?} We answer this
question in three parts.

\textbf{The anatomy} (Part~I) attributes the evaluated system's gains
to four recurring gain mechanisms (global debias, freshness, local
re-estimation in representation space, ranking recalibration) plus two
canonical identification-collapse classes, covariate and label-marginal
response. We do not assert these to be a
unique additive decomposition of every adaptation algorithm; they are a
mechanism-oriented audit of this deployment protocol. At matched sample
size on one stream, the \emph{oldest} labels degrade the corrector
($-1.1\%$) while the newest improve it ($+4.9\%$): a staleness
penalty in that case. On another stream, $93\%$ of the corrector's measured gain is
reproduced by a running mean. Such examples show why we report gains
with channel attribution and time-respecting, leak-free oracles.
Part~I therefore also quantifies four optimistic evaluation biases; two of
them (the retrieval leak and the deduplication) reverse conclusions that
the uncorrected protocol supports.

\textbf{The boundary} (Part~II) uses partial identification to explain why
the correctable residual signal measured in Part~I can be small \citep{manski2003partial}. Within one unlabeled window
the drifted conditional is identified only up to a set whose diameter
(the \emph{wall}) reduces exactly to an averaged pointwise budget. We
prove the wall irreducible: no unlabeled statistic shrinks it, uniformly
in sample size. The wall is one of \emph{two} obstructions, orthogonal in
the projection geometry rather than statistically independent. The
second is the information the frozen representation discarded, and the two
prescribe different remedies (labels versus a new embedding). For a fixed
number $K$ of labeled historical windows holding $N$ anchor labels in
total, we establish a $\sqrt N$-regular estimator for learning the wall
from that history when the margin exponent $\gamma$ exceeds
$\gamma^\star=d_0/(2\alpha_s)$ (intrinsic dimension $d_0$, drift
smoothness $\alpha_s$), under stated nuisance-rate conditions. On the other side, the lower-bound program is
conditional at $\gamma=0$ on one open affinity estimate and remains
open for $0<\gamma<\gamma^\star$. The equality case is not covered by
the current log-rate argument. On eight TabReD streams, noisy stream-level
proxies lie below the threshold under the rough-field bound
$\alpha_s\le1$. For fifteen additional benchmarks without drift labels,
we report only a sensitivity analysis under
$(\alpha_s,\gamma)=(1,1)$; these points are not data-set-specific phase
measurements.

\textbf{Beyond the boundary} (Part~III) develops a local-correction
screening rule that needs no labels beyond those already consumed by the
gated corrector. Retrospectively, the rule detects our harm case
(\textsc{homecredit-default}, Section~\ref{sec:anatomy-numbers}). This part
also gives an attribution procedure that splits residual error into Wall-A,
Wall-B, and correctable shares, and it derives the rate at which anchors
reduce the wall. The latter turns an unspecified need for more labels into a
labeling requirement.

Our contributions follow the three parts of that question.
\begin{enumerate}
\item \textbf{What is correctable.} A diagnostic anatomy of prequential
tabular adaptation: four measured gain mechanisms plus two
identification-collapse classes, with per-stream attribution on eight
industrial streams and a mechanistic harm case
(Section~\ref{sec:anatomy}). Its evidence layer is a measurement methodology that quantifies four
optimistic biases in streaming evaluation. The protocol removes
future-neighbor oracle leakage (up to $4.5\times$, with a sign
flip on one stream), temporal-twin noise floors, and duplicate-deflated intrinsic
dimension. The fourth bias arises when a safety gate monitors one loss while
harm accrues in another. We show that this mismatch is structural rather
than tunable and measure its cost
(Sections~\ref{sec:methodology} and~\ref{sec:certificate}).
\item \textbf{What limits the remainder.} A partial-identification
boundary: the exact reduction identity, the irreducibility theorem,
the collapse classes, and the two-wall attribution (geometrically
orthogonal components) with a deployable label-free wall estimand
(Section~\ref{sec:wallA}); and a margin-indexed price for learning the
wall from labeled history: a proved $\sqrt N$ sufficient branch above
$\gamma^\star$, a conditional $\gamma=0$ lower-bound program below it,
and an open positive-margin lower branch. Semi-synthetic data illustrate
the sign-recovery mechanism with calibrated exponents; a
proxy-and-sensitivity study covers twenty-three data sets
(Sections~\ref{sec:dichotomy}--\ref{sec:phasesim}).
\item \textbf{What to do about it.} Deployment diagnostics: the
local-correction screening rule, evaluated on all eight streams with a
one-sided error profile; the degradation-attribution procedure; and the
price of labels (Section~\ref{sec:beyond}).
\end{enumerate}

Section~\ref{sec:related} situates this agenda in the five literatures it
touches (empirical test-time adaptation, tabular foundation models,
distribution-shift theory, representation-level limits, and drift
monitoring) and in the two it
borrows its formal tools from, partial identification and semiparametric
efficiency theory. Section~\ref{sec:setup} then fixes the setting. The
three parts occupy Sections~\ref{sec:anatomy}--\ref{sec:methodology}
(Part~I), Section~\ref{sec:wallA} (Part~II), and Section~\ref{sec:beyond}
(Part~III).

Empirically, adaptation can help, harm, or reduce to a simple correction.
The theory identifies which limitations follow from the stated drift and
representation assumptions. For each claim, we state both the supporting
evidence and the assumptions it requires.

%% file: sec_related.tex
\providecommand{\SuppFirstMentionNote}{}

\section{Related Work}
\label{sec:related}

\noindent\emph{Test-Time Adaptation Practice.} The modern test-time-adaptation (TTA)
literature adapts a deployed model from the unlabeled test stream itself.
That label-free constraint delimits the wall theory of
Sections~\ref{sec:idset}--\ref{sec:wallB} rather than our measured system,
which is prequential and consumes each label after its prediction
(Section~\ref{sec:setup}).
\citet{sun2020ttt} update the model on each test input through an auxiliary
self-supervised task before predicting; \citet{wang2021tent} minimize
prediction entropy, updating only normalization statistics and channel-wise
affine parameters; \citet{wang2022cotta} extend the setting to continually
changing targets, countering the error accumulation and forgetting that
self-training induces with averaged pseudo-labels and stochastic weight
restoration. These studies are mostly empirical and vision-centered, using
corruption and style-shift benchmarks. On a large controlled benchmark,
\citet{zhao2023pitfalls} show that reported gains are sensitive to
hyperparameter and model selection, which is difficult without labels, and
that well-configured methods still fail on some shift types.

Direct tabular TTA methods address the feature and shift structure that
vision methods miss. TabLog learns adaptable logical
rules \citep{ren2024tablog}, while AdapTable combines uncertainty calibration
with target-label-distribution adjustment \citep{kim2024adaptable}. With a
small labeled target sample, \citet{zeng2024llmtta} study representation
choice under tabular $Y\mid X$ shift. These methods provide algorithmic
baselines but do not ask which parts of conditional drift are determined by the time-respecting information
available to a frozen deployment. Section~\ref{sec:channels} supplies the
diagnostic taxonomy. Section~\ref{sec:wallA} separates the limits of
identification from those of the frozen representation. The attribution
procedure in Section~\ref{sec:attribution-procedure} then distinguishes
method immaturity, channel exhaustion, and unidentifiability, which require
different remedies.

\noindent\emph{Tabular Foundation Models.} Frozen in-context tabular learners
are the setting our theory takes as given. TabPFN \citep{hollmann2023tabpfn}
is a prior-fitted transformer that classifies small tables in a single
forward pass, approximating Bayesian inference under a synthetic prior. Its
foundation-model successor \citep{hollmann2025accurate} outperforms tuned
baselines on small-to-medium tables, and TabICL \citep{qu2025tabicl} scales
tabular in-context learning to hundreds of thousands of rows. These models are distributed as frozen artifacts, and their context is the
only adaptation interface. This is the fixed-embedding regime studied here
(Section~\ref{sec:setup}). Our measurements use the frozen backbone of a companion system built on
this model family \citep{wang2026beyond} (Section~\ref{sec:setup}). For evaluation, TabReD \citep{rubachev2024tabred} provides industrial
streams with temporal splits on which method rankings reorder.
\citet{cai2025limits} document the gap between i.i.d.\ and temporal
evaluation for deep tabular methods. Drift-Resilient
TabPFN \citep{helli2024drift} incorporates a drift prior into pretraining.
This is complementary to our account, which takes whatever frozen model is
deployed as given and asks what any post-hoc procedure can still recover
from it.

\noindent\emph{Distribution-Shift Theory: Bounds versus Identification.}
Classical domain-adaptation theory controls target risk through computable
divergences: \citet{bendavid2010theory} bound the target error by the
source error plus a divergence between the two covariate laws that finite
unlabeled samples estimate, plus the error of the best joint hypothesis;
\citet{mansour2009domain} generalize the divergence to arbitrary loss
classes via the discrepancy distance. Covariate-shift correction operates
under the complementary assumption that the conditional is invariant, so
\citet{shimodaira2000covariate} reweights the likelihood by the density
ratio of the covariate laws, and \citet{sugiyama2007covariate} carry the
same weighting into model selection. The taxonomy of data set shift
\citep{quinonero2009dataset,morenotorres2012unifying} names the case both
lines exclude: concept shift, where $p(y\mid x)$ itself moves. This machinery does not answer our question. Its label-free quantities are
functionals of the covariate laws, whereas concept drift appears only in
terms that unlabeled data do not determine: the joint-error term in the
bounds or the invariance assumption behind reweighting. A bound conditions
on that term. A deployment decision instead requires knowing whether the
observable data determine the drifted conditional. \citet{bendavid2010impossibility} prove impossibility theorems for
unsupervised domain adaptation, and \citet{bendavid2012hardness} quantify
the hardness without target labels. Those results do not measure the
remaining ambiguity. Section~\ref{sec:wallA} does: within the agnostic
drift class, no unlabeled statistic shrinks the set of conditionals
consistent with the data, uniformly in sample size
(Theorem~\ref{thm:irred}). We therefore treat identification as the primary
question. The \emph{wall} is the diameter of what the label-free
observables leave undetermined, rather than another divergence to estimate.

\noindent\emph{Partial Identification and Label Shift.} Partial
identification in econometrics provides the relevant framework.
\citet{manski2003partial} develops the program of reporting parameter
values consistent with observable data and stated assumptions without
forcing point identification; \citet{molinari2020partial} surveys its
development. Our wall adapts a Manski-style identified set to frozen
representations under temporal shift. The estimand is the set's diameter,
a function on covariate space (Section~\ref{sec:wallA}), and the observations
arrive in streaming windows rather than in a single survey. Where
\citet{kong2022partial} impose a latent causal model whose changing
components are themselves only partially identifiable, and use that
partial identification to determine the target joint distribution, we retain the
set of drifted conditionals and study its diameter. The label-shift
literature is the collapse case.
\citet{lipton2018bbse} recover the shifted label marginal from a black-box
predictor's confusion matrix, and \citet{garg2020unified} unify the moment
and likelihood variants of that estimation. The two drift classes studied
here recover these identification mechanisms: covariate shift determines the
conditional by assumption, while label shift identifies its
finite-dimensional mixture weights. We use them as canonical collapse
examples, not as an exhaustive characterization of every structural prior
that could identify drift.

\noindent\emph{Representation-Level Limits.} Our second obstruction is the
information the frozen representation itself discarded, formalized as
Wall~B in Section~\ref{sec:wallB}. It instantiates the
information-theoretic lower bounds of \citet{zhao2019learning} (who show
that enforcing invariant representations under shifted label marginals
forces a floor on joint error) and the support/invertibility analysis of
\citet{johansson2019support}, as a per-window, label-free-deployable
sufficiency deficit. Previous analyses of adaptation failure can conflate information lost by
the representation with information that no representation could determine.
Section~\ref{sec:wallB} separates these quantities exactly
(Theorem~\ref{thm:wallB}). The attribution procedure in
Section~\ref{sec:attribution-procedure} uses that separation to distinguish
when labels are needed from when the embedding must change.

\noindent\emph{Semiparametric Efficiency and Non-Smooth Functionals.} The
margin-boundary analysis in Section~\ref{sec:dichotomy} draws on two
literatures. The proved
regular branch ($\sqrt N$ rates for
the wall as a functional of nuisance conditionals) builds on
efficient-influence-function and debiased-machine-learning machinery
\citep{chernozhukov2018dml,kennedy2022review}, with the classical
semiparametric backdrop of \citet{bickel1993efficient} and \citet{vandervaart1998}.
The conditional lower-bound program draws on the literature on non-smooth
functionals. \citet{lepski1999} study $L_r$ norms of a regression function,
and \citet{cailow2011} give composite-hypothesis lower bounds. The
white-noise equivalences of \citet{nussbaum1996} and \citet{brownlow1996} connect
those results to our below-threshold analysis. The margin
parameter that indexes our threshold is the Mammen--Tsybakov margin
condition \citep{tsybakov2004aggregation,audibert2007fast} in a new role.
There it governs classification rates; here it governs whether the
\emph{wall} (an absolute-value functional of drift) is smooth or
non-smooth at the relevant scale. The margin-indexed threshold at
$\gamma^\star=d_0/(2\alpha_s)$ (the threshold of
Section~\ref{sec:dichotomy}, combining the margin exponent with the
drift smoothness $\alpha_s$ and the intrinsic dimension $d_0$ of the
embedded data) and the conditional coherent-adversary construction for
multi-window pooling appear to be new. The latter remains conditional on
the affinity estimate identified in Appendix~K of Online
Appendix~1 (gap (2c), Section~\ref{sec:dichotomy}).\SuppFirstMentionNote{}

\noindent\emph{Drift Monitoring and Evaluation Practice.} Our diagnostics build on
the monitoring literature surveyed by \citet{gama2014survey}, whose
detectors and adaptation strategies largely presume that labels arrive
promptly enough to score the stream. Our local-correction screening rule
(Section~\ref{sec:certificate}) is instead near-label-free by construction:
it asks for no labels beyond those the corrector it gates already
consumes, and the irreducibility result (Theorem~\ref{thm:irred}) tells the
operator which of its readings are measurements and which restate the
assumed drift budget. Label-free
risk monitors for test-time adaptation \citep{schirmer2025monitoring} are
loss-denominated, as are the game-theoretic sequential tests they can be
built on \citep{shafer2021betting,ramdas2023gametheoretic}. Our currency-mismatch result in
Section~\ref{sec:gates} concerns safety gates scored in a loss the
deployment does not rank by. It identifies what such monitors can detect and documents a failure in
practice. Conformal prediction under covariate shift
\citep{tibshirani2019conformal} gives distribution-free coverage when only
the covariate law moves and the likelihood ratio is estimable. That guarantee depends on conditional invariance, which our setting does not
assume. Adaptive conformal inference \citep{gibbs2021adaptive} drops the
invariance assumption and tracks arbitrary drift online, but its target
is marginal coverage of prediction sets, not the attribution and
identification questions asked here. The local-correction screening rule instead accounts for
conditional drift through the drift budget. Our future-neighbor leak (Section~\ref{sec:leak}) is the
retrieval-oracle analogue of the look-ahead biases long managed in finance
by purged cross-validation \citep{lopezdeprado2018advances}. Across the eight measured streams,
$49.5$--$53.2\%$ of the batch oracle's retrieved neighbors come from the future, and removing
them can reverse the sign of a signal ceiling. The four-bias protocol of
Section~\ref{sec:methodology} applies the same safeguards to noise floors,
intrinsic dimension, and safety gating. The prequential evaluation
discipline itself is classical \citep{dawid1984prequential}. Online
learning with expert advice \citep{cesabianchi2006prediction} operates on
the same revealed-label stream and bounds regret against a comparator
class without distributional assumptions; a regret bound says how well a
corrector tracks its best comparator, not whether the information needed
to correct exists at all. The latter is the identification question of
Section~\ref{sec:wallA}.

%% file: sec_setup.tex
\providecommand{\SuppFirstMentionNote}{}

\section{Setting, Protocol, and Measurement Conventions}
\label{sec:setup}

\noindent\emph{Model and Representation.}
Covariates live in a Polish space $\mathcal X$ and labels in $\mathcal Y$,
binary $\{0,1\}$ for classification (the primary case for our identification
results) or real-valued for regression. A \emph{frozen} pretrained tabular
foundation model supplies a representation $\phi:\mathcal X\to\mathbb R^{D}$
(ambient width $D=192$; the intrinsic dimension $d$ of (A3), estimated as
$d_0$, is far smaller) and a readout $h_0$, jointly defining the model's source
conditional $p_0(\cdot\mid x)=h_0(\phi(x))$ with mean $\eta_0(x)$. Neither
$\phi$ nor $h_0$ is retrainable at deployment, so every deployed prediction is
$\mathcal F_\phi$-measurable, $\mathcal F_\phi=\sigma(\phi)$, a constraint with consequences
(Theorem~\ref{thm:borrowed}).

\noindent\emph{Stream and Prequential Protocol.}
Deployment data arrive as a temporally ordered stream
$x_1,x_2,\dots$. The model predicts each $x_t$, the label $y_t$ is revealed
afterwards, and any corrector may use only the strictly-past labeled prefix.
All adaptation gains in this paper are computed under this prequential
discipline \citep{dawid1984prequential}. A \emph{window} $W$ is a contiguous segment of the stream with
covariate law $p_W$ and (unobserved) conditional $p_W(\cdot\mid x)$.

\noindent\emph{Observables and the Drift Budget.}
Within a window, the label-free information is: (O1) the unlabeled draws
from $p_W$, hence any functional of the covariate law, including density
ratios against the source; (O2) the frozen pair $(\phi,h_0)$, hence
$p_0(\cdot\mid x)$ pointwise; and optionally (O3) cross-window structure
such as recurring regimes. The sole assumption linking the drifted
conditional to the model is a total-variation \emph{budget}
$\beta:\mathcal X\to[0,1]$,
$p_W(\cdot\mid x)\in B_\beta(x)
 =\{q:\ D_{\mathrm{TV}}(q,\,p_0(\cdot\mid x))\le\beta(x)\}$,
with no coupling across covariates and no assumed drift mechanism
(the \emph{agnostic class}; Part~II examines the consequences of this
choice).

\noindent\emph{Data Sets.}
The empirical analysis uses the eight industrial temporal streams of TabReD
\citep{rubachev2024tabred} ($n_{\mathrm{test}}=4.6$k--$60$k points each; five
regression, three classification, the latter with positive rates from
$2.2\%$ to $36\%$), extended in Section~\ref{sec:map} by fifteen shift
benchmarks across census, fraud, network-security, medical, credit, and bike-sharing
domains. Deployed gains are taken from a companion system study by four of the
present authors \citep{wang2026beyond}. That system is a frozen
in-context tabular foundation model whose predictions are adjusted by an
embedding-space $k$-nearest-neighbor residual corrector over a past-only
buffer, with recency weighting, global debias, and per-region calibration.
The present paper treats that system as a \emph{measurement instrument},
not as a contribution of its own. Part~I uses its recorded predictions for a
retrospective channel audit. The measurements below are computed on
eight per-stream slices of that deployment, one per stream, each
holding the frozen predictions, embeddings, and residuals in temporal
order; we call these the \emph{probe streams}, and a corrector applied to
them a \emph{probe corrector}.

\subsection{Standing Assumptions}
\label{sec:assumptions}

The boundary theorems of Part~II quote their hypotheses by name. We collect
the names here, one line each, so that the semi-formal statements of
Section~\ref{sec:wallA} can be read without leaving
the main text. In brief, the multi-window setting posits windows
arriving in sequence, occasional windows carrying small anchor sets
labeled missing completely at random (MCAR), a stationary drift response tying the budget to an
observable novelty index, and regularity for the rest. The formal
versions (with the measurability conventions,
the failure modes each assumption excludes, and the counterexamples that
fix their scope) are given in the program contract
(Online Appendix~1, ``the formal program'') and in the
standing-assumption blocks of Appendix~\ref{app:B} and
Appendices~A and~C of Online Appendix~1. The single-window results (Sections~\ref{sec:idset}%
--\ref{sec:irred}) use the regularity conditions (S1)--(S7) together with
the mechanism prior (B). The conditions require a Polish covariate space,
Borel kernels and budget, i.i.d.\ unlabeled draws, and randomized Borel
estimators; prior (B) states that the budget is the \emph{only} constraint
on the drifted conditional, with no cross-covariate coupling. The multi-window learning-the-wall results
(Section~\ref{sec:dichotomy}) additionally assume, writing $r_j$ for window
$j$'s scalar novelty (density-ratio) index, $\bar b_j=|\bar\eta_j-\eta_0|$,
with $\bar\eta_j=\mathbb E[\eta_j\mid\mathcal F_\phi]$ (Section~\ref{sec:wallB}),
for its $\phi$-reachable drift magnitude, and $\J$ for the covered ratio
interval (the subscript $J$ in $\bar\eta_J$ below indexes the new window
$K{+}1$, Section~\ref{sec:dichotomy}):

\begin{itemize}[nosep,leftmargin=3.2em]
\item[(A0)] \emph{Windows and anchors.} $K$ windows; window $j$ carries
  $m_j$ unlabeled i.i.d.\ draws from $p_j$ and $k_j\ge k_{\min}$ labeled
  anchors drawn i.i.d.\ from $p_j$ (MCAR anchor selection), $y\mid x\sim\mathrm{Bern}(\eta_j(x))$,
  independent across windows and anchors.
\item[(A1)] \emph{Stationarity of the response curve.} A single
  $\Psi\in\Pmon$ with $\E[\bar b_j(x)\mid r_j(x)=r]=\Psi(r)$ for every
  window.
\item[(A2)] \emph{Support/relevance (rank condition).} The anchor-weighted
  pooled novelty law $\barmu$ has Lebesgue density $\ge\mu_{\min}>0$ on
  $\J$.
\item[(A3)] \emph{Smoothness and intrinsic dimension.} Each $\eta_j$ is
  $(L,\as)$-H\"older in $\phi$-coordinates on $\supp(p_j)$, of doubling
  intrinsic dimension $\le d$ (the model parameter that the deduplicated
  empirical estimate $d_0$ of Section~\ref{sec:map} instantiates), with
  densities bounded above and below on the anchor region.
\item[(A4)] \emph{Shape.} $\Psi\in\PmonL$: nondecreasing,
  $L_\Psi$-Lipschitz, range $\le1$, and the estimator knows an upper bound
  on $L_\Psi$.
\item[(A5)] \emph{Ratio error.} Label-free ratio estimates with
  $\sup_R|\hat r_j-r_j|\le\epsr$ on an event of probability $\ge1-\delta_r$.
\item[(A5$'$)] \emph{Exogeneity of the ratio error.} The drift noise
  $\xi=\bar b-\Psi(r)$ satisfies $\E[\xi\mid r,\hat r]=0$. This is not implied by (A5);
  Remark~C.13 of Online Appendix~1 gives an errors-in-variables counterexample.
\item[(A6)] \emph{Novelty coverage for transfer.}
  $d\mu_{K+1}/d\barmu\le\Gcov$ on $\J$, where $\mu_{K+1}=\nu$ is the
  novelty law of the new window; the unprecedented-novelty mass
  $\piout=\mu_{K+1}(\J^c)$ is priced separately.
\item[(A7)] \emph{Density upper bounds.} $\mu_{K+1}$ has Lebesgue density
  $\le f_{\max}$ on an $\epsr$-neighborhood of $\partial\J$, plus
  comparability constants for the stationarity test.
\item[(T)] \emph{Transversality.} $\Pbar(\bar\eta_J(X)=\eta_0(X))=0$: the
  sign-ambiguous crossing set is null under the pooled anchor law.
\item[(Mgn$_\gamma$)] \emph{Margin.} $\Pbar(0<|\bar\eta_J-\eta_0|\le t)\lesssim t^\gamma$ for
  some $\gamma>0$, so drift mass thins polynomially near the crossing.
\end{itemize}
\noindent Of these, the MCAR selection in (A0) is the assumption most
easily violated in practice; (A1) is anchor-falsifiable, unlike (B); and
constant within-window novelty profiles reduce (A2) to identification at
atoms only.

\noindent\emph{Currencies and Measurement Conventions.}
Regression is scored in RMSE, classification deployments in AUC. Loss-%
denominated quantities (squared error, log-loss) and ranking-denominated
ones are \emph{not} interchangeable on these streams, and we track the
currency of every claim (Sections~\ref{sec:channels}, \ref{sec:gates}).
The boundary theory of Part~II is stated in the proper-loss,
conditional-mean currency throughout: identified sets, walls, and the
margin-boundary rates price conditional-mean decisions. Ranking (AUC) sits outside
that scope, and the paper treats it as a measured
counterexample rather than a covered case. Channel C4's gains (Section~\ref{sec:channels}) are
invisible to squared-error ceilings, \textsc{homecredit-default} shows
log-loss and AUC moving in opposite directions, and the
local-correction screening rule's ranking-currency false freeze on
\textsc{ecom-offers} is priced to this scope
(Section~\ref{sec:certificate}). A ranking-currency analogue of the theory
and of the screening rule is left open (Section~\ref{sec:conclusion}).
Unless stated otherwise, retrieval oracles and signal ceilings are
\emph{time-respecting}: they use either past-only data or strict
out-of-time splits (Section~\ref{sec:leak}). Noise floors are estimated
from temporally decoupled pairs (Section~\ref{sec:twins}), and intrinsic
dimensions are estimated after exact-duplicate removal
(Section~\ref{sec:dedup}). Every reported number was recomputed by a
second, independently written implementation; the two preliminary
readings that this recomputation overturned are noted where they arose
(Sections~\ref{sec:dedup} and~\ref{sec:map}).

\noindent\emph{Notation.}
The recurring symbols are collected below, with their defining locations.

\begin{center}\footnotesize
\setlength{\tabcolsep}{4pt}%
\begin{tabular}{@{}l>{\raggedright\arraybackslash}p{0.395\linewidth}l>{\raggedright\arraybackslash}p{0.315\linewidth}@{}}
\toprule
$\phi,\ h_0$ & frozen embedding and readout (Section~\ref{sec:setup}) &
  $\gamma$ & margin exponent, (Mgn$_\gamma$) \\
$\eta_0(x)$ & source conditional mean $p_0(1\mid x)$, $p_0=h_0\circ\phi$ &
  $\gamma^\star$ & threshold $d_0/(2\alpha_s)$ (Section~\ref{sec:dichotomy}) \\
$p_W,\ \eta_W$ & window covariate law (Section~\ref{sec:setup});
  $\mathbb E_{p_W}[y\mid x]$ (Section~\ref{sec:wallB}) &
  $d_0$ & deduplicated intrinsic dimension of $\phi$
  (Section~\ref{sec:dedup}) \\
$\beta(x)$ & drift budget, a prior (Section~\ref{sec:idset}) &
  $\alpha_s$ & drift-field H\"older smoothness, (A3) \\
$B_\beta(x)$ & TV-ball of radius $\beta(x)$ around $p_0(\cdot\mid x)$ &
  $a$ & rate exponent $\alpha_s/(2\alpha_s+d_0)$ \\
$I(\beta)$ & identified set: selections through $B_\beta$ &
  $K,\ k,\ N$ & windows, anchors per window, $N=Kk$
  (Section~\ref{sec:dichotomy}) \\
$D_\Delta(\beta;R)$ & its $\Delta$-diameter: the wall (Section~\ref{sec:idset}) &
  $R$ & region: a measurable subset of $\mathcal X$ \\
$w_\beta$ & post-clip width $\min(1,\eta_0{+}\beta){-}\max(0,\eta_0{-}\beta)$ &
  $\Psi(r)$ & drift response $\mathbb E[\,|\bar\eta_J-\eta_0|\mid R=r\,]$ \\
$c(\beta;R)$ & clip term, the subtractive label-free content &
  $w$ & novelty representer $d\nu/d\bar\mu$ \\
$\mathrm{WallA}$ & identification wall on $\phi$-measurable acts (Section~\ref{sec:wallB}) \\
$\mathrm{WallB}$, $\mathrm{WallB}_\rho$ & $\mathbb E[\operatorname{Var}(\eta_W\mid\mathcal F_\phi)]$; deployed at
  stated resolution $\rho$ (Section~\ref{sec:wallB}, Appendix~\ref{app:J}) &
  $\rho_\beta(x)$ & price of freeze,
  $2\max(0,\beta-|\eta_0-\tfrac12|)$ (Section~\ref{sec:idset}) \\
\bottomrule
\end{tabular}
\end{center}

Three entries require additional qualification. The bound
$\alpha_s\le1$ is an assumption about the roughness of the drift field, not
a derived property; Section~\ref{sec:map} and Appendix~\ref{app:J} discuss
what happens when it fails. Per-data-set values of $d_0$ are tabulated in
Section~\ref{sec:map}; the length of a test stream is written
$n_{\mathrm{test}}$, which is not the anchor count $N=Kk$. The letter $R$ is
overloaded: it denotes a region
throughout, but in Section~\ref{sec:dichotomy}, Appendix~\ref{app:H}, and
Appendices~G and~K of Online Appendix~1 it is also the scalar novelty index with law $\nu$, in
Section~\ref{sec:dichotomy} the symbols $R_2,R_{\mathrm{prod}},\dots$ are
remainder terms, and $R^2$ is always the coefficient of determination. The
letters $K$, $k$, $m$, and $D$ are also overloaded: $K$ is the window count
throughout (Appendix~\ref{app:B} writes the class count as $|\mathcal Y|$); $k$ is the anchor count
per window except as the $k$-NN neighbor count ($k=20$) of Parts~I
and~III; $m$ denotes unlabeled draws, the target sample size, and the
finest retrieval scale ($m=5$) of Section~\ref{sec:attribution-procedure};
and $D$ is the ambient width $192$, the wall $D_\Delta$, and the polynomial
degree of Section~\ref{sec:dichotomy}. Wall~A denotes the identification
wall; its raw form is $D_\Delta$ and its deployable form $\mathrm{WallA}$,
which coincide on fiber-unions (Theorem~\ref{thm:borrowed}). Each use is
disambiguated where it occurs.

\noindent\emph{Reproducibility.}
All data sets are publicly documented research benchmarks; access-controlled
sources are used under their applicable data-use terms. They comprise the TabReD streams
\citep{rubachev2024tabred} and
the fifteen shift benchmarks of Section~\ref{sec:map} (census, fraud,
network-security, medical, credit, and bike-sharing sources). Appendix~\ref{app:J}
specifies the full measurement protocol (probe and oracle
configurations, noise-floor estimation, deduplication, and the
threshold computation)
at a level intended to permit independent re-implementation. The
machine-verification programs behind the appendix proofs (exact rational
enumeration and an independent cross-check written without shared code)
are part of the reproduction deposit together with their outputs, and
Section~S4 of Online Appendix~1 indexes them.
\providecommand{\ArtifactDepositNote}{will be deposited in a public
archive; the DOI reference will be added in a revised version of this
manuscript}%
The measurement code and every archived input behind the primary empirical
results \ArtifactDepositNote. The deposit holds the eight
per-stream probe slices (frozen
predictions, embeddings, residuals), the stored embeddings behind the
deduplication checks, the verification programs, and a manifest with
hashes. The historical system's per-sample outputs (a tuned seven-prior
ensemble, Section~\ref{sec:weld}) belong to the companion study and are not redistributed; its values are retained only
as a secondary comparison. Section~\ref{sec:weld}'s single-prior
reconstruction, fully covered by the deposit, is the primary reproducible result.

%% file: sec_anatomy.tex
\section{The Anatomy of Adaptation Gains}
\label{sec:anatomy}

Prequential adaptation helps on most of the evaluated temporal tabular
streams and harms on one, but existing evidence offers little guidance on
when either outcome should occur. We ask which components of the observed
gains can be attributed to recurring mechanisms. The gains do not generally
amount to recovering the drifted conditional $p_t(y\mid x)$. Without labels,
such recovery is impossible within the agnostic drift class (Part~II); with
revealed labels, it is rate-limited by the embedding's intrinsic dimension
(measured $d_0\approx3$--$18$; channel C3 below). We therefore treat the
observed improvements as diagnostic channels, each with a theoretical status
and a measured value on the evaluated streams.

Throughout, gains are measured on the eight TabReD streams under the
prequential protocol of Section~\ref{sec:setup}; all oracle and ceiling
quantities follow the time-respecting measurement methodology of
Section~\ref{sec:methodology}.

\subsection{Four Gain Mechanisms and Two Collapse Classes}
\label{sec:channels}

We organize the corrector's observed improvement into four measured
mechanisms (C1--C4) and one pair of identification-collapse classes
(jointly labeled C5). For each mechanism, we specify why it is available, how its
time-respecting ceiling is measured, its value on the eight streams
(Table~\ref{tab:channels}), and the conditions under which it is absent or
harmful. The list is
designed for mechanism attribution in the evaluated corrector, not as a
mutually exclusive decomposition of every possible adaptation algorithm.
Its levels differ deliberately: C1--C4 are measured mechanisms of the
evaluated corrector, while C5 sits one level up, as a pair of identifiable
drift classes (mechanism priors that collapse the identified set of
Part~II) with no measured value of its own.
Theorem~\ref{thm:irred} (Part~II) separately shows that the conditional law
$p_t(y\mid x)$ is unidentified from unlabeled data beyond its ball
constraint, while Theorem~\ref{thm:covshift} gives two canonical mechanism
classes under which that identified set collapses.

\begin{table}[t]
\centering\small
\setlength{\tabcolsep}{4pt}%
\begin{tabular}{llrlr}
\toprule
stream & task & reconstructed gain & dominant channel(s) & signal ceiling \\
\midrule
\textsc{sberbank-housing} & reg & $+3.80\%$ RMSE & C2 freshness & $\le 0$ \\
\textsc{weather} & reg & $+3.94\%$ RMSE & C3 local & $+0.025$ \\
\textsc{cooking-time} & reg & $+0.83\%$ RMSE & C1 debias & $\le 0$ \\
\textsc{delivery-eta} & reg & $+0.66\%$ RMSE & C1 debias ($93\%$ on the probe) & $\le 0$ \\
\textsc{maps-routing} & reg & $+0.59\%$ RMSE & C1/partial-scale C3 & $\le 0$ \\
\textsc{ecom-offers} & cls & $+0.86$pp AUC & C4 ranking & $\le 0$ (MSE) \\
\textsc{homesite-insurance} & cls & $+0.74$pp AUC & C4 ranking & $+0.016$ (MSE) \\
\textsc{homecredit-default} & cls & $\mathbf{-2.28}$\textbf{pp AUC} & harm case (Section~\ref{sec:anatomy-numbers}) & $\le 0$ \\
\bottomrule
\end{tabular}
\caption{The anatomy on eight TabReD streams. The gain column reports the
reproducible single-prior reconstruction over the frozen model; the dominant
channel column records the counterfactual attribution of the historical
seven-prior deployed system (the \textsc{delivery-eta} percentage is the
probe-corrector replay of Section~\ref{sec:attribution}). The final column is the time-respecting
(leak-free) squared-error signal ceiling. Ranking
gains on classification streams flow through channel C4, which squared-error
ceilings cannot register (Section~\ref{sec:channels}). Ceilings are fractions of
residual variance under the strict out-of-time protocol of Section~\ref{sec:leak}.
Table~\ref{tab:bootstrap} gives bootstrap intervals and the
non-redistributed seven-prior system's figures as a secondary historical
comparison.}
\label{tab:channels}
\end{table}

\noindent\emph{C1: Global and Slow-Timescale Debias.}
The simplest channel tracks the slowly-moving mean of the residual
$r_t = \hat y_t - y_t$ and subtracts it. The mechanism asks nothing of the
representation. Whatever component of the drift is shared across the
covariate space (a common shift in the target's level) surfaces in
the prequential residual stream as a slowly-moving mean, and a running
average over strictly-past residuals estimates that mean with essentially no
variance cost. The channel is trivially identifiable from prequential
labels because it uses them only through their first moment: no retrieval,
no locality, no model of \emph{where} the drift acts.

Its contribution is measured by the counterfactual replay of
Section~\ref{sec:attribution}: repeat the identical prequential protocol
with the corrector reduced to its running-mean term alone, and compare the
debias-only gain with the full system's. On one stream, this simple channel accounts for most of the gain. On \textsc{delivery-eta}, the running-mean-only
corrector reproduces $93\%$ of the probe corrector's $+0.66\%$ RMSE
improvement, and the retrieval machinery contributes the small remainder.
Refining the offset yields little more (on \textsc{cooking-time},
region-resolved offsets add at most $+0.07$ percentage points over a single
global offset).

C1 is inexpensive and does not appear as a distinct architectural component,
so its contribution is easy to misattribute to the full method. Any evaluation of a sophisticated
adaptation method that does not report the debias-only baseline conflates
this channel with genuine local structure (Section~\ref{sec:attribution}).
Conversely, the channel is small where the drift is not a level shift.
On \textsc{sberbank-housing} and \textsc{weather} the dominant channels are
instead freshness and local structure (Table~\ref{tab:channels}).

\noindent\emph{C2: Freshness (Drift Tracking).}
Under genuine distribution drift, \emph{when} a labeled point was observed
matters independently of how many are available. The residual field the
corrector must estimate is itself moving, so an old label reports on a
field that no longer holds. The channel is isolated by a size-matched
design on \textsc{sberbank-housing}: the same corrector is run with buffer
size held fixed and only the buffer's \emph{composition} varied (the
most recent labels, a random past-only subset, or the oldest labels of the
stream). All three arms draw from the prequential past, so no future label
(Section~\ref{sec:leak}) can inflate any arm, and the spread across arms
measures recency and nothing else. The arms deliver $+4.9\%$, $+4.1\%$,
and $-1.1\%$ RMSE respectively. Stale supervision is actively
\emph{harmful}, not merely useless.

The matched-buffer comparison isolates freshness as a measurable channel:
the freshest and stalest arms differ by about six points. A corrector whose
buffer is never refreshed also inherits the stale-label penalty as that
buffer ages. In this stream,
tracking recency prevents that degradation; the experiment does not
establish that online adaptation is necessary for every frozen predictor
or deployment. Note also what C2 does \emph{not} deliver: it
consumes labels, the one currency the wall of Part~II respects.
Tracking recency from revealed labels is possible precisely because it
never requires certifying, from unlabeled data, how far the world has
moved (Theorem~\ref{thm:irred}).

\noindent\emph{C3: Local Re-Estimation in Representation Space.}
Prediction errors of the frozen model are \emph{spatially} organized in its
own embedding. On \textsc{weather}, the mean absolute residual difference
between embedding nearest neighbors is $78\%$ of that between random pairs
(neighbors share errors), while the \emph{temporal} autocorrelation of the
residual stream is $0.025$ at lag one. Errors live in space, not in
time. The mechanism is a property of the frozen pair $(\phi,h_0)$: points
the embedding places together err together, so a corrector that indexes
its memory by embedding position (a past-only $k$-nearest-neighbor
regression of residuals over past labeled points) can re-estimate the
frozen model's error locally. Deployed, this delivers $+3.9\%$ RMSE on
\textsc{weather}.

This channel's ceiling is the quantity most easily overstated, and its
measurement is where the time-respecting methodology is essential. The natural
oracle (batch leave-one-out $k$-NN regression of residuals on
embeddings) draws $49.5$--$53.2\%$ of its neighbors from the future of the
query point (Section~\ref{sec:leak}). Every ceiling we report is instead
time-respecting, computed with past-only growing buffers or strict out-of-time
splits. On \textsc{weather} the correction is first-order: the batch
ceiling of $0.0995$ of residual variance falls to $0.067$ with the causally
growing buffer and $0.0245$ under the strict out-of-time split (both
defined in Section~\ref{sec:leak}), and the deployed corrector (RMSE
$1.543$, a $+3.9\%$ reduction against its own frozen arm) already sits
near what even the leaking batch
oracle reaches on the probe stream ($1.513$ against a frozen $1.615$;
Section~\ref{sec:weld}). The channel is
identifiable but rate-limited. Its ceiling is governed by the intrinsic
dimension of the embedding (measured $d_0 \approx 3$--$18$ across streams,
Table~\ref{tab:J-map}, robust to the deduplication check of
Section~\ref{sec:dedup}), and Part~II shows the
resulting nonparametric rates are the binding constraint. The measured
time-respecting ceilings of Section~\ref{sec:weld} are small \emph{because} $d_0$
is not small.

The channel is absent in two distinct ways. On the classification streams
its squared-error signal is zero or negative on two of the three and
marginal on the third (Table~\ref{tab:channels};
\textsc{homesite-insurance}'s $+0.016$ falls to $0.006$ under the
temporal-gap stress test of Section~\ref{sec:leak}). The residual variance there is almost entirely
Bernoulli label noise. Prediction-based floors run from $0.85$ to
$0.88$ of residual variance across the three streams, and temporally
decoupled variogram readings from $0.73$ to $1.02$, leaving no
spatially organized residual for retrieval to recover, and what is
correctable on those streams lives in the ranking currency of C4. And on
\textsc{homecredit-default} the channel turns actively harmful: at a
$2.2\%$ positive rate, local residual smoothing degenerates into noisy
base-rate estimation (Section~\ref{sec:anatomy-numbers}).

\noindent\emph{C4: Ranking and Recalibration (Classification).}
On classification streams a fourth channel appears that squared-error
analysis cannot see. The mechanism is a currency phenomenon: the
classification residual variance is almost entirely Bernoulli label noise
(the noise-floor fractions already measured under C3), so in squared error there
is nothing to correct. The time-respecting signal ceilings of
Table~\ref{tab:channels} are zero, negative, or marginal on these streams. But the
frozen model's \emph{scores} can still be systematically miscalibrated
region by region, and offsets applied in the natural parameter move
ranking and calibration in ways mean-squared-error accounting cannot
register.

The measurement is guarded by an out-of-time design: region-resolved
calibration offsets are fit strictly before a temporal cut and evaluated
strictly after it, and the result is replicated across four temporal cut
points (the first $40$, $50$, $60$, and $70\%$ of the stream) and six
$k$-means codebook seeds, at the best of three codebook sizes
(Appendix~\ref{app:J}). Measured this way, the offsets improve AUC by
$+1.3$ to $+1.9$ points on \textsc{homesite-insurance} and $+1.3$ to
$+1.8$ points on \textsc{ecom-offers}, while the deployed full systems on
the same streams realize $+0.7$ and $+0.9$ points (Table~\ref{tab:bootstrap},
historical column; the reconstruction of Table~\ref{tab:channels} gives
$+0.74$ and $+0.86$). These gains coexist with
squared-error ceilings that are zero, negative, or marginal: a loss-denominated
audit would declare the streams uncorrectable while ranking metrics
improve.

The channel's failure mode is the harm case of
Section~\ref{sec:anatomy-numbers}: on \textsc{homecredit-default},
likelihood-optimal regional offsets \emph{anti-align} with ranking
in-sample, and no out-of-time offset configuration is positive at any
granularity (the numbers are given there). The currency of measurement
is therefore not an incidental detail. Section~\ref{sec:gates} shows the
same mismatch defeats loss-based safety monitors.

\noindent\emph{C5: Identifiable Drift Classes (Covariate and Label-Marginal
Response).}
Two canonical drift mechanisms are identifiable \emph{even without
target labels} (Theorem~\ref{thm:covshift}). Both are \emph{collapse
classes}: mechanism
priors under which the identified set of Part~II shrinks to a single
conditional, so the wall is zero and no labels are required.

The first collapse is \emph{prior-driven}. Under pure covariate shift
(the window's joint law is any covariate marginal paired with the unchanged
source conditional) the identified set is the singleton
$\{p_0(\cdot\mid\cdot)\}$, and the wall vanishes for every
diagonal-vanishing discrepancy, every region, and every budget, already at
$n=0$. No unlabeled sample is even needed, because the collapse is the
content of the mechanism assumption itself
(\hyperref[a:B13]{Proposition~B.13}). The frozen predictor remains correct
where it matters, and the channel consists of leaving the conditional
alone.

The second is \emph{data-pinned}. Under label-marginal shift with known
class-conditionals that are linearly independent as measures (and a
window marginal consistent with the class), the target prior $\pi_W$ is
uniquely identified from unlabeled covariates by a moment system,
$\pi_W = M^{-1}\mathbb E_{p_W}[T]$ for a bounded statistic $T$ whose
moment matrix $M$ against the class-conditionals is nonsingular
(\hyperref[a:B16]{Theorem~B.16}; this is the moment system of black-box
shift estimation, \citealp{lipton2018bbse,garg2020unified}). The
identified set collapses to the single reweighted conditional, and
$\pi_W$ is estimated from unlabeled draws at the parametric rate
(\hyperref[a:B17]{Proposition~B.17}). Unlike the covariate-shift collapse
this one has genuine data content, but its hypothesis is not without cost: the
known class-conditionals are label-derived side information obtained
outside the window (\hyperref[a:B21]{Remark~B.21}).

A deployed corrector inherits both mechanisms automatically, and they delimit the
``unlabeled-correctable'' portion of any real drift. The channel's failure
mode is itself a theorem: \emph{which} mechanism class a given window's
drift belongs to is unfalsifiable from unlabeled data. When the window
marginal is consistent with the label-shift class, worlds drawn from the
agnostic class, from covariate shift, and from label shift all induce the
same unlabeled law at every sample size, so every label-free test between
mechanism classes has power equal to size
(\hyperref[a:B20]{Proposition~B.20}). Invoking C5 is a modeling
commitment, not a measurement. The collapse is real when the mechanism
prior is right, and whether it is right cannot be checked from unlabeled
data.

\subsection{The Eight-Stream Decomposition and the Harm Case}
\label{sec:anatomy-numbers}

Table~\ref{tab:channels} summarizes the per-stream decomposition. The results show three recurring patterns.

\emph{Gain concentration.} Every positive
stream's improvement ($+0.6\%$ to $+4.6\%$ RMSE and $+0.7$ to $+0.9$ AUC
points on the historical deployed system, Table~\ref{tab:bootstrap};
$+0.59\%$ to $+3.94\%$ and $+0.74$ to $+0.86$ in the reconstruction of
Table~\ref{tab:channels}) is accounted for by C1--C4, with
the mixture varying by stream (\textsc{delivery-eta} is almost purely C1,
\textsc{sberbank-housing} is dominated by C2, \textsc{weather} by C3, and the
classification streams by C4).

\emph{Observed harm.} On \textsc{homecredit-default} (default
rate $2.2\%$), naive residual correction \emph{costs} $2.46$ AUC points
on the deployed system ($-2.28$ in the reproducible reconstruction of
Table~\ref{tab:bootstrap}; the probe corrector's never-lock evaluation
reads $-2.36$, Section~\ref{sec:gates}), and
the damage is mechanistic rather than accidental. Likelihood-optimal
regional offsets anti-align with ranking on this stream (in-sample,
offsets that improve log-loss by $2.1\%$ simultaneously reduce AUC by $1.47$
points), and no out-of-time offset configuration is positive at any
granularity. Rare-positive streams turn local residual smoothing into noisy
base-rate estimation. The harm case motivates the gate analysis in Section~\ref{sec:gates}.

\emph{Proximity to the descriptive ceiling.} The time-respecting
(leak-free) signal ceilings of Section~\ref{sec:weld} are small (e.g.\
$2.5\%$ of residual variance on \textsc{weather}, less elsewhere), and the
deployed systems capture the bulk of them. The measured room for local residual correction is small for this
representation and protocol, consistent with the labeled-history boundary
of Part~II, whose nonparametric rates depend on $d_0$
(Section~\ref{sec:dichotomy}). We therefore compare adaptation methods
against the time-respecting ceiling rather than only against each other.

\section{Measurement Methodology for Streaming Adaptation}
\label{sec:methodology}

The preceding measurements depend on the evaluation protocol. Several common
practices in streaming adaptation introduce optimistic bias; on our own
data, two of them reverse a conclusion. This section documents five corrections. Each is illustrated with the measured size of the artifact it
removes. Table~\ref{tab:biases} summarizes the four biases and their
corrections, and Figure~\ref{fig:artifacts} plots the per-stream
measurements behind three of them. The fifth item
(Section~\ref{sec:attribution}) is a reporting correction rather than a
numerical one.

\begin{table}[t]
\centering\small
\begin{tabular}{p{0.26\linewidth}p{0.36\linewidth}p{0.28\linewidth}}
\toprule
bias & measured artifact & corrected protocol \\
\midrule
future-neighbor leakage (Section~\ref{sec:leak}) &
$49.5$--$53.2\%$ future neighbors on $8/8$ streams; ceilings inflated up to
$4.5\times$; sign flip on \textsc{sberbank-housing} &
time-respecting oracles: past-only buffers or strict out-of-time splits \\
\addlinespace
temporal-twin noise floors (Section~\ref{sec:twins}) &
\textsc{weather} nugget $0.0009$ vs.\ decoupled floor $0.13$--$0.62$ of
residual variance &
variogram on pairs $\ge 1{,}000$ steps apart \\
\addlinespace
duplicate-deflated $d_0$ (Section~\ref{sec:dedup}) &
\textsc{unsw\_nb15} $0.87\to3.58$; \textsc{acs\_employment}
$1.62\to3.11$; both leave the regular branch &
deduplicate at four decimals before estimating $d_0$ \\
\addlinespace
currency-mismatched gates (Section~\ref{sec:gates}) &
$84\%$ of AUC harm before the log-loss lock; $-11.4$ AUC points in
the first $1\%$ &
monitor the deployment metric itself \\
\bottomrule
\end{tabular}
\caption{Four optimistic measurement biases for streaming adaptation, the
measured size of the artifact each introduces (each magnitude is
measured on the stream that exhibits it), and the corrected protocol.
Every number is discussed in the corresponding subsection. The fifth
correction, counterfactual channel attribution
(Section~\ref{sec:attribution}), guards reporting rather than a number and is
not tabulated.}
\label{tab:biases}
\end{table}

\begin{figure}[t]
\centering
\includegraphics{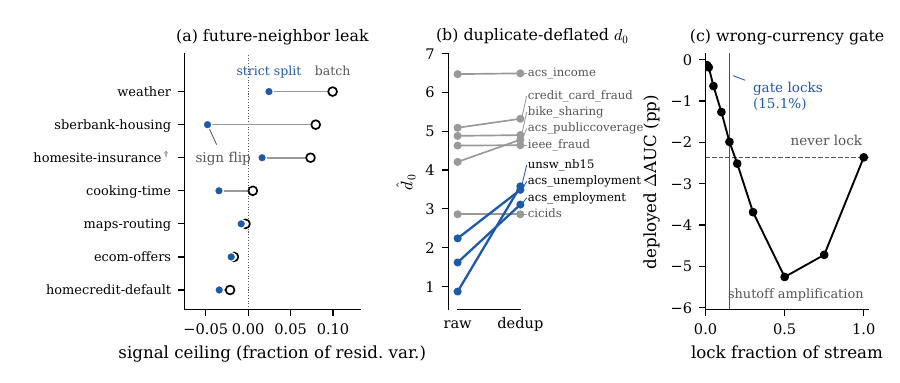}
\caption{The measured size of three artifacts (per-stream intrinsic
dimensions for panel~(b) in Appendix~\ref{app:J}; the remaining values are
part of the reproduction deposit). \emph{(a)}~Squared-error signal ceilings: the batch
leave-one-out oracle (open) against the strict out-of-time split
(filled), as fractions of residual variance. The leak inflates
\textsc{weather} by $4.1\times$ and \textsc{homesite-insurance} by $4.5\times$
and flips the sign on \textsc{sberbank-housing}
(the batch ceiling for \textsc{delivery-eta} was not recorded, so only
its strict-split value is shown). The dagger flags
\textsc{homesite-insurance}, whose strict-split value (filled) collapses
further, to $0.006$, under the temporal-gap stress test of
Section~\ref{sec:leak}.
\emph{(b)}~Intrinsic dimension before and after four-decimal
deduplication for the nine re-measured related benchmarks
(accented: the
three large movers, \textsc{unsw\_nb15}, \textsc{acs\_employment}, and
\textsc{acs\_unemployment}; the TabReD streams move by at most $0.08$ and are
not shown; Section~\ref{sec:dedup}). \emph{(c)}~\textsc{homecredit-default}: deployed AUC change
of the probe corrector as a function of where the log-loss gate locks.
The actual lock ($15.1\%$, vertical line) arrives after $84\%$ of the
never-lock harm (dashed), and a mid-stream shutoff can more than double
that harm (Section~\ref{sec:gates}).}
\label{fig:artifacts}
\end{figure}

\subsection{Future-Neighbor Leakage Inflates Signal Ceilings}
\label{sec:leak}

The natural oracle for ``how much correctable signal exists'' is a batch
leave-one-out $k$-NN regression of residuals on embeddings. The mechanism
of the bias is that batch retrieval is symmetric in time while deployment
is not. On all eight TabReD streams, $49.5$--$53.2\%$ of the oracle's
selected neighbors lie in the \emph{future} of the query point. On a drifting
stream this is not an incidental detail. The residual field itself moves,
so a future neighbor that is close in embedding space carries the locally
drifted residual the oracle is being asked to predict. The batch oracle is
scored with information no deployed corrector can possess, and its
``ceiling'' overstates the correctable signal.

The leak is not benign, and its measured size changes conclusions
(Figure~\ref{fig:artifacts}a). Recomputing ceilings under a time-respecting
protocol deflates the batch numbers by up to a factor of $4.5$ (on
\textsc{homesite-insurance}, $R^2$ falls from $0.0735$ to $0.0164$, and a
temporal-gap stress test, which additionally withholds the $1{,}000$ fitting
rows nearest the cut, lowers even the corrected value to $0.006$)
and \emph{reverses the sign} of the apparent signal on
\textsc{sberbank-housing} (batch $+0.080$, time-respecting $-0.048$), where
the entire apparent squared-error signal was the leak.

The corrected procedure comes in two grades. The deployment-faithful grade
restricts neighbors to the strictly-past prefix with a causally growing
buffer, exactly as a deployed corrector would retrieve; the conservative
grade is a strict out-of-time split: fit on the first half of the
stream, evaluate on the second. On \textsc{weather}, the batch value of
$0.0995$ becomes $0.067$ with the causally growing buffer and $0.0245$
under the strict split. Every ceiling
in this paper is time-respecting, and we recommend the field adopt the same convention
for any retrieval-based oracle on ordered data.

\subsection{Temporal Twins Deflate Noise Floors}
\label{sec:twins}

The irreducible-noise floor of a stream is naturally estimated from the
variogram of residuals at vanishing embedding distance. Pairs that the
embedding places arbitrarily close should disagree only through
irreducible noise, so the variogram's intercept (the \emph{nugget}) reads
off the floor. The reading is valid only if near pairs carry
\emph{independent} noise realizations. On temporally ordered tabular data
they do not: the closest embedding pairs are \emph{temporal twins},
near-duplicate records of the same physical event (on \textsc{weather},
median stream separation $12$ steps) that share their noise realization.
A shared realization contributes nothing to pair disagreement, so
the nugget reads duplication as correctability.

The size of the artifact is extreme. On \textsc{weather} the naive nugget
is $0.0009$ (suggesting essentially all residual variance is
correctable in principle), while the temporally decoupled estimate puts
the floor at $0.13$--$0.62$ of residual variance, over two orders of
magnitude higher.

The corrected procedure restricts the variogram to pairs at least
$1{,}000$ steps apart in the stream. On \textsc{weather} this leaves
\emph{no support} at small embedding distances (the twins were the only
close pairs), so the decoupled floor is an extrapolation, and we report it
as the range $0.13$--$0.62$ rather than a point. Two rules follow: noise
floors on ordered data must be computed from temporally decoupled pairs;
and a near-zero nugget should be treated as a duplication symptom, not a
promise of correctable signal. The artifact is the streaming sibling of
the duplicate-deflated intrinsic dimension of Section~\ref{sec:dedup},
because both are driven by (near-)duplicate records. The measurement is
scoped: on the three classification streams, where the residual field is
a bounded Bernoulli residual rather than a smooth field, decoupling does
not systematically raise the floor (the readings move from $0.98$ to
$0.73$, from $1.05$ to $1.02$, and from $0.97$ to $0.98$; these readings are reported
for contrast only, since the decision-rule evaluation of
Section~\ref{sec:certificate} uses the Bernoulli floor $\mathbb E[p(1-p)]$ on
the classification streams). We claim the
twin artifact for the regression streams we measured it on, not as a
general law.

\subsection{Duplicates Deflate Intrinsic Dimension}
\label{sec:dedup}

The channel-C3 ceiling and the boundary threshold of Part~II both depend
on the intrinsic dimension $d_0$ of the frozen embedding. The mechanism of
the bias is elementary. Duplicate rows are endemic in tabular data
(repeated transactions, network flows, census microdata), and they place
many points at first-neighbor distance $r_1\approx 0$, which breaks the
standard estimators (TwoNN, \citealp{facco2017twonn}; the maximum-likelihood estimator,
\citealp{levina2004mle}) and biases $d_0$ sharply \emph{downward}. The
direction of the bias is what makes it dangerous. A spuriously low $d_0$
manufactures exactly the ``low-dimensional, inexpensively adaptable'' reading
that the threshold $\gamma^\star=d_0/(2\alpha_s)$ of Part~II rewards.

The bias is large enough to change conclusions. \textsc{unsw\_nb15}, a
network-intrusion stream with $41\%$ duplicate flows, reads $d_0=0.87$ raw
and $3.58$ after removing rows identical to four decimals. The raw
value is not even reproducible across estimators (TwoNN gives $2.93$ on
the same raw data), an instability that is itself part of the diagnosis.
\textsc{acs\_employment}, a census stream with $16\%$ of rows duplicated to four decimals,
moves from $1.62$ to $3.11$ (Figure~\ref{fig:artifacts}b). On the raw values, both data sets appear
to fall into the regular ($\sqrt N$) branch of Part~II. After deduplication,
neither remains in that branch under the sensitivity inputs
$(\alpha_s,\gamma)=(1,1)$ (Appendix~\ref{app:J}); this correction does not
constitute a measured phase call for the related benchmarks.

The corrected procedure removes rows identical to four decimal places,
recomputes every $d_0$ estimate on the unique subset (for the essentially
duplicate-free TabReD streams the raw and deduplicated readings agree to
$0.08$, Appendix~\ref{app:J}), and cross-checks
estimators on borderline cases. The correction is inexpensive and, on clean
data, inert (the eight TabReD streams are $99.8$--$100\%$ duplicate-free
and the six high-$d_0$ related benchmarks $99.97$--$100\%$; the $d_0$
estimates move by at most $0.08$ and $0.10$ respectively). It matters exactly on
the data sets that look most favorable.

\subsection{Architectural Attribution Miscredits Gains}
\label{sec:attribution}

Which channel produced a system's gain cannot be read off its architecture
or its learned weights. The artifact here is interpretive rather than
numerical: an adaptation system's gain is naturally credited to its most
sophisticated component, because that component is what the paper is
about. The \textsc{delivery-eta} decomposition of
Section~\ref{sec:channels} is the cautionary measurement. On that stream, $93\%$ of the probe corrector's gain is reproduced by a running mean, and nothing in the
system's architecture gives any hint that its retrieval machinery is doing
almost nothing.

The corrected procedure is \emph{leave-one-channel-out counterfactual
replay} on the same stream: repeat the identical prequential protocol with
one channel disabled, and attribute to each channel the difference its
removal makes. Disabling a channel means freezing the debias term at
zero (C1 off), removing the buffer's recency weighting (C2 off), or
replacing the local corrector by the global one (C3 off). The
C1--C3 entries of the dominant-channel column of Table~\ref{tab:channels}
were produced this way (the C4 entries come from the out-of-time
calibration design of Section~\ref{sec:channels}), and the
\textsc{cooking-time} comparison of region-resolved against global offsets
($+0.07$~percentage points at most) is the same replay at finer granularity. We
regard channel-resolved
attribution tables as the minimum reporting standard for adaptation
papers, for the same reason ablations are standard elsewhere: without
them, trivial channels masquerade as method contributions.

\subsection{Currency-Mismatched Gates Miss Deployment Harm}
\label{sec:gates}

Because adaptation can harm (Section~\ref{sec:anatomy-numbers}), deployed
correctors need an online no-harm monitor. A safety gate is a measurement
instrument too, and it can be biased in the same optimistic direction as
an oracle. The natural gates are \emph{loss}-denominated, because that is
the currency in which anytime-valid sequential tests are available
\citep{ramdas2023gametheoretic}, while
the deployment currency on classification streams is ranking. A statistically valid anytime monitor in the \emph{wrong currency} can
therefore miss deployment harm.

On \textsc{homecredit-default} we ran an e-process (test-by-betting,
\citealp{shafer2021betting}) monitor on per-round \emph{log-loss} regret of the corrector against the
frozen model. The monitor is sound (its false-alarm rate under
permutation nulls matches its nominal level), and it does eventually
fire, locking the corrector out at $15.1\%$ of the stream
(Figure~\ref{fig:artifacts}c). But $84\%$ of the \emph{AUC} harm
is incurred before the lock: over the first $1\%$ of the stream AUC falls by
$11.4$ points while the monitored log-loss barely moves, and from $2$ to
$5\%$ of the stream the correction \emph{improves} log-loss outright
while the ranking harm keeps accruing. This is the anti-alignment of
Section~\ref{sec:anatomy-numbers} seen from the monitor's side. The gate
analysis uses the probe corrector (configuration in
Appendix~\ref{app:J}); its never-lock harm of $-2.36$ points reflects the same
mechanism as the deployed system's $-2.46$. It is measured by the original
probe-corrector code inside the gate experiment, whereas the $-2.28$ of
Table~\ref{tab:bootstrap} is the deposit's reimplementation of the same
operator scored on the $55{,}901$ points after the $100$-point buffer; the
two agree well within the bootstrap interval of Table~\ref{tab:bootstrap}. On this
stream, offsets that help the likelihood hurt the ranking, so a gate
watching the likelihood is reassured precisely while the deployment
metric is being damaged.

The failure is structural in two senses. First, the information needed for
an early stop does not exist in the monitored currency (log-loss is
flat, then \emph{improving}, while the worst of the harm accrues), so no
threshold tuning, no faster alarm, no better betting scheme can repair the
monitor. Second, the gate's own action is not free in the deployment
currency: mid-stream stopping introduces a score-scale discontinuity that
can add further ranking harm. Locking at half-stream more than doubles
the never-lock harm ($-5.26$ against $-2.36$ AUC points on the probe stream;
Figure~\ref{fig:artifacts}c). The evidence supports a simple operational rule: \emph{monitor the metric
you deploy on}. Constructing anytime-valid monitors for ranking currencies is,
to our knowledge, open: the game-theoretic testing framework
\citep{shafer2021betting,ramdas2023gametheoretic} supplies the loss-denominated
instruments used above, and we are not aware of a ranking-denominated
counterpart.

\subsection{Ceilings versus Realized Gains}
\label{sec:weld}

\begin{figure}[t]
\centering
\includegraphics{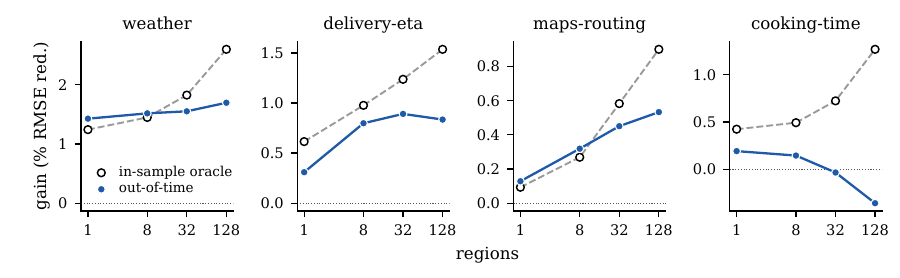}
\caption{In-sample versus out-of-time gain as adaptation capacity grows: the
regional-offset probe ($k$-means codebooks of $1$, $8$, $32$, and
$128$ regions; protocol in Appendix~\ref{app:J}) on the four
regression streams with adequate sample size for the split. Open, dashed: offsets
fit with hindsight on the full stream. Filled: fit on the first half and
evaluated on the second (a single out-of-time split, no error bars). The
in-sample gain grows monotonically with codebook size on every stream
while the
out-of-time gain flattens, saturates, or reverses sign: added capacity
yields apparent gain, not deployable gain. At small codebook sizes the arms can
cross, as
on \textsc{weather}, since they score different halves of a drifting
stream. \textsc{sberbank-housing} (the shortest stream) is excluded because its
evaluation half is too short to support the codebook sweep; the classification streams use the out-of-time
protocol of Section~\ref{sec:channels}.}
\label{fig:kf1}
\end{figure}

With the corrected protocol in hand, the regression-stream audit finds
small time-respecting signal ceilings (between $0$ and $2.5\%$ of residual
variance in squared error); classification has a separate ranking-currency residual signal in channel C4. These quantities are descriptive point estimates,
not confidence bounds or universal ceilings.

\noindent\emph{Reproducibility and Uncertainty of the Headline Gains.}
Table~\ref{tab:bootstrap} reports a single-prior local corrector evaluated
with a block bootstrap; this reconstruction is the primary empirical record.
The corrector is a compact reimplementation of the
streaming $k$-NN probe corrector used in Sections~\ref{sec:gates}
and~\ref{sec:certificate} (configuration in Appendix~\ref{app:J}), applied
to the probe streams; the reproduction
deposit (Section~\ref{sec:setup}) covers it in full. The secondary historical figures come from a tuned
seven-prior ensemble whose per-sample outputs we do not redistribute. The reconstruction
reproduces the deployed figure to its printed decimal on five of the
eight streams, and on two more it agrees in sign and magnitude without
matching the printed digit (\textsc{cooking-time} $+0.83$ against
$+0.9$; the harm case $-2.28$ against $-2.5$). Thus the qualitative
seven-gain/one-harm pattern and its approximate magnitudes are reproducible from the deposit;
the exact historical figures still depend on access to the tuned system. On
\textsc{sberbank-housing} the reconstruction reads $+3.8\%$ against
$+4.6\%$: the historical system's attribution assigns that stream's gain to
freshness (C2) rather than to the local channel this corrector
implements, so a local-only corrector is not expected to recover it.

Every interval excludes zero, the harm case included. The last column measures the resolution of the evaluation
itself. The half-width of the moving-block interval for the frozen
model's \emph{own} metric exceeds the reconstructed gain on four of the
eight streams, so a difference of a few tenths of a percent is resolvable
only because the bootstrap is paired on the same resampled blocks.
Unpaired comparisons at these effect sizes carry more evaluation noise
than signal. This applies to two systems scored on different splits, and to any
table of published numbers drawn from different evaluations.

On \textsc{weather}, the tuned corrector
achieves an RMSE of $1.543$, while the batch oracle on the single-prior
probe (scored with future neighbors it could not have had) reaches
$1.513$ against that probe's frozen $1.615$, little beyond the deployed
value. The deployed gain
itself is the $+3.9\%$ historical figure of Table~\ref{tab:bootstrap},
measured against the tuned system's own frozen arm; the oracle figure bounds what the leak
could have yielded on the probe, not what the deployment gained. Figure~\ref{fig:kf1} shows the same fact in capacity-resolved
form: growing the regional capacity of an offset corrector inflates the
hindsight gain monotonically while the out-of-time gain flattens
or reverses. Added fitting capacity converts drift into apparent
signal, not into deployable gain. Part~II explains why this ceiling is low within the stated agnostic drift
class. It also separates the limitations addressed by additional labels from
those addressed by a different representation.

\begin{table}[t]
\centering\small
\begin{tabular}{llrrl r}
\toprule
stream & task & reconstruction & historical & $95\%$ CI & metric $\pm$ \\
\midrule
\textsc{weather}            & reg & $+3.94\%$ & $+3.9\%$ & $[+3.60,+4.27]$ & $1.43$ \\
\textsc{sberbank-housing}   & reg & $+3.80\%$ & $+4.6\%$ & $[+2.54,+5.14]$ & $7.80$ \\
\textsc{cooking-time}       & reg & $+0.83\%$ & $+0.9\%$ & $[+0.57,+1.05]$ & $0.86$ \\
\textsc{delivery-eta}       & reg & $+0.66\%$ & $+0.7\%$ & $[+0.45,+0.91]$ & $0.97$ \\
\textsc{maps-routing}       & reg & $+0.59\%$ & $+0.6\%$ & $[+0.43,+0.77]$ & $0.96$ \\
\textsc{ecom-offers}        & cls & $+0.86$pp & $+0.9$pp & $[+0.29,+1.53]$ & $0.69$ \\
\textsc{homesite-insurance} & cls & $+0.74$pp & $+0.7$pp & $[+0.49,+0.98]$ & $0.49$ \\
\textsc{homecredit-default} & cls & $-2.28$pp & $-2.5$pp & $[-3.41,-1.16]$ & $1.38$ \\
\bottomrule
\end{tabular}
\caption{Reconstruction and historical comparison. The primary
reconstruction is the single-prior streaming $k$-NN corrector
($k=20$, unweighted neighbor mean, correction scale $0.7$, block $64$)
run causally over the probe stream; the historical column reports
the non-redistributed seven-prior deployment. Intervals are $95\%$ moving-block bootstrap
($B=1000$, block length $4\lceil n_{\mathrm{test}}^{1/3}\rceil$, paired: each replicate
scores both arms on the same resampled blocks); verdicts are unchanged at
block lengths $2\lceil n_{\mathrm{test}}^{1/3}\rceil$ and $8\lceil n_{\mathrm{test}}^{1/3}\rceil$. The
last column is the half-width of the same bootstrap applied to the frozen
model's own metric, on the gain scale. That half-width is the resolution
of the evaluation itself.}
\label{tab:bootstrap}
\end{table}

%% file: sec_boundary.tex
\section{The Boundary: What Unlabeled Data Can and Cannot Determine}
\label{sec:wallA}

Part~I organized the observed gains into four measured mechanisms (plus
two collapse classes) and
reported descriptive signal ceilings. We use partial identification to study
an informational limitation within the stated agnostic drift class \citep{manski2003partial}. The label-free
information state (labels not yet revealed) is nested inside the
prequential protocol of Part~I, not beside it: at
each prediction time the arriving label has not yet been revealed, so
every prequential step passes through this information state. The
wall of Sections~\ref{sec:idset}--\ref{sec:irred} is what a corrector
faces before its labels arrive, Section~\ref{sec:wallB} adds the wall that
labels cannot remove, and Section~\ref{sec:dichotomy} then shows what
labeled history can recover and at what rate. Every statement is in the proper-loss,
conditional-mean currency of Section~\ref{sec:setup}; ranking is outside
the theory's scope, and we treat it as a measured counterexample. Main-text statements refer to the
named assumptions of Section~\ref{sec:assumptions} and give the exact
conclusions needed for interpretation. Each is followed by its core proof
(routine steps cite appendix lemmas) and a pointer to the full counterpart
in Appendix~\ref{app:B}, Appendix~\ref{app:H}, or Online Appendix~1, all
of which
retain the original letters A--K and statement numbers (A.14, G.1.4,
etc.). There every claim is tagged \emph{proven}, \emph{partial} (a
flagged step remains), \emph{conditional} (on the named gap (2c)), \emph{cited-standard},
\emph{conjecture}, or \emph{verified measurement}; the Status paragraph of
Section~\ref{sec:dichotomy} repeats the flags for the multi-window
results, and the flags for Sections~\ref{sec:idset}--\ref{sec:wallB} are
quoted where the statements occur.

\subsection{The Identified Set and Its Diameter}
\label{sec:idset}

Fix a deployment window $W$ with observable covariate law $p_W$ and
unobserved conditional $p_W(\cdot\mid x)$. The frozen model supplies its
source conditional $p_0(\cdot\mid x)=h_0(\phi(x))$. Absent labels, the only
constraint linking the two is an assumed \emph{drift budget}
$\beta:\mathcal X\to[0,1]$: we posit
$p_W(\cdot\mid x)\in B_\beta(x):=\{q: D_{\mathrm{TV}}(q,p_0(\cdot\mid x))\le\beta(x)\}$.
The \emph{identified set} $I(\beta)$ collects every conditional law
consistent with all label-free observables, and the \emph{wall} is its
diameter in prediction space,
\[
D_\Delta(\beta;R)\;=\;\sup_{q,q'\in I(\beta)}\;
\mathbb E_{x\sim p_W(\cdot\mid R)}\big[\Delta\big(f_q(x),f_{q'}(x)\big)\big],
\]
the worst disagreement between two worlds that no unlabeled quantity can
distinguish. A natural first conjecture holds that $D$ decomposes
additively into an estimable covariate-geometry part and a budget part,
$D=f(R)+g(\beta)$. That conjecture is \emph{false}. Under the correct
normalization the covariate geometry contributes no additive floor at all.

\begin{theorem}[Exact reduction]\label{thm:reduction}
Assume the single-window regularity conditions (S1)--(S4) of Appendix~A
of Online Appendix~1 and a bounded discrepancy kernel $G$ of
Carath\'eodory class (C) or $x$-free finite-range class (U$'$) (the kernel
conditions (S5) of that appendix, not the mechanism prior (S5) of
Appendix~\ref{app:B}). The mechanism prior
(B) is encoded in the definition of $I(\beta)$ as all measurable pointwise
selections. Then the pointwise diameter
$\delta_\beta^\Delta(x)=\sup_{a,b\in B_\beta(x)}G(x,a,b)$ is measurable
with its supremum attained, and
\[
D_\Delta(\beta;R)
=\sup_{q,q'\in I(\beta)}\;\mathbb E_{x\sim p_W(\cdot\mid R)}
 \big[G\big(x,q(\cdot\mid x),q'(\cdot\mid x)\big)\big]
=\mathbb E_{x\sim p_W(\cdot\mid R)}\big[\delta_\beta^\Delta(x)\big],
\]
the outer supremum attained by a measurable (Kuratowski--Ryll-Nardzewski-selected) pair
$q^*,q'^*\in I(\beta)$, so covariate geometry enters only as the averaging
measure. In the binary case with the mean-gap discrepancy
(Corollary~A.16), with $\mu=p_W(\cdot\mid R)$,
\[
\begin{aligned}
D_{\mathrm{mean}}(\beta;R)
  &=\mathbb E_\mu[w_\beta]
    =2\,\mathbb E_\mu[\beta]-c(\beta;R),\\
c(\beta;R)
  &=\mathbb E_\mu\big[(\eta_0+\beta-1)^++(\beta-\eta_0)^+\big]
    \in[0,\mathbb E_\mu\beta],
\end{aligned}
\]
so $\mathbb E_\mu[\beta]\le D_{\mathrm{mean}}\le2\,\mathbb E_\mu[\beta]$,
attained by the explicit endpoint pair
$\Pi_{[0,1]}(\eta_0\pm\beta)$. The decision-flip diameter is
$D_{01}(\beta;R)=\mu(|\eta_0-\tfrac12|\le\beta)$ (tie-inclusive
convention; the tie-broken convention differs only on
$\{\eta_0-\beta=\tfrac12\}$).
\end{theorem}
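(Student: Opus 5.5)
The plan is to prove the general identity in two inequalities and then specialize to the binary mean-gap and decision-flip cases. First, measurability and attainment of the pointwise diameter. The ball-valued map $x\mapsto B_\beta(x)$ is a correspondence into the simplex $\Delta(\mathcal Y)$. For binary $\mathcal Y$ it is the interval $[\max(0,\eta_0-\beta),\min(1,\eta_0+\beta)]$. In general it is a compact convex set depending measurably on $x$, because $p_0$ and $\beta$ are Borel by (S1)--(S4). Hence $x\mapsto B_\beta(x)\times B_\beta(x)$ is a compact-valued, weakly measurable correspondence.

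Under class (C), $G$ is Carath\'eodory: continuous in $(a,b)$ and measurable in $x$. The measurable maximum theorem (Aliprantis--Border 18.19) then gives three things:
\begin{itemize}
\item $\delta^\Delta_\beta$ is measurable;
\item the argmax correspondence is nonempty;
\item the argmax admits a measurable selection $(a^*(x),b^*(x))$ by Kuratowski--Ryll-Nardzewski.
\end{itemize}
Under class (U$'$), $G$ is $x$-free with finitely many values, so the supremum over the compact set is attained at one of finitely many level sets. Measurability then follows by writing $\{\delta^\Delta_\beta\ge v\}$ as a projection of a measurable set. I would handle this by reducing each level set $\{G\ge v\}$ to a closed-or-Borel condition and again invoking a measurable selection theorem. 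This is the step I expect to be the main technical obstacle, because the kernel may be discontinuous there.

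Second, the equality. For the upper bound, fix any $q,q'\in I(\beta)$. These are measurable pointwise selections of $B_\beta$, by prior (B) and the definition of $I(\beta)$. Then $G(x,q(\cdot\mid x),q'(\cdot\mid x))\le\delta^\Delta_\beta(x)$ pointwise, and integrating against $\mu=p_W(\cdot\mid R)$ gives $D_\Delta\le\mathbb E_\mu\delta^\Delta_\beta$. For the lower bound, set $q^*(\cdot\mid x)=a^*(x)$ and $q'^*(\cdot\mid x)=b^*(x)$. These are Borel kernels lying in the balls. Since (B) imposes no cross-covariate coupling and the covariate law is untouched, they are consistent with every label-free observable (O1)--(O2), so $q^*,q'^*\in I(\beta)$. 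They attain $\mathbb E_\mu\delta^\Delta_\beta$. Boundedness of $G$ makes all the integrals finite. Covariate geometry enters only through $\mu$.

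Third, the binary specializations. With $f_q=q(1\mid x)$ and $G=|a-b|$, the ball is the clipped interval above, so $\delta(x)=w_\beta(x)$, attained at its endpoints $\Pi_{[0,1]}(\eta_0\pm\beta)$. The algebraic identity $\min(1,\eta_0+\beta)-\max(0,\eta_0-\beta)=2\beta-(\eta_0+\beta-1)^+-(\beta-\eta_0)^+$ gives $D_{\mathrm{mean}}=2\mathbb E_\mu\beta-c$.

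For the bounds on $c$, note first that each positive part is at most $\beta$. Moreover, the two positive parts cannot both exceed $\beta$ in total. If both are active, then $\eta_0+\beta-1+\beta-\eta_0=2\beta-1\le\beta$ because $\beta\le1$. Hence $0\le c\le\mathbb E_\mu\beta$, which yields $\mathbb E_\mu\beta\le D_{\mathrm{mean}}\le 2\mathbb E_\mu\beta$.

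For $D_{01}$, take $G=\mathbf 1\{\text{decisions at threshold }\tfrac12\text{ differ}\}$, which is of class (U$'$). Its pointwise supremum equals $1$ exactly when the interval $[\eta_0-\beta,\eta_0+\beta]\cap[0,1]$ contains $\tfrac12$, that is, when $|\eta_0-\tfrac12|\le\beta$ under the tie-inclusive convention. Under the tie-broken convention the supremum can change only on the endpoint set $\{\eta_0-\beta=\tfrac12\}$.
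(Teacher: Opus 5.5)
\paragraph{Assessment.} Your argument for the Carath\'eodory class (C) and for the binary mean-gap corollary is correct. It reaches equality by a shorter route than the paper. You place the measurable argmax pair from the measurable maximum theorem directly into $I(\beta)$, using that $I(\beta)$ contains every Borel selection of $B_\beta$. The paper instead derives equality from the decomposability of $I(\beta)$ (closure under splicing along measurable sets) together with a Rockafellar-type interchange of supremum and integral over a countable envelope. That route identifies decomposability as the one property of (B) being used, and it explains why only ``$\le$'' survives under cross-covariate priors.

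One step in your (C) argument needs an actual justification rather than ``depends measurably on $x$''. Weak measurability of $x\mapsto B_\beta(x)$ follows from the exact distance formula $d(y,B_\beta(x))=\bigl(D_{\mathrm{TV}}(y,p_0(\cdot\mid x))-\beta(x)\bigr)^+$. The formula holds because the segment from $y$ to $p_0(\cdot\mid x)$ stays in the simplex, and it makes the distance functions Carath\'eodory.

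\paragraph{The gap: class (U$'$).} The statement covers the $x$-free finite-range class (U$'$), and you leave it open. Without continuity in $(a,b)$ the measurable maximum theorem is unavailable, and the fix you sketch does not close the gap, for two reasons.
\begin{itemize}
\item If the level sets $\{G\ge v\}$ are merely Borel, then $\{x:\delta^\Delta_\beta(x)\ge v\}$ is the projection of the Borel set $\{(x,a,b):(a,b)\in B_\beta(x)^2,\ G(a,b)\ge v\}$. A priori it is therefore only analytic, not Borel.
\item The argmax correspondence need not be closed-valued; a tie-broken decision-flip kernel is an example. So Kuratowski--Ryll-Nardzewski yields no Borel attaining pair.
\end{itemize}
Your lower bound, and with it the equality and attainment for (U$'$), is therefore unproved.

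\paragraph{Two ways to close it.} \emph{First option:} use structure of the class that makes the relevant level sets closed. Then $\{\delta^\Delta_\beta\ge v\}$ is the lower inverse of a closed set under a compact-valued measurable correspondence. On each of the finitely many pieces $\{\delta^\Delta_\beta=v\}$, the set $B_\beta(x)^2\cap\{G\ge v\}$ is nonempty, compact and measurable in $x$, so Kuratowski--Ryll-Nardzewski applies piecewise.

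\emph{Second option:} take the analytic-set route. Start from a Jankov--von Neumann universally measurable selector. Replace it off a $p_W$-null set by a Borel map. Patch that map with $(p_0,p_0)$ on the Borel set where it leaves $B_\beta(x)^2$. The resulting pair lies in $I(\beta)$ and attains $\mathbb E_\mu[\delta^\Delta_\beta]$, with $\delta^\Delta_\beta$ measurable for the completion.

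\paragraph{The $D_{01}$ formula survives.} Your $D_{01}$ formula does not need any of this machinery. Whenever $|\eta_0-\tfrac12|\le\beta$, one has $\min(1,\eta_0+\beta)\ge\tfrac12\ge\max(0,\eta_0-\beta)$, and the analogous statement holds under the tie-broken convention. So the explicit endpoint pair $\Pi_{[0,1]}(\eta_0\pm\beta)$ attains the flip indicator at every $x$. Pointwise domination plus this pair gives $D_{01}(\beta;R)=\mu(|\eta_0-\tfrac12|\le\beta)$ directly.
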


The clip term $c\ge0$ is the only label-free-measurable content of the
identity, and it is purely subtractive: confident source predictions pull
the wall down, but nothing in the unlabeled data pushes it up.

\medskip\noindent\emph{Proof idea (Appendix~A of Online Appendix~1).} The budget ball
$B_\beta(x)$ is nonempty, compact, convex, with an exact ball-distance
formula that makes the correspondence $\Gamma_\beta:x\rightrightarrows B_\beta(x)$
weakly measurable. The Kuratowski--Ryll-Nardzewski selection theorem then
supplies measurable selections, and a Castaing family of selections dense
in every ball. The measurable maximum theorem (re-proved in full in Appendix~A of
Online Appendix~1, since the argument rests on it) delivers measurability of
$\delta_\beta^\Delta$ and a measurable argmax pair. The prior (B) enters
\emph{exactly once}: because the budget constrains each $x$ separately,
$I(\beta)$ is \emph{decomposable} (splicing two members along any
measurable set stays feasible), and an interchange-of-sup-and-integral
lemma in the mechanism of Rockafellar's interchange theorem (splicing
directedness plus a countable envelope) upgrades the trivial ``$\le$'' to
equality. Any cross-$x$ prior (O3 smoothness, recurrence) breaks
decomposability, and only ``$\le$'' remains valid
(Remark~A.12). The binary corollary is then clip
algebra on the explicit endpoint selections. The normalization anchor
$D_\Delta(0;R)=0$ (hence the refutation $f\equiv0$ of the additive
conjecture) holds for diagonal-vanishing kernels
(Remark~A.17).

\begin{proof}[Proof of the binary formulas]
Write $\underline\eta=\max(0,\eta_0-\beta)$,
$\overline\eta=\min(1,\eta_0+\beta)$, so $B_\beta(x)=[\underline\eta,
\overline\eta]$ and the pointwise diameter under the mean-gap kernel
$G(x,a,b)=|\eta_a-\eta_b|$ (class (C), diagonal-vanishing) is
$\delta_\beta^{\mathrm{mean}}=w_\beta=\overline\eta-\underline\eta$,
attained by the \emph{explicit} endpoint pair
$q^*(1\mid\cdot)=\Pi_{[0,1]}(\eta_0+\beta)=\overline\eta$,
$q'^*(1\mid\cdot)=\Pi_{[0,1]}(\eta_0-\beta)=\underline\eta$.
Both are measurable everywhere-selections needing no selection theory. The
reduction identity (the general part of the theorem) then gives
$D_{\mathrm{mean}}(\beta;R)=\mathbb E_\mu[w_\beta]$. The clip identity
is algebra on the endpoints: substituting
$\min(1,u)=u-(u-1)^+$ at $u=\eta_0+\beta$ and
$\max(0,\eta_0-\beta)=(\eta_0-\beta)+(\beta-\eta_0)^+$,
\[
w_\beta=2\beta-(\eta_0+\beta-1)^+-(\beta-\eta_0)^+,
\qquad\text{whence}\qquad
D_{\mathrm{mean}}(\beta;R)=2\,\mathbb E_\mu[\beta]-c(\beta;R)
\]
with $c(\beta;R)=\mathbb E_\mu[(\eta_0+\beta-1)^++(\beta-\eta_0)^+]\ge0$.
The sandwich follows from the pointwise bound
$0\le(\eta_0+\beta-1)^++(\beta-\eta_0)^+\le\beta$ (a three-case check
using $\beta\le1$, $\eta_0\in[0,1]$), so $c\in[0,\mathbb E_\mu\beta]$ and
$\mathbb E_\mu[\beta]\le D_{\mathrm{mean}}\le2\,\mathbb E_\mu[\beta]$,
with $c=0$ iff $\beta\le\eta_0\le1-\beta$ $\mu$-a.e.\ (the interior
regime). For decision flips, $B_\beta(x)$ meets the cell
$\{a:\eta_a\ge\tfrac12\}$ iff $\eta_0+\beta\ge\tfrac12$ and the cell
$\{a:\eta_a\le\tfrac12\}$ iff $\eta_0-\beta\le\tfrac12$, so the
pointwise flip indicator is $\mathbf 1[|\eta_0-\tfrac12|\le\beta]$
(tie-inclusive) resp.\
$\mathbf 1[\eta_0-\beta<\tfrac12\le\eta_0+\beta]$ (tie-broken), and
averaging gives $D_{01}(\beta;R)=\mu(|\eta_0-\tfrac12|\le\beta)$, the two
conventions differing only when $\mu(\eta_0-\beta=\tfrac12)>0$.
\end{proof}

\medskip\noindent\emph{Formal version and the measurability machinery:}
Theorem~A.14, Appendix~A of Online Appendix~1; the binary clip
algebra above is Corollary~A.16.

An immediate reading is that in the calibrated interior regime the wall
$2\,\mathbb E[\beta]$ is a restatement of the assumed budget with zero data
content. The operational quantities are instead the \emph{decision-flip
fraction} $D_{01}(\beta;R)=\Pr_\mu[\,|\eta_0(x)-\tfrac12|\le\beta(x)\,]$ (the
diameter under the decision discrepancy $\Delta_{01}$; how often the drift
could change the decision) and the \emph{price of freeze}
$\rho_\beta(x)=2\max\big(0,\beta(x)-|\eta_0(x)-\tfrac12|\big)$
(worst-case regret of trusting the frozen model). Both are computable from
$(\eta_0,\beta)$ and both are used by the local-correction screening rule of Part~III.

\subsection{Irreducibility: The Wall Is Assumed, Not Measured}
\label{sec:irred}

\begin{theorem}[Label-free irreducibility]\label{thm:irred}
Assume the standing conditions (S1)--(S7) of Appendix~\ref{app:B} with binary
$\mathcal Y$: Polish $\mathcal X$, Borel kernel $p_0$ and budget $\beta$,
i.i.d.\ unlabeled draws $x_{1:n}\sim p_W$ with $y$ never observed, the
mechanism prior (B), and randomized Borel estimators. Then for every
$n$, including $n=\infty$, over the worlds
$\mathcal W=\{p_W\otimes q:q\in I(\beta)\}$ and the region-mean functional
$\theta(P)=\mathbb E_\mu[\eta_q]$,
\[
\inf_{\hat\theta_n\ \mathrm{randomized}}\;\sup_{P\in\mathcal W}\;
\mathbb E\big|\hat\theta_n-\theta(P)\big|
\;=\;\tfrac12\,D_{\mathrm{mean}}(\beta;R)
\;=\;\tfrac12\,\mathbb E_{p_W(\cdot\mid R)}[w_\beta],
\]
\emph{constant in $n$}. The lower bound is attained already at the
two-point subfamily $\{P_-,P_+\}$ built from
$\eta_\pm=\Pi_{[0,1]}(\eta_0\pm\beta)$, and the upper bound by the
zero-data midpoint estimator. The risk is strictly positive whenever
$p_W(R\cap\{\beta>0\})>0$, and the wall height is a functional of the
prior $(\beta,p_0,p_W)$ alone. Moreover, any test of one drift explanation
$q$ against another $q'$ satisfies
$\mathbb E_{P'}[\psi_n]=\mathbb E_P[\psi_n]$ (power equals size,
Corollary~B.9), and any confidence procedure with uniform coverage
$1-\alpha$ has expected $\theta$-diameter at least
$(1-2\alpha)\,D_{\mathrm{mean}}(\beta;R)$ for every $n$
(Corollary~B.10, outer-expectation form, under its measurability
convention).
\end{theorem}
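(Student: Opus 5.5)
The plan is a standard two-point (Le Cam) argument in which the two hypotheses are exactly indistinguishable, plus a matching upper bound from the midpoint estimator. The key structural fact is that every world in $\mathcal W=\{p_W\otimes q:q\in I(\beta)\}$ shares the covariate marginal $p_W$, and $y$ is never observed. So the law of the data $x_{1:n}$ (and of any external randomization) is the same product $p_W^{\otimes n}\otimes U$ under every $P\in\mathcal W$, for every $n$ including $n=\infty$ (the countable product on $\mathcal X^{\mathbb N}$, well defined because $\mathcal X$ is Polish). Hence for any randomized Borel estimator $\hat\theta_n$, the distribution of $\hat\theta_n$ does not depend on $P$. I would isolate this as the first lemma. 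It immediately gives the testing statement: $\mathbb E_{P'}[\psi_n]=\mathbb E_P[\psi_n]$ for any test $\psi_n$, so power equals size.

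For the lower bound I will use the explicit endpoint selections $\eta_\pm=\Pi_{[0,1]}(\eta_0\pm\beta)$ from the proof of Theorem~\ref{thm:reduction}. These are Borel, lie in $B_\beta(x)$ pointwise, and so define $P_\pm\in\mathcal W$ under prior (B). Their functionals satisfy
\[
\theta(P_+)-\theta(P_-)=\mathbb E_\mu[w_\beta]=D_{\mathrm{mean}}(\beta;R).
\]
Since $\hat\theta_n$ has a common law under $P_\pm$, I average the two risks and apply the triangle inequality inside the common expectation:
\[
\max_\pm \mathbb E|\hat\theta_n-\theta(P_\pm)|\ \ge\ \tfrac12\,\mathbb E\bigl[|\hat\theta_n-\theta_+|+|\hat\theta_n-\theta_-|\bigr]\ \ge\ \tfrac12(\theta_+-\theta_-).
\]
For the upper bound, the zero-data midpoint $\hat\theta=\tfrac12(\theta_-+\theta_+)$ is a constant computable from the prior. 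For every $q\in I(\beta)$ we have $\eta_-\le\eta_q\le\eta_+$ pointwise, so $\theta(P)\in[\theta_-,\theta_+]$ and the error is at most $\tfrac12 D_{\mathrm{mean}}$. The minimax risk therefore equals $\tfrac12 D_{\mathrm{mean}}=\tfrac12\mathbb E_\mu[w_\beta]$, constant in $n$.

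Positivity follows by noting that $w_\beta>0$ wherever $\beta>0$, since the clip never makes the interval degenerate when $\eta_0\in[0,1]$ and $\beta>0$: one side always extends by $\min(\beta,\cdot)>0$. Thus $p_W(R\cap\{\beta>0\})>0$ gives a positive integral. Dependence only on $(\beta,p_0,p_W)$ is read off the formula.

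For the confidence statement, let $C$ be a random set with uniform coverage $1-\alpha$. Because $C$ has the same law under $P_\pm$, a union bound gives $\Pr(\theta_-\in C,\ \theta_+\in C)\ge1-2\alpha$. On that event the $\theta$-diameter is at least $\theta_+-\theta_-$, so the expected diameter is at least $(1-2\alpha)D_{\mathrm{mean}}$. I would phrase this with outer expectations in case the diameter is non-measurable.

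The main obstacle is not the inequalities but the measurability bookkeeping. This involves defining "randomized Borel estimator" so that the common-law lemma is rigorous at $n=\infty$, and checking that $\eta_\pm$ really generate members of $I(\beta)$, that is, Borel kernels. I would handle this by taking the randomization as an independent uniform variable and the estimator as a jointly Borel map on $\mathcal X^{\mathbb N}\times[0,1]$, so the pushforward is manifestly $P$-free.
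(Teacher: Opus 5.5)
Your proposal is correct and follows essentially the same route as the paper: exact ancillarity of the unlabeled sample (a common observation law $p_W^{\otimes n}$ for every world, including $n=\infty$), the clipped endpoint pair $\eta_\pm=\Pi_{[0,1]}(\eta_0\pm\beta)$ as the two-point subfamily, the zero-data midpoint estimator for the matching upper bound, and the same power-equals-size and union-bound arguments for the two corollaries. The differences are cosmetic. You prove the two-point inequality directly from the common law of $\hat\theta_n$ rather than quoting the general Le~Cam lemma at $D_{\mathrm{TV}}=0$, and you get positivity from $w_\beta>0$ on $\{\beta>0\}$ rather than from the paper's $w_\beta\ge\beta$; both versions suffice.
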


\begin{proof}
A two-point argument in the tradition of Le~Cam
\citep{lecam1986,tsybakov2009introduction}, resting on an exact
ancillarity. Throughout, $\mu=p_W(\cdot\mid R)$ and
$\underline\eta=\max(0,\eta_0-\beta)$,
$\overline\eta=\min(1,\eta_0+\beta)$, so $w_\beta=\overline\eta-\underline\eta$.

\emph{Worlds and ancillarity.} Each world is the joint law
$P=p_W\otimes q$ on $\mathcal X\times\mathcal Y$, defined by
$P(E)=\sum_y\int_{E_y}q(y\mid x)\,p_W(dx)$, a probability measure with
$x$-marginal $p_W$ and $p_W$-a.e.-unique disintegration $q$
(\hyperref[a:B1]{Lemma~B.1}). Push the $n$-fold product through the
observation map $\Pi_A:(x,y)\mapsto x$. The product--pushforward identity
$(\Pi_A^{(n)})_\#P^{\otimes n}=((\Pi_A)_\#P)^{\otimes n}$
(\hyperref[a:B2]{Lemma~B.2}, via agreement on measurable rectangles and
Dynkin's $\pi$--$\lambda$ theorem) together with $(\Pi_A)_\#P=p_W$
(which uses only $q(\mathcal Y\mid x)=1$) gives
\[
Q^{(n)}_P\;=\;p_W^{\otimes n}
\qquad\text{for every }q\in I(\beta)\text{ and every }n,
\]
including $n=\infty$ (infinite product measure plus the cylinder
$\pi$-system; \hyperref[a:B3]{Proposition~B.3}). Marginalizing the never-observed label
erases $q$ from the model before any data are drawn. The unlabeled sample
is exactly ancillary for the conditional, and
$D_{\mathrm{TV}}(Q^{(n)}_P,Q^{(n)}_{P'})=0$ for any two worlds.

\emph{Lower bound.} The clipped functions
$\eta_\pm=\Pi_{[0,1]}(\eta_0\pm\beta)$ define Borel Markov kernels
$q_\pm$ with $q_\pm(\cdot\mid x)\in B_\beta(x)$ for \emph{every} $x$
(by the binary TV identity, $|\eta_+-\eta_0|=\min(1-\eta_0,\beta)\le\beta$
and symmetrically), so $q_\pm\in I(\beta)$, and since
$\eta_\pm=\overline\eta$ resp.\ $\underline\eta$,
$\theta(P_+)-\theta(P_-)=\mathbb E_\mu[w_\beta]$
(\hyperref[a:B6]{Lemma~B.6}). This is the full mean-gap diameter
$D_{\mathrm{mean}}(\beta;R)$, because every $q\in I(\beta)$ has
$\eta_q\in[\underline\eta,\overline\eta]$ $\mu$-a.e.\
(\hyperref[a:B7]{Lemma~B.7}). The risk of any randomized estimator
factorizes through $(Q^{(n)}_P,\theta(P))$ by change of variables for
pushforwards, so Le~Cam's two-point inequality
(\hyperref[a:B5]{Lemma~B.5}, proved there in a form covering arbitrary
Markov-kernel randomized rules) with $D_{\mathrm{TV}}=0$ gives, already
over the subfamily $\{P_-,P_+\}$,
\[
\sup_{P\in\{P_-,P_+\}}\mathbb E\big|\hat\theta_n-\theta(P)\big|
\;\ge\;\tfrac12\,\big|\theta(P_+)-\theta(P_-)\big|\,(1-0)
\;=\;\tfrac12\,\mathbb E_\mu[w_\beta].
\]
Enlarging the supremum to all of $\mathcal W$ preserves the bound.

\emph{Upper bound.} No data are needed. For every $q\in I(\beta)$,
$\theta(P_q)\in[\mathbb E_\mu\underline\eta,\,\mathbb E_\mu\overline\eta]$,
an interval of length $\mathbb E_\mu[w_\beta]$, so the zero-data midpoint
estimator $\hat\theta^\star=\mathbb E_\mu[(\underline\eta+\overline\eta)/2]$
has worst-case error $\tfrac12\mathbb E_\mu[w_\beta]$. The two bounds
meet, constant in $n$. Strict positivity when
$p_W(R\cap\{\beta>0\})>0$ follows from $w_\beta\ge\beta$ (valid also at
the both-clips boundary via $\beta\le1$), and the value is a functional
of the prior $(\beta,p_0,p_W)$ alone.

\emph{Corollaries.} Power $=$ size is immediate. Any test $\psi_n$ of
$q$ against $q'$ has
$\mathbb E_{P'}[\psi_n]=\mathbb E_P[\psi_n]
=\mathbb E_{p_W^{\otimes n}}[\psi_n]$, since all observation laws
coincide (\hyperref[a:B9]{Corollary~B.9}). For confidence sets: uniform coverage
$1-\alpha$ puts each of $q_\pm$ in $C_n$ with probability $\ge1-\alpha$
under the common observation law $p_W^{\otimes n}$, hence both with
probability $\ge1-2\alpha$ (union bound). On that event
$\operatorname{diam}_\theta C_n\ge D_{\mathrm{mean}}(\beta;R)$, and
taking (outer) expectations, since the diameter need not be measurable,
gives the $(1-2\alpha)\,D_{\mathrm{mean}}(\beta;R)$ bound for every $n$
(\hyperref[a:B10]{Corollary~B.10}, outer-expectation form).
\end{proof}

\medskip\noindent\emph{Formal version:} \hyperref[a:B8]{Theorem~B.8}
with \hyperref[a:B9]{Corollaries~B.9}--\hyperref[a:B10]{B.10},
Appendix~\ref{app:B}. The constant is unaffected even by an oracle revealing
$p_W$ itself, because within (B) the observation functional factors through
$p_W$ alone (\hyperref[a:B11]{Remark~B.11}).

The height of the wall is an assumption, not a measurement. An unlabeled
monitor that reports small drift is therefore reporting the chosen prior.

The wall shrinks only when the model includes structure that the label-free
observation map cannot see. The program contract (Online Appendix~1,
``the formal program'') enumerates these cases. \emph{(a) Labels:}
writing $\bar\beta_R=\mathbb E_\mu[\beta]$, $k$ anchor labels drawn
i.i.d.\ from the region pin its mean to
$\min\!\big(2\bar\beta_R,\ \tilde O(k^{-1/2})\big)$ (the pure
$k^{-1/2}$ collapse operating only once $k\gtrsim\bar\beta_R^{-2}$, below
which the budget still binds), while \emph{pointwise} recovery needs
anchors plus Lipschitz-in-$\phi$ smoothness at the nonparametric rate
$\tilde O(k^{-\alpha(R)})$, $\alpha(R)\asymp1/(2+d_R)$ with $d_R$ (the
analogue of $d$ in (A3)) the intrinsic dimension of the region's support
in $\phi$-space. In this branch, covariate geometry governs the rate. \emph{(b) A drift-transport prior}
$\beta(x)=\Psi(r_W(x))$, with $r_W$ the window's density ratio against the
source, ties the budget to the estimable density ratio,
letting unlabeled data constrain the effective wall (the coupling
that Section~\ref{sec:dichotomy} makes learnable from history). \emph{(c) (O3)
recurrence} plus a stationarity prior amounts to transferred anchors.
Part~III quantifies these label requirements.

Two canonical mechanism classes illustrate such a collapse of the wall.

\begin{theorem}[Canonical collapse classes]\label{thm:covshift}
\emph{(i) Covariate shift, prior-driven.} For the class
$\mathcal C_{\mathrm{cov}}=\{p\otimes p_0\}$, every observed $p_W$ pins
$I_{\mathcal C_{\mathrm{cov}}}(p_W)=\{p_0(\cdot\mid\cdot)\}$, so
$D^{\mathcal C_{\mathrm{cov}}}_\Delta(R)=0$ for every diagonal-vanishing
discrepancy, every $R$, every $\beta$, already at $n=0$. The frozen
conditional remains valid and the collapse class C5 applies.
\emph{(ii) Label shift, data-pinned.} Fix known class-conditionals
$\{\mu_y\}$, linearly independent in $M(\mathcal X)$ (equivalently, the
mixture map $\pi\mapsto\sum_y\pi(y)\mu_y$ is injective). Then for
$p_W$ in the class the target prior is identified from unlabeled
covariates alone by the linear moment system
$\pi_W=M^{-1}\mathbb E_{p_W}[T]$, for a bounded Borel moment vector $T$
with nonsingular $M_{jy}=\mathbb E_{\mu_y}[T_j]$
(the moment system of \citealp{lipton2018bbse}). The Bayes formula pins
$I_{\mathcal C_{\mathrm{ls}}}(p_W)=\{q_{\pi_W}\}$ and again
$D^{\mathcal C_{\mathrm{ls}}}_\Delta(R)=0$, with $\pi_W$ estimated from
$n$ unlabeled draws at rate
$\mathbb E\|\hat\pi-\pi_W\|_2\le
\|M^{-1}\|_{\mathrm{op}}\sqrt{|\mathcal Y|/(4n)}$
(Proposition~B.17).
\end{theorem}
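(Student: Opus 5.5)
For both parts I will argue through the disintegration uniqueness of Lemma~B.1: a joint law on $\mathcal X\times\mathcal Y$ with $x$-marginal $p_W$ determines its conditional $p_W$-a.e. The identified set of a mechanism class $\mathcal C$ at an observed marginal $p_W$ is the set of conditionals $q$ with $p_W\otimes q\in\mathcal C$. If that set is a single element up to $p_W$-a.e.\ equality, then for any diagonal-vanishing discrepancy kernel $G$ the integrand in the definition of $D_\Delta$ is $G(x,q,q)=0$ $\mu$-a.e. Here $\mu=p_W(\cdot\mid R)\ll p_W$, so a $p_W$-null set is $\mu$-null. Hence $D^{\mathcal C}_\Delta(R)=0$ for every $R$. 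The budget $\beta$ plays no role, since intersecting a singleton with $I(\beta)$ can only leave it unchanged or empty.

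For (i), membership $p\otimes p_0\in\mathcal C_{\mathrm{cov}}$ with $x$-marginal $p_W$ forces $p=p_W$. Uniqueness of disintegration then forces $q=p_0(\cdot\mid\cdot)$ $p_W$-a.e. Nothing about the sample is used, so the conclusion already holds at $n=0$; this is Proposition~B.13.

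For (ii), the class is $\{\sum_y\pi(y)\,\mu_y\otimes\delta_y\}$, whose $x$-marginal is $\sum_y\pi(y)\mu_y$. Injectivity of the mixture map, which is linear independence of $\{\mu_y\}$, gives a unique $\pi_W$ consistent with $p_W$. To make this constructive I will build the moment vector $T$. The map $T\mapsto(\mathbb E_{\mu_y}[T])_y$ from bounded Borel $\mathbb R^{|\mathcal Y|}$-valued functions has full rank for the following reason. The measures $\mu_y$ are linearly independent in $M(\mathcal X)$, and bounded Borel functions separate finite signed measures. So the $|\mathcal Y|$ linear functionals $f\mapsto\int f\,d\mu_y$ on bounded Borel $f$ are linearly independent. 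Choosing $f_1,\dots,f_{|\mathcal Y|}$ on which they form a nonsingular matrix, I set $T=(f_j)$. One may additionally normalize $f_j$ into $[0,1]$ by affine rescaling, preserving nonsingularity after adjoining the constant function if necessary. Then $\mathbb E_{p_W}[T]=M\pi_W$, giving $\pi_W=M^{-1}\mathbb E_{p_W}[T]$. The conditional is pinned by Bayes' formula, $q_{\pi_W}(y\mid x)=\pi_W(y)\,\frac{d\mu_y}{d p_W}(x)$, and is unique $p_W$-a.e. by Lemma~B.1. I will cite Theorem~B.16 for the full measurability bookkeeping.

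The rate comes next. With $\hat\pi=M^{-1}\frac1n\sum_iT(x_i)$, I use $\|\hat\pi-\pi_W\|_2\le\|M^{-1}\|_{\mathrm{op}}\|\bar T-\mathbb E T\|_2$ and Jensen:
\[
\mathbb E\|\bar T-\mathbb ET\|_2\le\Big(\sum_j\Var(T_j)/n\Big)^{1/2}.
\]
Each coordinate takes values in $[0,1]$, so $\Var(T_j)\le\tfrac14$ by Popoviciu. This yields $\sqrt{|\mathcal Y|/(4n)}$, which is Proposition~B.17.

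The main obstacle is the existence of a bounded $T$ with nonsingular $M$ whose coordinates can be taken in $[0,1]$, which the Popoviciu constant needs. I will handle this by the separation argument above followed by affine rescaling. Should the $[0,1]$ normalization fail, I will state the rate with $\sup\|T\|$ in place of the $\tfrac14$ constant and defer to Proposition~B.17.
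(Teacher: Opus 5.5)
Your proposal is correct and follows the paper's own proof. Part (i) rests on the $p_W$-a.e.\ uniqueness of disintegrations (Lemma~B.1). Part (ii) is the same chain: a separation argument for the moment vector, moment linearity, Bayes pinning (Lemma~B.15, Theorem~B.16), and the coordinatewise $1/(4n)$ variance bound combined with Jensen and $\|M^{-1}\|_{\mathrm{op}}$ (Proposition~B.17).

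There are two small differences. First, the paper avoids your ``main obstacle'' by taking indicator moments $T_j=\mathbf 1_{A_j}$ from the outset. Sets already separate finite measures, so the vectors $(\mu_y(A))_y$ span $\mathbb R^{|\mathcal Y|}$ and some $|\mathcal Y|$ of them form a basis; this gives $[0,1]$-valued $T$ with nonsingular $M$ directly, so neither the affine rescaling nor the fallback constant is needed. Second, the paper also proves the converse half of the ``equivalently'' clause, which you take for granted. Any relation $\sum_y c_y\mu_y=0$ among probability measures forces $\sum_y c_y=0$, so a nonzero relation splits into two distinct priors with the same mixture.
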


\begin{proof}
\emph{(i) is pointwise pinning.} By \hyperref[a:B1]{Lemma~B.1} the
disintegration of a joint law over its covariate marginal is $p_W$-a.e.\
unique; since every member of $\mathcal C_{\mathrm{cov}}$ has conditional
$p_0$, whatever $p_W$ is observed the identified set is the singleton
$\{p_0(\cdot\mid\cdot)\}$, on which every diagonal-vanishing discrepancy
vanishes: $D^{\mathcal C_{\mathrm{cov}}}_\Delta(R)=0$ for every $R$ and
$\beta$, with no data required.

\emph{(ii) is a moment system.} First, linear independence of
$\{\mu_y\}$ is \emph{equivalent} to injectivity of the mixture map: any
relation $\sum_yc_y\mu_y=0$ among probability measures has
$\sum_yc_y=(\sum_yc_y\mu_y)(\mathcal X)=0$ automatically, so affine and
linear independence coincide, and a nonzero relation would produce two
distinct priors with equal mixtures (\hyperref[a:B14]{Lemma~B.14}).
Next, the moment vector exists. Evaluation functionals
$\nu\mapsto\nu(A)$ separate finite measures (Hahn--Jordan), so a
point-separating subset of the dual of the $|\mathcal Y|$-dimensional
span $V=\mathrm{span}\{\mu_y\}$ spans $V^*$, and one selects sets
$A_1,\dots,A_{|\mathcal Y|}$ whose evaluations form a basis. The
indicator moments $T_j=\mathbf 1_{A_j}$ are bounded Borel with
$M_{jy}=\mathbb E_{\mu_y}[T_j]$ nonsingular ($Mc=0$ forces
$\sum_yc_y\mu_y$ to be annihilated by a basis of $V^*$, hence $c=0$). Moment
linearity gives $\mathbb E_{p_W}[T]=M\pi$ for every mixture, which
inverts to $\pi_W=M^{-1}\mathbb E_{p_W}[T]$. Bayes pinning
(\hyperref[a:B15]{Lemma~B.15}, with everywhere-nonnegative
Radon--Nikodym versions $g_y$ so the formula defines a Markov kernel at
every $x$) exhibits $q_{\pi_W}(y\mid x)=\pi_W(y)g_y(x)/m_{\pi_W}(x)$ as
the $p_W$-a.e.-unique disintegration, so
$I_{\mathcal C_{\mathrm{ls}}}(p_W)=\{q_{\pi_W}\}$ and
$D^{\mathcal C_{\mathrm{ls}}}_\Delta(R)=0$
(\hyperref[a:B16]{Theorem~B.16}). The estimation rate is
\hyperref[a:B17]{Proposition~B.17}, where $\hat\pi=M^{-1}\hat b$ with
$\hat b_j$ the empirical moment has coordinate variances $\le1/(4n)$,
and Jensen plus the operator norm give the stated bound. The two
collapses thus have different currencies: (i) is prior-driven at $n=0$;
(ii) uses unlabeled data, at the $n^{-1/2}$ rate.
\end{proof}

\medskip\noindent\emph{Formal version:}
\hyperref[a:B13]{Proposition~B.13} and
\hyperref[a:B16]{Theorem~B.16} (with Lemmas~B.14--B.15 and
Proposition~B.17), Appendix~\ref{app:B}; the unfalsifiability claim
below is \hyperref[a:B20]{Proposition~B.20}.

The agnostic class itself cannot be falsified from unlabeled data.
If $p_W$ is consistent with label shift, the worlds $p_W\otimes q$
under (B), the covariate-shift world, and the label-shift world all induce
the same $p_W^{\otimes n}$ for every $n$, so every label-free test between
mechanism classes has power equal to size. The defensible posture is therefore
to read $D$ as the \emph{maximal honest wall}: an upper envelope over
mechanism priors that respect the budget
(any class with $I_{\mathcal C}(p_W)\subseteq I(\beta)$ satisfies
$D^{\mathcal C}\le D$; \hyperref[a:B18]{Theorem~B.18}), which the two
collapse classes lower to zero. The scoping matters. A \emph{budget-free}
prior can sit above $D$ (\hyperref[a:B19]{Remark~B.19}), so ``maximal''
is claimed only among budget-respecting priors.

\subsection{The Second Wall: What the Representation Discarded}
\label{sec:wallB}

The wall of Sections~\ref{sec:idset}--\ref{sec:irred} exists for any
representation. A second, distinct obstruction is contributed by the frozen
representation itself.

\begin{theorem}[Wall B and orthogonality]\label{thm:wallB}
Let $\eta_W(x)=\mathbb E_{p_W}[y\mid x]$,
$\bar\eta_W=\mathbb E[\eta_W\mid\mathcal F_\phi]$ (the $L^2(p_W)$
projection onto $\mathcal F_\phi=\sigma(\phi)$, the $\phi$-generated
$\sigma$-algebra), and define
$\mathrm{WallB}=\mathbb E[\operatorname{Var}(\eta_W\mid\mathcal F_\phi)]
=\|\eta_W-\bar\eta_W\|^2_{L^2(p_W)}$, equivalently in log-loss the
discarded conditional information
$\mathrm{WallB}_{\log}=I(x;y\mid\phi(x))=I(x;y)-I(\phi(x);y)\ge0$: the
part of the target that no readout of the frozen embedding can express,
labels or not (a functional of $(p_W,\phi)$ alone, in which $\beta$ and
$p_0$ never appear). Then for any deployed predictor
$\hat p=g\circ\phi$, with
$L^\star_x=\mathbb E_{p_W}[\eta_W(1-\eta_W)]$ the full-covariate Bayes
squared-error risk,
\[
\mathbb E_{p_W}(y-\hat p)^2-L^\star_x
=\underbrace{\|\eta_W-\bar\eta_W\|^2}_{\mathrm{WallB}\,\in\,\mathcal F_\phi^\perp}
+\underbrace{\|\bar\eta_W-g\circ\phi\|^2}_{\text{reachable error}\,\in\,\mathcal F_\phi},
\]
and the deployable Wall~A (a supremum over $I(\beta)$ of
$\mathcal F_\phi$-measurable acts) lives entirely in $\mathcal F_\phi$, so
the two walls are $L^2$-orthogonal \emph{subspace components}:
geometric orthogonality, not statistical independence. They are
certifiably distinct: there is a world with $\phi$ invertible
($\mathrm{WallB}=0$) and $\beta>0$ interior, where
$\mathrm{WallA}=\mathbb E[\operatorname{diam}B_\beta]>0$; and a
one-fiber world whose $\phi$-level conditional is identified by anchors,
where $\mathrm{WallA}=0$ and
$\mathrm{WallB}=\operatorname{Var}(\eta_W\mid\phi)>0$. They are coupled
only through $\mathrm{WallB}\le\mathbb E[\beta^2]$ (the drift ball is
centered at the fiber-constant $p_0$), so $\beta\to0$ caps both walls.
For injective $\phi$ (generic at $D=192$) the exact
$\mathrm{WallB}$ vanishes identically; the deployable object is the
resolution-$\rho$ deficit $\mathrm{WallB}_\rho$ defined below and
used in Section~\ref{sec:attribution-procedure}, tied to the readout's
operating scale, and every empirical Wall-B statement in this paper is at a stated
$\rho$.
\end{theorem}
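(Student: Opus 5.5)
The plan is to prove the displayed decomposition first, since everything else is built on it. Write $y-g\circ\phi=(y-\eta_W)+(\eta_W-\bar\eta_W)+(\bar\eta_W-g\circ\phi)$ in $L^2(p_W\otimes q)$. The first term is orthogonal to every $\sigma(x)$-measurable function because $\mathbb E[y-\eta_W\mid x]=0$; its squared norm is $\mathbb E[\eta_W(1-\eta_W)]=L^\star_x$ for binary $y$. The second term lies in $\mathcal F_\phi^\perp$ by the defining property of the conditional expectation $\bar\eta_W$ as the $L^2(p_W)$ projection onto $L^2(\mathcal F_\phi)$. The third term is $\mathcal F_\phi$-measurable because $g$ is Borel and $\bar\eta_W$ is $\sigma(\phi)$-measurable (Doob--Dynkin). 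Pythagoras then gives the identity. The equality $\mathbb E[\operatorname{Var}(\eta_W\mid\mathcal F_\phi)]=\|\eta_W-\bar\eta_W\|^2$ is the tower property. For the log-loss version, the chain rule $I(x;y)=I(\phi(x),x;y)=I(\phi(x);y)+I(x;y\mid\phi(x))$ holds because $\phi(x)$ is a function of $x$; nonnegativity is that of conditional mutual information.

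Next comes orthogonality with Wall~A. By definition, $\mathrm{WallA}$ is a supremum over pairs in $I(\beta)$ of discrepancies between $\mathcal F_\phi$-measurable acts. For each such pair, the difference of acts lies in $L^2(\mathcal F_\phi)$, which is orthogonal to $\mathcal F_\phi^\perp\ni\eta_W-\bar\eta_W$. This gives the subspace statement. I will stress that this is geometric only and no independence is claimed.

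The separation examples come next.
\begin{enumerate}
\item Take $\phi=\mathrm{id}$. Then $\bar\eta_W=\eta_W$, so $\mathrm{WallB}=0$. Taking $\beta$ constant with $\beta\le\eta_0\le1-\beta$, Theorem~\ref{thm:reduction} gives $\mathrm{WallA}=D_{\mathrm{mean}}=2\beta>0$, since the two coincide on fiber-unions.
\item Take $\phi$ constant (one fiber) and $\eta_W$ nonconstant with $\operatorname{Var}\eta_W>0$. The only $\phi$-level quantity is the scalar $\mathbb E\eta_W$, which anchors pin down (cf.\ route (a) after Theorem~\ref{thm:irred}). Hence the identified $\mathcal F_\phi$-measurable set is a singleton and $\mathrm{WallA}=0$, while $\mathrm{WallB}=\operatorname{Var}\eta_W>0$.
\end{enumerate}

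The coupling bound uses the fact that $\eta_0=h_0\circ\phi$ is $\mathcal F_\phi$-measurable, so it is a competitor in the projection. Hence $\|\eta_W-\bar\eta_W\|^2\le\|\eta_W-\eta_0\|^2\le\mathbb E[\beta^2]$, using the binary TV identity $|\eta_W-\eta_0|=D_{\mathrm{TV}}\le\beta$ pointwise. Together with $\mathrm{WallA}\le2\mathbb E\beta$, this shows $\beta\to0$ caps both walls. Injectivity gives $\sigma(\phi)=\sigma(x)$ by Lusin--Souslin on the Polish space, so $\mathrm{WallB}=0$.

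The main obstacle I expect is making ``$\mathrm{WallA}$ lives in $\mathcal F_\phi$'' and the one-fiber identification precise. The plan is to define the deployable Wall~A explicitly as the diameter of $\{\mathbb E[\eta_q\mid\mathcal F_\phi]:q\in I(\beta)\}$, or of the anchor-refined set in the example. Conditional expectation is a contraction, so this set sits in $L^2(\mathcal F_\phi)$. The same definition shows that it agrees with $D_\Delta$ on fiber-unions by Theorem~\ref{thm:borrowed}. The resolution-$\rho$ remark is definitional and needs no proof.
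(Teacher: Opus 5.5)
Your proposal is correct and follows essentially the paper's route. It uses the same Pythagorean split in $L^2(p_W)$ combined with the Bayes-risk term; your three-term decomposition of $y-g\circ\phi$ does that computation in one pass. It treats Wall~A as a supremum of differences of $\mathcal F_\phi$-measurable acts. It uses the same two separating worlds, which the paper instantiates concretely on $\mathcal X=\{a,b\}$ with $p_0\equiv0.5$, $\beta=0.3$, and $\eta_W=(0.9,0.1)$. It gets the coupling from the $\mathcal F_\phi$-measurable competitor $\eta_0=h_0\circ\phi$, which the paper states conditionally and then integrates. Your Lusin--Souslin argument for $\sigma(\phi)=\mathcal B_{\mathcal X}$ justifies the injective case, which the paper only asserts.

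One small slip concerns the cap $\mathrm{WallA}\le2\,\mathbb E\beta$. It holds with the expectation over the whole window or on fiber-unions. On a sub-fiber region, however, the deployable wall can exceed $2\,\mathbb E_{p_W(\cdot\mid R)}[\beta]$ when $R$ under-selects (Theorem~\ref{thm:borrowed}(iii)). State it instead as $\mathrm{WallA}(\beta;R)\le2\,\mathbb E_{p_W}[\beta]/p_W(R)$, which still gives the $\beta\to0$ conclusion.
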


\begin{proof}
Pure $L^2(p_W)$ projection geometry. Since
$\mathcal F_\phi\subseteq\mathcal F_x$, the square-integrable
$\phi$-measurable functions form a closed subspace of $L^2(p_W)$, and
$\bar\eta_W=\mathbb E[\eta_W\mid\mathcal F_\phi]$ is the orthogonal
projection of $\eta_W$ onto it. The residual $\eta_W-\bar\eta_W$ is
orthogonal to every $g\circ\phi$, since
$\mathbb E[(\eta_W-\bar\eta_W)\,g(\phi)]=0$ by the tower property. Any
deployed predictor $\hat p=g\circ\phi$ lies in the subspace, so
\[
\|\eta_W-g\circ\phi\|^2=\|\eta_W-\bar\eta_W\|^2
+\|\bar\eta_W-g\circ\phi\|^2
\qquad\text{(Pythagoras)},
\]
and adding the pointwise decomposition
$\mathbb E_{p_W}(y-\hat p)^2=L^\star_x+\|\eta_W-\hat p\|^2$ gives the
displayed identity. The deployable Wall~A is a supremum over $I(\beta)$
of differences of $\mathcal F_\phi$-measurable acts, hence lives
entirely in the subspace. The log-loss form is the data-processing
deficit $I(x;y)-I(\phi(x);y)=I(x;y\mid\phi(x))\ge0$.

\emph{Distinctness.} World (i): $\mathcal X=\{a,b\}$,
$\phi=\mathrm{id}$ (invertible), $p_0\equiv0.5$, $\beta=0.3$ interior.
Fibers are singletons, so $\mathrm{WallB}=0$, while
$\mathrm{WallA}=\mathbb E[\operatorname{diam}B_\beta]=0.6>0$. World
(ii): a single fiber ($\phi$ constant), $\eta_W(a)=0.9$,
$\eta_W(b)=0.1$ with equal weights, so $\bar\eta_W\equiv0.5$ and
$\mathrm{WallB}=\operatorname{Var}(\eta_W\mid\phi)=0.16$
($\mathrm{WallB}_{\log}=0.368$ nats). Anchors identifying the
$\phi$-level conditional pin the one deployable number, so
$\mathrm{WallA}=0$. This world must be realized by anchors, \emph{not}
by $\beta=0$, because of the coupling next.

\emph{Coupling.} The drift ball is centered at $p_0$, which is
fiber-constant ($\eta_0=h_0\circ\phi$ is $\mathcal F_\phi$-measurable),
so $|\eta_W-\eta_0|\le\beta$ pointwise, and since the conditional
variance minimizes conditional squared deviation over
$\mathcal F_\phi$-measurable centers,
$\operatorname{Var}(\eta_W\mid\mathcal F_\phi)
\le\mathbb E[(\eta_W-\eta_0)^2\mid\mathcal F_\phi]
\le\mathbb E[\beta^2\mid\mathcal F_\phi]$. Taking expectations,
$\mathrm{WallB}\le\mathbb E[\beta^2]$, so $\beta\to0$ caps both walls and
the orthogonality is geometric (subspace), not statistical independence.
\end{proof}

\medskip\noindent Two qualifications apply to the split:
it is exact \emph{pointwise-in-world} (the worst-case certificate
obeys only ``$\le$''), and $\phi$-resolution is a \emph{shared} lever, since
refining $\phi$ weakly decreases $\mathrm{WallB}$ (tower property) yet
can \emph{increase} deployable Wall~A on a sub-fiber region
(Theorem~\ref{thm:borrowed}); the reading of the two walls as mutually
inert interventions holds only for fiber-union regions.

\medskip\noindent\emph{Formal version:} the two-walls orthogonality
block of the program contract
(Online Appendix~1, ``the formal program'').
The deployable fiber-wise form is developed in Appendix~F of Online Appendix~1
(theorem cluster F.4: the reduction and dichotomy theorems).

\begin{figure}[t]
\centering
\input{figures/fig_twowalls}
\caption{The two walls are orthogonal in the $L^2(p_W)$ projection
geometry (geometric orthogonality, not statistical independence), coupled
through $\mathrm{WallB}\le\mathbb E[\beta^2]$. The excess risk of a deployed
predictor splits Pythagorean-wise into an out-of-plane Wall-B component
(what no readout of the frozen $\phi$ can express, labels or not) and an
in-plane component reachable by the channels of Part~I. Wall~A is not a
distance but a \emph{diameter}: the identified set $I(\beta)$ of worlds
that no unlabeled quantity distinguishes, sitting at $\eta_W$ regardless
of the representation. Formal statements: the program contract and
Appendix~F of Online Appendix~1.}
\label{fig:twowalls}
\end{figure}

One measurement qualification comes from the program contract: the
resolution-$\rho$ deficit of the theorem's last sentence is
$\mathrm{WallB}_\rho=\mathbb E[\operatorname{Var}(\eta_W\mid
\phi\text{ coarsened to }\rho\text{-balls})]$, monotone in $\rho$ and
converging to $\mathrm{WallB}$ as $\rho\to0$.

The practical content is an \emph{attribution}
(Figure~\ref{fig:twowalls}): residual error decomposes
into a Wall-A share (acquire labels), a Wall-B share (change or fine-tune the
representation), and a correctable share (run the channels of Part~I).
Because the two walls respond to different interventions, conflating them
(as ``adaptation failed'' reports implicitly do) prescribes the wrong
remedy.

One further result makes Wall~A deployable on the frozen readout, whose
predictions are constant on the fibers of $\phi$:

\begin{theorem}[Borrowed certainty]\label{thm:borrowed}
Assume the standing conditions (P1)--(P3) of Appendix~F of Online
Appendix~1 (standard Borel
$\mathcal X$, binary $\mathcal Y$, Borel $\phi,h_0,\beta$ with $\beta$
\emph{not} required $\phi$-measurable, and a proper regular conditional
distribution $\rho_z$ on the fibers of $\nu=\phi_\#p_W$, not to be confused
with the resolution $\rho$) together with
the modeling choice (D-proj): the deployable act conditions on the whole
window, $\bar\eta_q=\mathbb E_{p_W}[\eta_q\mid\mathcal F_\phi]$ (the
frozen readout is one $\phi$-function of the deployment window, ignorant
of $R$). Then, writing
$\bar w_\beta(z)=\mathbb E_{p_W}[w_\beta\mid\phi=z]$, $\mu_R=p_W(\cdot\mid R)$,
and $\nu_R=\phi_\#\,\mu_R$:
\emph{(i)} for any region $R$,
\[
\mathrm{WallA}(\beta;R)
=\mathbb E_{z\sim\nu_R}\big[\,\mathbb E_{p_W}[w_\beta\mid\phi=z]\,\big],
\]
the inner conditional expectation running over the \emph{full} fiber
population (fiber points outside $R$ included), attained by the
explicit pair $(q_+,q_-)$;
\emph{(ii)} on fiber-unions $R=\phi^{-1}(B_R)$ the deployable and raw
walls coincide, $\mathrm{WallA}(\beta;R)=D_{\mathrm{mean}}(\beta;R)$;
\emph{(iii)} on sub-fiber regions the gap
\[
D_{\mathrm{mean}}(\beta;R)-\mathrm{WallA}(\beta;R)
=\mathbb E_{z\sim\nu_R}\big[\mathbb E_{\mu_R}[w_\beta\mid\phi=z]
 -\mathbb E_{p_W}[w_\beta\mid\phi=z]\big]
\]
is \emph{signed}: strictly positive (deployable wall smaller) when
$R$ over-selects high-$w_\beta$ points whose fiber-mates are pinned (the
drifting cell borrows certainty), negative when $R$ under-selects. The
resulting quantity is a label-free plug-in estimand given
$(\phi,\eta_0,\beta,p_W)$ (Corollary~F.5; its estimand identity is
proven, its coarse-$\phi$ estimator rate is graded partial).
\end{theorem}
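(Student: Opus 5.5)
The plan is to reduce the deployable wall to a fiberwise version of the pointwise reduction of Theorem~\ref{thm:reduction}, exploiting that conditioning on $\mathcal F_\phi$ is linear and monotone. Fix $q\in I(\beta)$. By (D-proj) the deployable act is $\bar\eta_q=\mathbb E_{p_W}[\eta_q\mid\mathcal F_\phi]$, realized on the fibers through the regular conditional distribution $\rho_z$ of (P1)--(P3), so $\bar\eta_q(z)=\int\eta_q\,d\rho_z$. The deployable wall on $R$ is then
\[
\mathrm{WallA}(\beta;R)=\sup_{q,q'\in I(\beta)}\mathbb E_{z\sim\nu_R}\big|\bar\eta_q(z)-\bar\eta_{q'}(z)\big|.
\]
The $\phi$-measurable integrand is integrated against the pushforward $\nu_R=\phi_\#\mu_R$, which is legitimate by change of variables. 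The point to stress is that the conditional expectation is taken under $p_W$, not under $\mu_R$, so the fiber average runs over the whole fiber population. This is exactly where $R$-ignorance of the readout enters.

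\emph{Upper bound.} By Lemma~B.7, every $q\in I(\beta)$ satisfies $\underline\eta\le\eta_q\le\overline\eta$ $p_W$-a.e. Monotonicity of conditional expectation then gives $\mathbb E[\underline\eta\mid\phi]\le\bar\eta_q\le\mathbb E[\overline\eta\mid\phi]$ a.s., and the same holds for $q'$. Hence $|\bar\eta_q-\bar\eta_{q'}|\le\mathbb E[\overline\eta-\underline\eta\mid\phi]=\bar w_\beta$ a.s. (with respect to $\nu$), so also $\nu_R$-a.s., since $\nu_R\ll\nu$ whenever $p_W(R)>0$. Integrating against $\nu_R$ gives $\le\mathbb E_{\nu_R}[\bar w_\beta]$.

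\emph{Attainment.} The explicit pair $q_\pm$ with $\eta_\pm=\Pi_{[0,1]}(\eta_0\pm\beta)$ belongs to $I(\beta)$ (Lemma~B.6). It is Borel even though $\beta$ need not be $\phi$-measurable, and it satisfies $\bar\eta_{q_+}-\bar\eta_{q_-}=\mathbb E[w_\beta\mid\phi]\ge0$. No absolute value is lost and the supremum is attained, so no selection theory is needed. This proves (i).

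For (ii), when $R=\phi^{-1}(B_R)$ the event $R$ is $\mathcal F_\phi$-measurable. By the tower property,
\[
\mathbb E_{\nu_R}\big[\mathbb E_{p_W}[w_\beta\mid\phi]\big]=\frac{\mathbb E_{p_W}\big[\mathbf 1_R\,\mathbb E[w_\beta\mid\phi]\big]}{p_W(R)}=\frac{\mathbb E_{p_W}[\mathbf 1_R w_\beta]}{p_W(R)}=\mathbb E_{\mu_R}[w_\beta],
\]
and this equals $D_{\mathrm{mean}}(\beta;R)$ by Theorem~\ref{thm:reduction}.

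For (iii), first disintegrate $\mu_R$ along $\phi$ to get $\mathbb E_{\mu_R}[w_\beta]=\mathbb E_{z\sim\nu_R}\mathbb E_{\mu_R}[w_\beta\mid\phi=z]$. Subtracting (i) then yields the displayed gap identity. I would justify the existence of the $\mu_R$-disintegration on the standard Borel space and check that it is the restriction-and-renormalization of $\rho_z$ to $R$, namely $\rho_z(\cdot\cap R)/\rho_z(R)$ for $\nu_R$-a.e.\ $z$.

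\emph{Sign examples.} The signed claim is proved by two-point fibers. Take a single fiber $\{a,b\}$ with $\beta(a)>0=\beta(b)$ and $R=\{a\}$. Then $\mathbb E_{\mu_R}[w_\beta\mid\phi]=w_\beta(a)>w_\beta(a)/2$, so the gap is positive. Take $R=\{b\}$ instead and the gap is negative.

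The main obstacle I expect is the measure-theoretic bookkeeping. One must ensure that $\bar\eta_q$ can be chosen as a version defined via $\rho_z$ uniformly over the uncountable family $I(\beta)$, so that the a.s.\ inequalities do not depend on the choice of version. One must also verify the null-set compatibility $\nu_R\ll\nu$ together with the identification of the $\mu_R$-conditional with the renormalized $\rho_z$. I would handle both by working throughout with the kernel $\rho_z$ supplied by (P3) and defining every conditional expectation as an integral against it, which makes all statements hold for every $z$ outside a single $\nu$-null set.
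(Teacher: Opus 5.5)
Your proposal is correct and follows the paper's proof essentially step for step. Your a.e.\ sandwich $\mathbb E[\underline\eta\mid\phi]\le\bar\eta_q\le\mathbb E[\overline\eta\mid\phi]$ is the bound half of the paper's conditional Aumann-interval Lemma~F.1 (you rightly do not need its surjectivity half), and attainment by $(q_+,q_-)$, the tower step on fiber-unions, and disintegration under $\nu_R$ for the signed gap, checked on two-point fibers as in the paper's worked example, are exactly the paper's steps. One adjustment to your closing paragraph: you do not need a single $\nu$-null set uniform over the uncountable family $I(\beta)$. The supremum is taken only after integrating against $\nu_R\ll\nu$, so the per-$q$ transfer $\int\rho_z(N_q)\,d\nu=p_W(N_q)=0$ that the paper uses is enough.
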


\begin{proof}
The engine is the conditional Aumann-interval description of the
reachable acts (Lemma~F.1): with
$\eta_\pm=\Pi_{[0,1]}(\eta_0\pm\beta)$ and
$\bar\eta_\pm(z)=\int\eta_\pm\,d\rho_z$ ($\rho_z$ the proper regular
conditional distribution on the fiber over $z$), for $\nu$-a.e.\ $z$, as
$q$ ranges over $I(\beta)$ the fiber-average $\bar\eta_q(z)$ sweeps
\emph{exactly} the interval $[\bar\eta_-(z),\bar\eta_+(z)]$ of width
$\bar w_\beta(z)$, both endpoints attained by the explicit
everywhere-selections $q_\pm$. (Feasibility off a $p_W$-null set
transfers to $\nu$-a.e.\ fibers since
$\int\rho_z(N_q)\,d\nu=p_W(N_q)=0$; surjectivity comes from the convex
combinations $\lambda\eta_++(1-\lambda)\eta_-$, measurable selections of
the convex-valued $\Gamma_\beta$.)

\emph{(i), general $R$.} For $q,q'\in I(\beta)$ the lemma gives
$|\bar\eta_q(z)-\bar\eta_{q'}(z)|\le\bar w_\beta(z)$ $\nu$-a.e. Since
$\nu_R=\phi_\#\mu_R\ll\nu$, integrating against $\nu_R$ bounds the
deployable wall by $\int\bar w_\beta\,d\nu_R$, and the pair $(q_+,q_-)$
attains the bound because
$\bar\eta_{q_+}-\bar\eta_{q_-}=\bar w_\beta\circ\phi\ge0$ pointwise.
Neither step uses $R\in\mathcal F_\phi$, so
$\mathrm{WallA}(\beta;R)=\mathbb E_{z\sim\nu_R}
[\mathbb E_{p_W}[w_\beta\mid\phi=z]]$, the inner expectation over the
full fiber population.

\emph{(ii), fiber-unions.} When $R=\phi^{-1}(B_R)$, conditioning on the
$\mathcal F_\phi$-event $R$ leaves each within-fiber law $\rho_z$
($z\in B_R$) unchanged, and a tower computation collapses (i) to the raw
wall: $\int_{\phi^{-1}(B)}w_\beta\,d\mu_R
=\int_{\phi^{-1}(B)}\bar w_\beta\circ\phi\,d\mu_R$ for every Borel $B$,
whence $\mathbb E_{\mu_R}[\bar w_\beta\circ\phi]
=\mathbb E_{\mu_R}[w_\beta]=D_{\mathrm{mean}}(\beta;R)$.

\emph{(iii), sub-fiber regions.} Exactly that tower step fails:
$\bar w_\beta\circ\phi=\mathbb E_{p_W}[w_\beta\mid\mathcal F_\phi]$ is
\emph{not} a version of $\mathbb E_{\mu_R}[w_\beta\mid\mathcal F_\phi]$
when $R$ is sub-fiber, because conditioning on $R$ tilts the
within-fiber law. Disintegrating both walls under $\nu_R$ leaves
precisely the stated signed gap (the discrepancy between the
$R$-tilted and full-population fiber averages, with the sign of the
reweighting).

\emph{Check on the worked example}
(Corollary~F.3, Figure~\ref{fig:borrowed}): a two-point fiber
with $\eta_0\equiv0.5$, $\beta=(0.3,0)$, equal weights, $R$ the drifting
cell $\{a\}$. Pointwise the cell admits any $\eta_q(a)\in[0.2,0.8]$ (raw
wall $0.60$), but the pinned mate forces
$\bar\eta_q(z_0)=\tfrac12\eta_q(a)+0.25\in[0.35,0.65]$, so
$\mathrm{WallA}=0.30$, half the raw wall, with gap $0.30$ matching (iii).
\end{proof}

\medskip\noindent The multi-class structure stays in Appendix~F of
Online Appendix~1 (cluster F.4). The whole binary picture carries over verbatim to
TV/mean-gap discrepancies (concentric within-fiber balls share a common
worst direction), while non-monotone discrepancies such as $\ell_2$ incur
an additional strict directional-misalignment suppression.

\medskip\noindent\emph{Formal version:} the proof above transcribes
Theorems~F.2--F.3, Appendix~F of Online Appendix~1. Remaining
appendix-grade pieces are Corollary~F.5 (the label-free
estimand form) and
Proposition~F.4 with cluster F.4 (multi-class
structure). Figure~\ref{fig:borrowed} draws the theorem's worked example
(Corollary~F.3) to scale.

One hypothesis deserves separate mention because the phenomenon rests on it.
(D-proj) is semantic, not a theorem. Under the alternative act model
$\bar\eta_q=\mathbb E_{\mu_R}[\eta_q\mid\mathcal F_\phi]$ the inner and
outer measures coincide and borrowing vanishes for every $R$. But it is
forced by the definition of the deployable act itself (a frozen
readout emits one $\phi$-function of the whole deployment window and does
not know the analyst's region), so the appendix's classification of (D-proj) as a modeling choice is
conservative rather than a weakness (Remark~F.3). The same remark records the
shared-lever effect quoted under Theorem~\ref{thm:wallB}. As $\phi$
refines, $\mathrm{WallA}\to D_{\mathrm{mean}}(\beta;R)$, while coarsening
merges drifting cells with pinned mates and lowers the deployable wall.

\begin{figure}[t]
\centering
\input{figures/fig_borrowed}
\caption{Borrowed certainty, drawn to the numbers of
Corollary~F.3. Left: pointwise, the drifting cell $a$ admits
any $\eta\in[0.2,0.8]$ (raw wall $0.60$) while its fiber-mate $b$ is
pinned at $0.5$. Right: the frozen readout emits one number per fiber,
so the deployable identified interval at $z_0$ is $[0.35,0.65]$, which gives
$\mathrm{WallA}=0.30$, half the raw wall. On fiber-unions deployable and
pointwise walls coincide (Theorem~F.2); on sub-fiber regions
the drifting cell borrows certainty from its pinned mates
(Theorem~F.3).}
\label{fig:borrowed}
\end{figure}

\subsection{Learning the Wall from History}
\label{sec:dichotomy}

The wall is a prior within one window, but deployments see many windows,
and occasionally labels. The question is whether the wall's height can be
\emph{learned} from labeled history and transferred forward, converting the
assumption into a measurement. A threshold calculation identifies one sufficient regular
branch and a corresponding, still incomplete lower-bound program. Both are
governed by one geometric quantity: how much drift mass sits near the
decision boundary.

Let $d_0$ be the (deduplicated) intrinsic dimension of the embedding
(instantiating the model parameter $d$ of assumption (A3)),
$\alpha_s$ the smoothness of the drift-response field, and $\gamma$ the
margin exponent of (Mgn$_\gamma$)
($\Pr[0<|\bar\eta-\eta_0|\le t]\lesssim t^\gamma$). Define
\[
a \;=\; \frac{\alpha_s}{2\alpha_s+d_0}\in\Big(0,\tfrac12\Big],
\qquad
\gamma^\star \;=\; \frac1{2a}-1\;=\;\frac{d_0}{2\alpha_s},
\qquad\text{so that}\quad a(1+\gamma^\star)=\tfrac12\ \text{identically}.
\]
The estimand is the covered-part wall
$\Dcov=2\int_\J\Psi\,d\nu$, $\Psi(r)=\E[\,|\bar\eta_J-\eta_0|\mid R=r\,]$
(here $R$ is the novelty index of Section~\ref{sec:assumptions}, not a
region; $\J$ is the covered ratio interval of (A2) and the subscript $J$
indexes the new window),
of a new anchor-free window with novelty law $\nu$ and representer
$w=d\nu/d\barmu$.

\begin{theorem}[Regular branch of the margin boundary]\label{thm:dichotomy}
Assume (A0)--(A7), (A5$'$), the interior regime, (T), and
(Mgn$_\gamma$), with $K$ historical windows of $k$ MCAR anchor labels
each ($N=Kk$), the fold-balanced anchor-level cross-fitting of
Definition~G.1.2 (including held-out Stage-1 evaluation), the Stage-1
sup-norm rate $\varepsilon_1\asymp(\log k/k)^a$, and the DML side conditions
(H1)--(H4) and (H6) of Theorem~G.1.4: overlap, cross-fitting, an
$o_P(N^{-1/2})$ product rate for the two nuisances, proportional target
size, and a sup-norm density-ratio rate (H6). If $\gamma>\gamma^\star$ (drift stays clear
of the boundary), the wall is estimable at the parametric rate: the
sign-corrected, cross-fitted one-step estimator built on the efficient
influence function $\varphi=2[\,w(b-\Psi)+(\Psi-\theta)\,]$, where
$b:=|\bar\eta_J-\eta_0|$ is the drift magnitude and $\theta:=\Dcov/2$, satisfies
\[
\sqrt N\big(\hat D_{\mathrm{cov}}-\Dcov\big)\Rightarrow
\mathcal N(0,V),\qquad
V=4\,\E_{\barmu}[w^2\sigma_b^2]+4\tau_{NT}\Var_\nu(\Psi),
\]
where $\sigma_b^2(r):=\Var(b\mid R{=}r)$ is the conditional variance of the
drift magnitude and $\tau_{NT}:=\lim N/m$ is the anchor-to-target size ratio
of (H4),
and no regular estimator improves on the rate. By the H\'ajek--Le~Cam
convolution and local-asymptotic-minimax theorems
\citep{hajek1970,lecam1986},
$\liminf_N N\Var(T_N)\ge V^\star_{A1}>0$ for every regular $T_N$, where
$V^\star_{A1}\le V^\star$ is the efficient constant on the
stationarity-constrained tangent (attainment by inverse-variance
pooling is graded partial in the appendix; $=V^\star$ only under
disjoint novelty coverage). For fixed $K$, (A6)
overlap, (T), and $\gamma>\gamma^\star$ are sufficient for this conclusion
modulo the Appendix-G regime-(I) qualifications (Theorem~I.3.2). At equality the exponents meet, but
the displayed Stage-1 rate retains a logarithmic factor and the present
argument does not establish $\sqrt N$-regularity.
\end{theorem}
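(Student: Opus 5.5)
The plan is the standard debiased-machine-learning route: a von Mises expansion of $\theta=\Dcov/2=\int_\J\Psi\,d\nu$ around cross-fitted nuisances. The one non-standard ingredient is a margin argument that controls the non-smooth absolute value inside $b=|\bar\eta_J-\eta_0|$. The target functional depends on three nuisances: the drift response $\Psi$, the representer $w=d\nu/d\barmu$, and the Stage-1 estimate $\hat{\bar\eta}_J$ of the $\phi$-reachable conditional from which the pseudo-outcomes $\hat b=|\hat{\bar\eta}_J-\eta_0|$ are formed.

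\textbf{Step 1 (influence function).} Treat $b$ as observed and consider the pooled anchor law $\barmu$ and the target law $\nu$. A pathwise derivative of $\int_\J\Psi\,d\nu$ gives $w(b-\Psi)$ from perturbing $\Psi$ along $\barmu$, using (A1) and (A2). Perturbing $\nu$ along the $m$ unlabeled target draws gives $\Psi-\theta$. This yields $\varphi$ as stated. The two contributions are independent, and (H4) supplies the relative scaling, so the variance is $V=4\E_{\barmu}[w^2\sigma_b^2]+4\tau_{NT}\Var_\nu(\Psi)$.

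\textbf{Step 2 (one-step estimator).} Form the one-step estimator with fold-balanced cross-fitting (Definition~G.1.2) and write $\hat\theta-\theta=(\mathbb P_N-P)\varphi+R_{\mathrm{emp}}+R_2+R_{\mathrm{sign}}$.
\begin{itemize}[nosep]
\item The empirical-process term $R_{\mathrm{emp}}$ is $o_P(N^{-1/2})$ by cross-fitting and $L^2$-consistency, using (H2).
\item The second-order term $R_2$ is bounded by $\|\hat w-w\|\,\|\hat\Psi-\Psi\|$, which is $o_P(N^{-1/2})$ by (H3) and (H6).
\item Ratio errors $\epsr$ enter only through $\hat r$. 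Exogeneity (A5$'$) prevents errors-in-variables bias in the regression of $b$ on $\hat r$, and (A7) bounds the boundary mass of $\J$ that is misclassified.
\end{itemize}
The CLT for the two independent samples then gives the displayed limit.

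\textbf{Step 3 (the sign term, expected main obstacle).} This step controls $R_{\mathrm{sign}}=\E[w(\hat b-b)]$, the bias from replacing $|\bar\eta_J-\eta_0|$ by $|\hat{\bar\eta}_J-\eta_0|$. The estimator is sign-corrected: it uses $\hat s=\sgn(\hat{\bar\eta}_J-\eta_0)$ so that $\hat b=\hat s(\hat{\bar\eta}_J-\eta_0)$.
\begin{itemize}[nosep]
\item On the set $\{|\bar\eta_J-\eta_0|>\varepsilon_1\}$ the sign is correct, because $\|\hat{\bar\eta}_J-\bar\eta_J\|_\infty\le\varepsilon_1$ there. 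On that set the error is linear in $\hat{\bar\eta}_J-\bar\eta_J$, and the held-out Stage-1 evaluation makes it mean-zero, contributing only variance.
\item On the complement, the error is at most $2\varepsilon_1$ pointwise. By (T) and (Mgn$_\gamma$) the complement has $\Pbar$-mass $\lesssim\varepsilon_1^\gamma$, and (A6) with $\Gcov$ transfers this bound to $\nu$.
\item Hence $|R_{\mathrm{sign}}|\lesssim\Gcov\,\varepsilon_1^{1+\gamma}=(\log k/k)^{a(1+\gamma)}$.
\end{itemize}
Since $a(1+\gamma^\star)=\tfrac12$ identically, $\gamma>\gamma^\star$ gives $a(1+\gamma)>\tfrac12$ strictly, so the logarithm is absorbed and $R_{\mathrm{sign}}=o(k^{-1/2})$. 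With $K$ fixed, $k^{-1/2}$ and $N^{-1/2}$ agree up to a constant, so $R_{\mathrm{sign}}=o(N^{-1/2})$. At equality the exponents match exactly, leaving a $(\log k)^{1/2}$ factor, which explains the theorem's exclusion of that case.

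The delicate point is that the Stage-1 sup-norm event must hold uniformly on the anchor region; this is where (A3) and the doubling dimension $d_0$ produce $\varepsilon_1$. A second delicate point is that the linear part must be mean-zero on the correctly signed region despite cross-fitting. I would address this by conditioning on the Stage-1 fold and using MCAR from (A0).

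\textbf{Step 4 (optimality).} For the lower bound, exhibit the tangent space generated by score perturbations of $\Psi$ (constrained by (A1) stationarity across windows) and of $\nu$. Verify that $\varphi$, or its projection onto the constrained tangent space, is the canonical gradient. Then invoke the Hájek--Le~Cam convolution and local-asymptotic-minimax theorems to get $\liminf_N N\Var(T_N)\ge V^\star_{A1}$. Positivity follows from $\sigma_b^2>0$ on a set of positive $\barmu$-measure, or from nondegeneracy of $\Var_\nu(\Psi)$. I would not claim attainment of $V^\star_{A1}$ beyond the graded-partial pooling remark.
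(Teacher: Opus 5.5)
Step~3 has a genuine gap, and it is the step that produces the threshold. Your pseudo-outcome $\hat b=\hat s(\hat{\bar\eta}_J-\eta_0)$, with $\hat s=\sgn(\hat{\bar\eta}_J-\eta_0)$, is identically $|\hat{\bar\eta}_J-\eta_0|$. The ``sign correction'' therefore does nothing, and you are analyzing the naive plug-in.

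On the correctly signed region the plug-in's error is $s(\hat{\bar\eta}_J-\bar\eta_J)$, and holding Stage~1 out does not make this mean-zero. Cross-fitting only makes $\hat{\bar\eta}_J$ independent of the evaluation anchors. The $\kappa$-NN fit keeps its deterministic smoothing bias, which is of order $Lh^{\alpha_s}\asymp\varepsilon_1$ under the bias--variance-balanced tuning of Lemma~C.11$'$. Integrated against $sw$ over the bulk, this is generically of order $\varepsilon_1\asymp k^{-a}$. Since $a<\tfrac12$, that is not $o(N^{-1/2})$ for any $\gamma$. So the bulk, not the near-crossing layer, dominates your remainder, and the margin condition cannot control it. The paper explicitly rules out this plug-in as the estimator for exactly this reason: its bias is of order $\varepsilon_1$ for all $\gamma$.

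The missing idea is to use Stage~1 only for the sign and to take the magnitude from the raw label, $\hat b^{\mathrm{obs}}=\hat s\,(y-\eta_0)$. Freeze $\hat s$ off the evaluation fold. Since $\E[y-\eta_0\mid\mathcal F_\phi]=g:=\bar\eta_J-\eta_0$, the conditional mean of $\hat b^{\mathrm{obs}}$ is $\hat s g$. The pointwise identity $(\hat s-s)g=-2b\,\mathbf 1(\hat s\ne s)$ then shows that the entire outcome bias is the weighted sign error. The correctly signed region contributes label noise (variance) and no bias at all.

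From there your argument goes through. On the sup-norm event, a sign flip forces $0<b\le\varepsilon_1$. Together with (Mgn$_\gamma$) this gives $\E_{\Pbar}[b\,\mathbf 1(\hat s\ne s)]\le C\varepsilon_1^{1+\gamma}$. The off-event contribution, which you also need to account for, is $O(\sum_jk_j^{-2})=o(N^{-1/2})$ for fixed $K$. Your threshold computation is then unchanged.

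Separately, your ``CLT for the two independent samples'' in Step~2 skips a real point. The anchors are window-stratified, not i.i.d., so the published DML theorem does not apply as stated. The paper instead uses a fold-balanced, equal-quota empirical-to-population lemma plus Lindeberg--Feller. Identifying the limit variance with the mixture $V$ also requires the per-window means of $w(b-\Psi)$ to vanish, which is what (A1) supplies. Without (A1) the stratified variance is strictly smaller than $V$.
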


\paragraph{Conditional lower-bound program (not part of the proved regular theorem).}
One open analytic input conditions the program's $\gamma=0$ results.
Gap~(2c) is a quantitative resummation estimate for the
permutation-mixture affinity of the derandomized lower-bound
construction of Appendix~K of Online Appendix~1: its band structure and first band
are proven, and exact-rational enumeration supports the general
estimate, which remains open. The extension to
$0<\gamma<\gamma^\star$ additionally requires a margin adaptation that
Remark~\ref{rmk:H8}(iv) leaves open. We therefore state only the
$\gamma=0$ consequences (no margin: (Mgn$_0$) is vacuous), explicitly
conditional on gap~(2c).

\begin{proposition}[Fixed-$K$ lower bound at $\gamma=0$; conditional on
gap~(2c)]\label{prop:fixedKlower}
Under the hypotheses of Theorem~\ref{thm:dichotomy} with $\gamma=0$ and
each fixed $K\ge1$, and conditional on the Appendix-K affinity
estimate~(2c) together with the cited transfer steps in Theorem~E.3,
\[
\inf_{\hat\theta}\ \sup\ \E|\hat\theta-\theta|
\ \gtrsim\ (N/K)^{-a}/\sqrt K\qquad(\text{up to logarithmic factors}).
\]
Because $K$ is fixed and $a<1/2$, this conditional floor is larger than
$N^{-1/2}$ and hence implies non-regularity. At $K=1$ it meets the
available Cai--Low upper bound up to logarithmic factors.
\end{proposition}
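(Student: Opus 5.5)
The plan is a Le~Cam fuzzy-hypotheses (two-prior) argument of Cai--Low type for a non-smooth $L^1$-type functional, performed per window and pooled across the $K$ windows. The estimand is $\theta=\int_\J\Psi\,d\nu$ with $\Psi(r)=\E[\,|\bar\eta_J-\eta_0|\mid R=r\,]$. At $\gamma=0$ the margin condition is vacuous, so drift mass may concentrate at the crossing $\bar\eta_J=\eta_0$, and the absolute value in $\Psi$ is non-smooth at the relevant scale.

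\emph{Step 1: local perturbations.} Inside each window's anchor region I would tile a $d_0$-dimensional chart in $\phi$-coordinates (available by (A3)) with $M\asymp h^{-d_0}$ cells of side $h$. On each cell I place a bump $\pm\epsilon\,h^{\as}\psi((x-c)/h)$ centred on the crossing set, so that $\bar\eta_j-\eta_0$ changes sign. This is legitimate because (Mgn$_0$) and (T) allow a null crossing set with arbitrary mass nearby.

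\emph{Step 2: two priors.} I build two composite priors on the Rademacher sign patterns whose $\Psi$-values, and hence whose $\theta$-values, are separated by a multiple of $h^{\as}$. The device is moment matching of $|\cdot|$: the bump amplitudes are drawn from two distributions that share their first $D$ moments but differ in $\E|\cdot|$, with $D$ the polynomial degree of Section~\ref{sec:dichotomy}, of order $\log$. The construction must respect (A1) stationarity, so the same $\Psi$ has to hold across windows. I would therefore randomize coherently, through one common amplitude law shared by all $K$ windows, which is the coherent adversary. The cell locations are assigned by random permutations within each window, and this is the derandomized construction of Appendix~K.

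\emph{Step 3: affinity of the mixtures.} Next I bound the chi-square affinity of the two mixtures over $k$ anchors per window. By moment matching, each cell contributes a chi-square term of order $(k\,h^{d_0}\epsilon^2h^{2\as})^{D+1}$ up to combinatorial factors. Choosing $h\asymp k^{-1/(2\as+d_0)}$ keeps the affinity bounded away from zero. This is exactly where gap~(2c) enters: the permutation-mixture resummation across bands must be controlled. I take that estimate as given, together with the transfer steps of Theorem~E.3 that carry the bound from the white-noise model to the Bernoulli anchor model.

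\emph{Step 4: pooling.} The $K$ windows are independent. The coherent adversary, however, makes only the common separation informative, so the $K$ windows act like $K$ replicates of a one-dimensional signal. The separation can then be inflated only to $h^{\as}/\sqrt K$ while the total affinity stays bounded. With $k=N/K$, Le~Cam's two-point inequality (Lemma~B.5) gives
\[
\E|\hat\theta-\theta|\gtrsim k^{-a}/\sqrt K=(N/K)^{-a}/\sqrt K,
\]
up to logarithmic factors. Since $K$ is fixed and $a<\tfrac12$, this floor exceeds $N^{-1/2}$, which gives non-regularity. At $K=1$ the floor matches the Cai--Low upper rate for $L^1$ functionals.

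The main obstacle is Step 3, controlling the affinity of the permutation mixture. That control is precisely the open estimate~(2c), and the proposition is stated conditional on it.
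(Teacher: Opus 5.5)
Your Steps~1--3 reproduce the paper's single-window mechanism: moment-matched amplitude priors with $\E|v|$-gap $2\delta_D$, cells of side $h=k^{-1/(2\alpha_s+d_0)}$ at H\"older-maximal amplitude $h^{\alpha_s}=k^{-a}$, and the Appendix-K permutation derandomization, with gap~(2c) and the Theorem~E.3 transfer taken as given. Your pooling over windows, however, is not the paper's fixed-$K$ device.

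Theorem~H.7 places the $K$ windows on disjoint novelty cells of $\nu$-mass $1/K$ with $w\equiv1$. Each window is independently ``active'' (it carries the coherently oriented fuzzy prior) with probability $p$ under $\Lambda_0$ and $p'$ under $\Lambda_1$, where $|p'-p|=c/\sqrt K$. Then $\chi^2_j\asymp c^2/K$, the tensorized $\chi^2$ stays $O(1)$, and the separation is $(p'-p)\,2\delta_DA$. So the $1/\sqrt K$ is the explicit price of activation mixing. Assouad cannot remove it, because $\theta$ sees the activation pattern only through $\sum_j\tau_j$.

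Your coherent common-law construction is instead the one the paper uses for growing $K$ (Theorem~I.2.1): independent windows under one shared prior. In the derandomized version $\Psi$ is exactly constant, so $\theta$ is deterministic under each prior and the plain two-point Lemma~B.5 applies to the mixtures. Completed properly, your route gives more than the statement: raise the degree until each $\chi^2_j\le c_0/K$, tensorize, and the gap stays at $k^{-a}/\mathrm{polylog}(k)$ with no $\sqrt K$ loss. The activation device instead gives a $K$-uniform constant at the single-window degree and a transparent account of where the $\sqrt K$ comes from.

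Your Step~4 does not actually make either argument. ``$K$ replicates of a one-dimensional signal'', with the separation ``inflated only to $h^{\alpha_s}/\sqrt K$'', is a heuristic. You never say which parameter is rescaled, and you never check that the per-window affinity then falls to $O(1/K)$.

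The step is repairable: rescaling $\epsilon\to\epsilon/\sqrt K$ multiplies your Step-3 per-cell terms by $K^{-(D+1)}$. But the reasoning points the wrong way. Coherence is what defeats averaging, not what forces a $\sqrt K$ loss. The $\chi^2\propto(\text{separation})^2$ scaling you invoke holds only along the amplitude curve (Proposition~I.2.2), which is why Proposition~\ref{prop:growingK} carries no $\sqrt K$ (conditionally on (2c)). For fixed $K$ the simplest fix is to make each independent window's $\chi^2_j\le c_0/K$ by adjusting constants, then tensorize.

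Three smaller points:
\begin{itemize}
\item In Step~3, the choice of $h$ only makes each per-cell base $O(\epsilon^2)$. The sum over $M=h^{-d_0}$ cells stays bounded only because the degree grows, $(D+1)!\gtrsim M$, i.e.\ $D\sim\log k/\log\log k$. Make this explicit.
\item The derandomization is needed already at $K=1$, for the shape constraint (A4), not only for (A1): under i.i.d.\ heights the realized $\Psi$ is a.s.\ non-monotone.
\item Drop ``priors on Rademacher sign patterns''. A Rademacher law has $|v|\equiv1$ and hence no $\E|v|$ gap; the priors must live on amplitudes in $[-1,1]$.
\end{itemize}
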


\begin{proposition}[Growing-$K$ rate at $\gamma=0$; conditional on
gap~(2c)]\label{prop:growingK}
Under the hypotheses of Theorem~\ref{thm:dichotomy} with $\gamma=0$ and
$k^{\delta}\le K\le\mathrm{poly}(k)$ for some fixed $\delta>0$, and
conditional on the graded affinity estimate~(2c) together with the cited
transfer steps in Theorem~E.3, the minimax rate is
\[
\inf_{\hat\theta}\ \sup\ \E|\hat\theta-\theta|\ \asymp\ (N/K)^{-a}
\quad(\text{up to polylog}),\qquad\text{\emph{not} } k^{-a}/\sqrt K.
\]
\emph{Conditionally on gap~(2c), accumulating more windows does not improve
the worst-case rate}:
independent-window averaging reduces the stochastic error by $\sqrt K$
unconditionally, but an adversary that drifts coherently across windows
keeps the non-smoothness bias at the single-window floor.
\end{proposition}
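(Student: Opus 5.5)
The statement has two halves, an upper bound $(N/K)^{-a}$ and a lower bound of the same order, and I would prove them separately.

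\emph{Upper bound.} I would use the estimator that ignores pooling beyond a single window's worth of bias. Run a Cai--Low-type polynomial-approximation estimator of the absolute-value functional $|\bar\eta_J-\eta_0|$ within each window, at bandwidth $h\asymp(\log k/k)^{1/(2\alpha_s+d_0)}$ in $\phi$-coordinates (using (A3)). Then average the $K$ window-level plug-ins of $\Psi$, reweighted by the representer $w$, which is bounded by (A6).

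Each window estimate has non-smoothness bias of order $k^{-a}$ up to logs and stochastic error of order $k^{-a}$. Averaging over independent windows (A0) reduces the stochastic part to $k^{-a}/\sqrt K$, but leaves the bias at $k^{-a}$. Since $k=N/K$, this gives $(N/K)^{-a}$ up to polylog.

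The ratio error (A5) and the stationarity transfer (A1), (A5$'$) add terms of order $\epsr$ and $\piout$. I would absorb these as in the regular branch of Theorem~\ref{thm:dichotomy}, which reuses its Stage-1 sup-norm rate $\varepsilon_1\asymp(\log k/k)^a$ verbatim. I also need to confirm that the alternative rate $k^{-a}/\sqrt K$ is not attainable, so the upper bound should be stated as exactly the bias floor.

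\emph{Lower bound: construction.} I would build a coherent adversary. Take a single-window two-prior construction of Lepski--Nemirovski--Spokoiny / Cai--Low type with $M\asymp h^{-d_0}$ bumps of height $h^{\alpha_s}$, where $h\asymp k^{-1/(2\alpha_s+d_0)}$. The two priors are moment-matched random sign patterns of the drift $\bar\eta_J-\eta_0$ that agree on the first $D\asymp\log k$ moments, but whose absolute-value means differ by order $h^{\alpha_s}$, up to logs. The construction is valid at $\gamma=0$ because (Mgn$_0$) is vacuous, so bumps may sit at the crossing.

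The key choice is to use the \emph{same} bump locations and the \emph{same} mixing distribution across all $K$ windows. Coherence across windows is what (A1) permits: a single $\Psi$, with within-window signs randomized. Because the separation in $\theta=\Dcov/2$ does not average out over windows, it stays at order $k^{-a}=(N/K)^{-a}$.

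\emph{Lower bound: reduction.} By Le~Cam's two-point inequality applied to the mixtures (Lemma~B.5 in its randomized form), it suffices to show that the total variation, or the $\chi^2$-affinity, between the two $K$-window mixture laws of the anchors stays bounded away from $1$. This is the affinity of the permutation mixture in the derandomized construction of Appendix~K. The resummation estimate gap~(2c), which I take as given per the statement, bounds it in the regime $k^{\delta}\le K\le\mathrm{poly}(k)$: the band structure controls the cross-window correlations, and the polynomial restriction on $K$ keeps the $\log k$ moment-matching degree sufficient.

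The transfer steps of Theorem~E.3 then carry the bound from the pooled anchor model to the target functional on the new window. This uses the representer $w$ and (A6) to move from $\barmu$ to $\nu$.

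\emph{Main obstacle.} Beyond the conditional input (2c) itself, the hard step is showing that coherence does not itself leak information. Correlating bump locations across windows could let the $K$ windows jointly reveal the sign pattern, which would inflate the affinity multiplicatively in $K$.

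My first attempt would be to resample the signs independently per window while keeping the location-level moment matching fixed. The per-window likelihood ratios are then conditionally independent given the shared locations, and the $\chi^2$ bound becomes a product over windows of terms of order $1+O(k^{-c})$. That product is controlled for $K\le\mathrm{poly}(k)$ once the degree $D$ is chosen as a large enough multiple of $\log k$.

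The lower limit $K\ge k^{\delta}$ is what lets the $(N/K)^{-a}$ form dominate the $\sqrt K$-reduced stochastic term. This is exactly what the ``not $k^{-a}/\sqrt K$'' clause of the statement asserts.
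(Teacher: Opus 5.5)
Your architecture is the paper's. The upper bound is the per-window non-smoothness bias of order $k^{-a}$, which averaging over independent windows cannot remove. The lower bound puts the \emph{same} moment-matched prior on every window and raises the matching degree until the per-window $\chi^2$ is $O(1/K)$. Then $(1+\chi^2_j)^K-1$ stays bounded while the gap $A\delta_D\asymp A/D$ loses only a polylog. Your ``large enough multiple of $\log k$'' plays the role of the paper's $(D{+}1)!\asymp m_hK/c_0$; both need $\log K\lesssim\log k$, which is where $K\le\mathrm{poly}(k)$ enters.

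That degree-tuned tensorization is the entire content of the proposition, yet you offer it only as a tentative fix for a misidentified obstacle. With windows drawn independently under a shared hypothesis, nothing leaks: the joint $\chi^2$ is exactly $\prod_j(1+\chi^2_j)-1$. The real difficulty is elsewhere. Forcing $\chi^2_j\asymp 1/K$ by lowering the \emph{amplitude} shrinks the gap by $\sqrt K$ (the Cauchy--Schwarz relation $\chi^2_j\asymp(\delta_j/r_k)^2$ along the amplitude curve). That reproduces exactly the $k^{-a}/\sqrt K$ rate the proposition rules out. The escape is that along the \emph{degree} direction, $\chi^2_j\asymp m_h/(D{+}1)!$ ($m_h$ cells) decays super-exponentially while $\delta_D\asymp 1/D$ decays only polynomially. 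State this explicitly, since it is what separates the claim from the averaging rate.

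Two smaller corrections. First, the two priors are moment-matched laws of the cell heights $v_c\in[-1,1]$ (Proposition~H.5), not random sign patterns, which would leave $\E|v|$ unchanged. Second, for the upper bound a plug-in on the sup-norm Stage-1 fit of Lemma~C.11$'$ already gives $(N/K)^{-a}$ up to logs, using its $k^{-c}$ tail to absorb the $K$ failure events. So you need not rely on the aggregated Cai--Low estimator, which the paper grades conjectural for $K>1$.

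You also assign the conditional inputs the wrong jobs.
\begin{itemize}
\item \emph{Gap~(2c) is not a cross-window statement.} It is the affinity estimate for the permutation mixture of Appendix~K's derandomized construction. That construction replaces i.i.d.\ cell heights already at $K=1$, because i.i.d.\ heights are inadmissible: the realized response curve is a.s.\ non-monotone, violating (A4). So your independent resampling may act only across windows, with the derandomized construction inside each window. The per-window affinity you tensorize is the one (2c) controls.
\item \emph{Lemma~B.5 on mixtures needs a deterministic $\theta$.} This holds under each prior when $\Psi$ is exactly constant. With random heights you need the paper's two-fuzzy-hypotheses bound together with its concentration step (functional variance $\asymp k^{-2a}/(Km_h)$, negligible against the squared gap).
\item \emph{Theorem~E.3's transfer steps are something else.} They are the sharp constant and the uniform-over-cells Bernoulli$\leftrightarrow$Gaussian transfer that your $\chi^2$ computation for Bernoulli anchors silently uses. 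They are not a passage from $\barmu$ to $\nu$; the constructions can take $w\equiv1$, as Theorem~H.7 does.
\item \emph{The role of $K\ge k^\delta$ is misstated.} It is not what makes the bias dominate the stochastic term, since $k^{-a}\ge k^{-a}/\sqrt K$ for every $K$. It is what makes ``not $k^{-a}/\sqrt K$'' meaningful up to polylog factors, because $\sqrt K$ must then exceed every polylog.
\end{itemize}
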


\begin{proof}[Proof of Theorem~\ref{thm:dichotomy}, given the Stage-1 inputs]
The estimand is \emph{linear} in $\Psi$:
$\theta=\langle\Psi,w\rangle_{L^2(\barmu)}$ with $\Dcov=2\theta$, so it
is pathwise differentiable with the two-sample efficient influence
function $\varphi$ above (source gradient $w(b-\Psi)$, target
gradient $\Psi-\theta$; Proposition~G.1.1). The estimator is
the cross-fitted one-step (DML2) of Definition~G.1.2: split every
window's anchors across all folds with equal per-window quotas, train
$(\hat\Psi^{(-\ell)},\hat w^{(-\ell)})$ and each Stage-1
$\hat\eta_j^{(-\ell)}$ off the evaluation anchors in fold $\ell$,
and read the held-out drift through the recovered sign,
$\hat b^{\mathrm{obs}}=\hat s\,(y-\eta_0)$, where $g:=\bar\eta_J-\eta_0$,
$s:=\operatorname{sign}g$, and $\hat s:=\operatorname{sign}(\hat\eta_J-\eta_0)$.

\emph{Step 1: sign recovery; the threshold computation.} On the Stage-1
sup-norm event $\|\hat\eta_J-\bar\eta_J\|_\infty\le\varepsilon_1$
(probability $\ge1-\sum_jk_j^{-2}$,
\hyperref[lem:C11prime]{Lemma~C.11$'$}), the \emph{straddle} holds
pointwise: a sign flip forces $0<b\le\varepsilon_1$
(Lemma~\ref{lem:H1}). Hence under (Mgn$_\gamma$) the weighted sign
error obeys
$\E[b\,\mathbf 1(\hat s\ne s)]\le\varepsilon_1\Pbar(0<b\le\varepsilon_1)
\le C\varepsilon_1^{1+\gamma}$, and, via the pointwise identity
$(\hat s-s)g=-2b\,\mathbf 1(\hat s\ne s)$, the \emph{entire} outcome
bias of $\hat b^{\mathrm{obs}}$ is
$\|\beta_1^{\mathrm{sign}}\|_{L^1(\barmu)}\le2C\varepsilon_1^{1+\gamma}$.
Wherever the sign is right, the raw residual is conditionally unbiased,
so no Jensen/smoothing bias is ever incurred. With
$\varepsilon_1\asymp k^{-a}$ (logs absorbed by the strict inequality)
and $k=N/K$, $K$ fixed,
\[
\varepsilon_1^{1+\gamma}=o(N^{-1/2})
\iff(1+\gamma)a>\tfrac12
\iff\gamma>\gamma^\star=\tfrac1{2a}-1=\tfrac{d_0}{2\alpha_s}.
\]
This computation \emph{is} the threshold.

\emph{Step 2: one-step decomposition.} Write
$\hat D_{\mathrm{cov}}-\Dcov
=2[(\nu_m-\nu)\hat\Psi+(\Pbar_N-\Pbar)(\hat w(\hat b^{\mathrm{obs}}
-\hat\Psi))]+2R_2$ with
$R_2=R_{\mathrm{prod}}+R_{\mathrm{ratio}}+R_{\mathrm{stage1}}^{\mathrm{sign}}$.
The Neyman-orthogonal \emph{product} remainder is
(Lemma~G.1.3)
\[
\tfrac12\,\E[\hat D_{\mathrm{cov}}\mid\hat\Psi,\hat w]-\theta
=\int(\hat\Psi-\Psi)(w-\hat w)\,d\barmu,\qquad
|R_{\mathrm{prod}}|\le\|\hat\Psi-\Psi\|_{L^2(\barmu)}
\|\hat w-w\|_{L^2(\barmu)}
\]
(double robustness in $(\Psi,w)$), so the cube-root shape rate of
the isotonic $\hat\Psi$ enters only \emph{multiplied} by the fast,
one-dimensional representer error, and (H3) gives
$R_{\mathrm{prod}}=o_P(N^{-1/2})$; (H6) handles $R_{\mathrm{ratio}}$;
Step 1 gives
$|R_{\mathrm{stage1}}^{\mathrm{sign}}|\le\Gcov
\|\beta_1^{\mathrm{sign}}\|_{L^1}=o(N^{-1/2})$ above threshold.

\emph{Step 3: CLT.} Cross-fitting replaces any Donsker condition.
\hyperref[lem:G14prime]{Lemma~G.14$'$} proves the fold-conditional
empirical-to-population step under the window-stratified (non-i.i.d.)
fold-balanced design with equal quotas $k_j\equiv k$. The published i.i.d.\ DML
theorem \citep[Thm.~3.1]{chernozhukov2018dml} does not apply as stated, and only its Step-2 device is
imported. What remains is the empirical process of the fixed $\varphi$
on two independent bounded mean-zero averages. Lindeberg--Feller
\citep[Prop.~2.27]{vandervaart1998} plus Slutsky give
$\sqrt N(\hat D_{\mathrm{cov}}-\Dcov)\Rightarrow\mathcal N(0,V)$, where
the variance identification $V=4\E_{\barmu}[w^2\sigma_b^2]
+4\tau_{NT}\Var_\nu(\Psi)$ \emph{requires} stationarity (A1): with
nonvanishing per-window score means the stratified variance is strictly
below the mixture $V$ (the CI constant is conservative; the estimator
stays centered on the pooled $\theta$).

\emph{Step 4: no regular estimator improves.} The stratified anchor
experiment is LAN in the interior regime, so the H\'ajek--Le~Cam
convolution and LAM theorems applied on the stationarity-constrained
tangent $\mathcal T_{A1}$ give
$\liminf_NN\Var(T_N)\ge V^\star_{A1}>0$ for every regular $T_N$
(Theorem~G.3.2; for the non-i.i.d.\ experiment via the convolution
theorem of \citealp{mcneney2000convolution}).

\emph{Remainders left to the appendix, as noted there:} the
efficient \emph{constant} $V^\star_{A1}$ under overlapping novelty
coverage (and the GMM/inverse-variance pooling attaining it) is
partial; and the ratio channel keeps (H6) at
$\varepsilon_r=o(N^{-1/2})$ unconditionally,
$\varepsilon_r^2$ only under (A5$'$) plus the conjectural DML-in-$r$
orthogonalization (Conjecture~G.2.3).
\end{proof}

\medskip\noindent\textbf{The conditional lower-bound program
(Appendix~\ref{app:H}; Appendices~I and~K of Online Appendix~1).}
The candidate non-regularity mechanism lives entirely in the
labels$\to|\cdot|$ map. The constructions behind
Propositions~\ref{prop:fixedKlower} and~\ref{prop:growingK} have three
stages; their dependence on gap~(2c) is carried throughout.

\emph{Single window} (Theorems~E.3 and~\ref{thm:H6}). By LP
duality of the moment problem, two symmetric priors $\nu_0,\nu_1$ on
$[-1,1]$ matching moments to degree $D$ can differ in $\E|v|$ by
$2\delta_D$, $\delta_D=\beta_*/D\,(1+o(1))$ with Bernstein's constant
$\beta_*\approx0.2802$, the extremal pair sitting on the Chebyshev
alternation points of the best degree-$D$ approximant of $|x|$
(Proposition~\ref{prop:H5}, following \citealp{cailow2011}; see also
\citealp{lepski1999}). Partition the $d_0$-dimensional
$\phi$-support into $m_h=h^{-d_0}$ cells of side $h=k^{-1/(2\alpha_s+d_0)}$.
On each cell set the drift $g=Av_c$ at H\"older-maximal amplitude
$A=h^{\alpha_s}=k^{-a}$, with i.i.d.\ cell heights $v_c\sim\nu_\iota$, $\iota\in\{0,1\}$.
With $A^2\cdot(\text{anchors per cell})=\Theta(1)$ and
$D\sim\log k/\log\log k$ so that $(D{+}1)!\gtrsim m_h$, the mixture
$\chi^2$ stays $O(1)$ while the functional gap is
$A\cdot2\delta_D\asymp k^{-a}(\log\log k/\log k)$. Conditional on the
Appendix-K affinity estimate, Le~Cam's two-point method gives the $K=1$
floor (the sharp constant and the
uniform-over-cells Bernoulli$\leftrightarrow$Gaussian transfer are
cited results, \citealp{cailow2011,nussbaum1996}; Theorem~E.3 carries
this external dependence).

\emph{Fixed $K$} (Theorem~\ref{thm:H7}). Place the $K$ windows on
\emph{disjoint} novelty cells (Stage-1 cannot pool), each of $\nu$-mass
$1/K$, and let each window be active with probability $p$ under
$\Lambda_0$ and $p'$ under $\Lambda_1$, $|p'-p|=c/\sqrt K$, the active
state being the fuzzy prior above, coherently oriented. The affinities
tensorize, $\chi^2(\Lambda_1\Vert\Lambda_0)=\prod_j(1+\chi^2_j)-1\le
e^{\sum_j\chi^2_j}-1$ with $\sum_j\chi^2_j\asymp K\cdot(c^2/K)\cdot O(1)=O(1)$, so $\mathrm{TV}\le\tfrac12$, while the
functional separates by $(p'-p)\cdot2\delta_DA\asymp(N/K)^{-a}/\sqrt K$ up
to the logarithmic factor $\log\log k/\log k$ of the single-window case.
Conditional on gap~(2c), Le~Cam gives the fixed-$K$ floor, which is enough
for non-regularity, since
$(k^{-a}/\sqrt K)/N^{-1/2}=k^{1/2-a}\to\infty$ and no logarithmic factor
changes that. Assouad is inapplicable
here, because the functional sees the sign pattern $\tau\in\{0,1\}^K$ only through the
scalar $\sum_j\tau_j$, a Hamming \emph{upper} bound where Assouad needs a lower
one (Remark~\ref{rmk:H8}), which is exactly why this bound carries
the $\sqrt K$ deficit.

\emph{Growing $K$: degree-tuning}
(Theorem~I.2.1 and Proposition~I.2.2). The
amplitude-tuned bound loses $\sqrt K$ in the gap. The general-$K$ bound
keeps the amplitude and tunes the \emph{degree}. At the Cai--Low
bandwidth the per-window affinity $\chi^2_j\asymp m_h/(D+1)!$ falls
\emph{super-exponentially} in the matched degree while the gap
$\asymp A\,\delta_D\asymp A/D$ falls only \emph{polynomially}. Choosing
$(D{+}1)!\asymp m_hK/c_0$ with $c_0\le\ln2$ drives $\chi^2_j\asymp c_0/K$, so
that on independent windows
$(1+\chi^2_j)^K-1\le e^{c_0}-1\le1$ and $\mathrm{TV}\le\tfrac12$
(Ingster--Suslina tensorization; \citealp{ingster2003nonparametric}),
while raising the degree from $D_1$ ($(D_1{+}1)!\asymp m_h$) costs the
gap only the factor $1+O(\log K/\log k)$. Conditional on the
permutation-mixture affinity estimate, placing the \emph{same}
moment-matched prior coherently on all $K$ windows therefore retains a
gap of order $k^{-a}/\mathrm{polylog}$ at $\mathrm{TV}\le\tfrac12$, and
the functional concentrates (variance $\asymp k^{-2a}/(Km_h)\ll$ the
squared gap), so the two-fuzzy-hypotheses Le~Cam bound
\citep[Thm.~2.15]{tsybakov2009introduction} yields the floor
$r_k/\mathrm{polylog}(k)$, $r_k:=(k/\log k)^{-a}$ ($=(N/K)^{-a}$ up to
the $(\log k)^a$ factor), for
$k^\delta\le K\le\mathrm{poly}(k)$,
which would refute the $k^{-a}/\sqrt K$ averaging gain. The competing Cauchy--Schwarz
ceiling $\chi^2_j=(\delta_j/r_k)^2$, $\delta_j$ the per-window functional
gap, which would force the
$r_k/\sqrt K$ cap, holds only along the amplitude curve and is not a
valid universal lower-bound cap (Proposition~I.2.2).

\emph{Admissibility of the coherent construction.} The construction's
admissibility is established in
Appendix~K of Online Appendix~1, at every $K$. A naive i.i.d.-heights construction is
\emph{inadmissible} at every $K\ge1$ (its realized response curve is
a.s.\ non-monotone); the admissible version is a derandomized,
phase-complementary construction with \emph{exactly constant} $\Psi$.
The derandomization trades the i.i.d.\ tensorization for a
permutation-mixture affinity bound, proven modulo the single graded
gap (2c). Because the derandomization is what makes the construction admissible
already at $K=1$, Propositions~\ref{prop:fixedKlower}
and~\ref{prop:growingK} carry the same conditioning. If (2c) failed
entirely, both conditional floors would lapse; the bracket
$[r_k/\sqrt K,\,r_k]$ of Remark~\ref{rmk:H8}(ii) would then remain only as
the candidate range for the growing-$K$ rate, with neither end established.

\medskip\noindent\emph{Reconciliation and refinements.} The
certified-sign regime of Appendix~G of Online Appendix~1, where the signed drift
is observed and the Stage-1 bias vanishes for every $\gamma$, is the
\emph{perfect-margin endpoint} of the regular-side calculation. Its side condition
$\varepsilon_1^{1+\gamma}=o(N^{-1/2})$ is algebraically identical to
$\gamma>\gamma^\star$, and its ``outcome-bias hypothesis fails for all
$K\ge1$'' is exactly the no-margin instance $\gamma=0$
(Theorem~\hyperref[thm:H11]{H.11}). Below threshold the candidate
rate-relevant achiever is the polynomial-debiased $U$-statistic of
Proposition~\hyperref[prop:H10]{H.10}, which confines poly-debiasing to
the near-crossing layer (mass $\lesssim\varepsilon_1^\gamma$) and runs the
linear one-step on the bulk. It is \emph{not} the naive plug-in
$|\hat\eta_J-\eta_0|$, whose one-signed Jensen bias is coherent on every
window and therefore never averages. Two refinements
(Corollary~\ref{cor:H4}(b); Theorem~I.3.2 of Online Appendix~1): for
\emph{growing} $K=K_N$ with
$\log K/\log N\to\lambda\in(0,\tfrac12)$, the parametric branch is
established only for $\gamma>\gamma^\star(\lambda)
=\tfrac1{2a(1-\lambda)}-1>\gamma^\star$, leaving the interval
$(\gamma^\star,\gamma^\star(\lambda))$ with a proven $N^{-1/2}$ lower
bound but no proven attaining estimator; and rates of non-smooth
functionals routinely jump at phase boundaries, so continuity across
$\gamma^\star$ is not a valid selection principle. The below-threshold
rate is fixed by the two-sided bracket, not by matching the parametric
branch at the threshold. Finally, the deployment reading of the
$\sqrt K$ (Remark~I.3.3): the gain is real but sits
on the wrong error component. It averages the stochastic error
unconditionally and helps on \emph{bias-incoherent} recurrences, while a
worst-case adversary drifts coherently and pays no averaging at all. Regime
recurrence therefore supplies a $\sqrt K$ lever \emph{under an
incoherence prior on the recurrences}, degrading to the bare per-window
price under coherent drift. Section~\ref{sec:map}
examines how the resulting threshold behaves under stream-level proxies and
an explicit sensitivity scenario; it does not identify a worst-case price
for every data set.

\noindent\emph{Formal statements and proofs:} the regular branch is
Theorem~G.1.4 (DML2 CLT) with the matching regularity bound
Theorem~G.3.2 and its regime-(II) extension
Theorem~\ref{thm:H3}. The conditional $\gamma=0$ lower-bound statements
are Theorem~\ref{thm:H7} (fixed $K$) and Theorem~I.2.1,
Proposition~I.2.2, and Theorem~I.3.1 (general $K$). The positive-margin adaptation below the threshold remains
open. Appendix~\ref{app:H} and Appendices~G, I, and~K of Online Appendix~1
give the full status record.

\emph{Status.} The upper rate, the threshold identity, and the
sign-recovery reduction behind the magnitude case are theorems under
their stated hypotheses. One hypothesis deserves separate
mention: Theorem~\ref{thm:dichotomy} \emph{assumes} the sup-norm
density-ratio rate (H6), and the program that would establish (H6) from
primitives is itself graded partial in Online Appendix~1, so the theorem
certifies the implication, not the availability of its premise. The
efficient \emph{constant} under overlapping windows and the second-order
behavior of the density-ratio channel are graded partial in the
appendix. The growing-$K$ admissibility step of the coherent construction is
settled exactly (Appendix~K of Online Appendix~1, via the derandomized
exactly-constant construction).
What remains there is the single graded affinity estimate (gap (2c)); it
conditions the $\gamma=0$ lower bounds at every $K$, fixed and growing
alike (Propositions~\ref{prop:fixedKlower} and~\ref{prop:growingK}), and
the margin-$\gamma$ adaptation of
the fixed-$K$ bound remains open (Remark~\ref{rmk:H8}).

Estimating the wall requires estimating $|\bar\eta-\eta_0|$, and the
absolute value is non-smooth where drift crosses the boundary. The margin
condition keeps this kink rare enough for sign recovery to restore a smooth
problem. For deployment, the wall is learnable from history at low label
cost when drift stays away from decisions. The conditional lower-bound program explains why
crossings are the difficult case, but it does not yet prove that conclusion
throughout the positive-margin sub-threshold region.

\subsection{Proxy Estimates and Sensitivity Analysis on Twenty-Three Data Sets}
\label{sec:map}

The threshold depends on three quantities that are difficult to estimate
jointly: intrinsic dimension $d_0$, drift smoothness $\alpha_s$, and margin
exponent $\gamma$. We therefore separate two evidence levels. On eight
TabReD streams, labels permit stream-level proxy estimates of all three
quantities, although the smoothness and margin estimates are noisy. On
fifteen further shift benchmarks, labels for the relevant temporal drift
field are unavailable; those data sets contribute only $d_0$ and are shown
under the favorable rough-field scenario $(\alpha_s,\gamma)=(1,1)$.
Figure~\ref{fig:map} visualizes both levels and Table~\ref{tab:J-map} records
the inputs. Appendix~\ref{app:J} gives the protocol and its limitations.

\begin{figure}[t]
\centering
\includegraphics{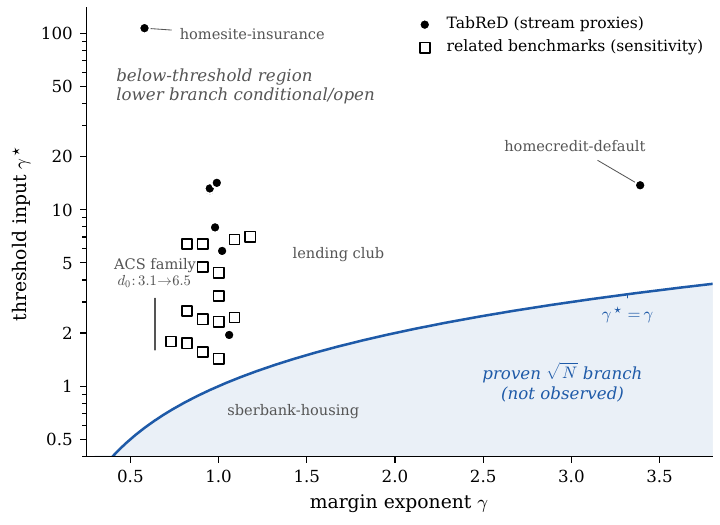}
\caption{Proxy estimates and sensitivity scenarios relative to the
margin boundary (log vertical scale). Filled circles use the stream-level
$\widehat\gamma$ and $\widehat\alpha_s$ available for eight TabReD streams.
Open squares are not measured phase locations: they fix
$(\alpha_s,\gamma)=(1,1)$ and plot $\gamma^\star=d_0/2$ for fifteen related
benchmarks. Horizontal dodging of the squares (at most $0.27$) is purely
for legibility. The shaded region $\gamma>\gamma^\star$ is the strict
sufficient branch for $\sqrt N$ regularity; equality is not classified by
the present theory. All proxy points and all sensitivity scenarios fall on
the below-threshold side, conditional on their stated inputs; the lower
branch there is conditional or open, depending on $\gamma$.}
\label{fig:map}
\end{figure}

\begin{table}[t]
\centering
\scriptsize
\begin{tabular}{llrrrrl}
\toprule
data set & domain & $d_0$ & $\alpha_s$ input & $\gamma$ input & $\gamma^\star$ & status \\
\midrule
\textsc{ecom-offers}        & TabReD (cls) & 3.61  & 0.14 & 0.95 & 12.89 & proxy below \\
\textsc{sberbank-housing}   & TabReD (reg) & 3.01  & 0.77 & 1.06 & 1.95  & proxy below \\
\textsc{weather}            & TabReD (reg) & 11.10 & 0.95 & 1.02 & 5.84  & proxy below \\
\textsc{cooking-time}       & TabReD (reg) & 11.12 & 0.70 & 0.98 & 7.94  & proxy below \\
\textsc{homesite-insurance} & TabReD (cls) & 10.69 & 0.05 & 0.58 & 106.90& proxy below \\
\textsc{homecredit-default} & TabReD (cls) & 13.76 & 0.50 & 3.39 & 13.76 & proxy below \\
\textsc{delivery-eta}       & TabReD (reg) & 16.63 & 0.63 & 0.95 & 13.20 & proxy below \\
\textsc{maps-routing}       & TabReD (reg) & 17.61 & 0.62 & 0.99 & 14.20 & proxy below \\
\midrule
\textsc{cicids}             & security & 2.86 & [1] & [1] & 1.43 & sensitivity below \\
\textsc{acs\_employment}    & census   & 3.11 & [1] & [1] & 1.56 & sensitivity below \\
\textsc{acs\_unemployment}  & census   & 3.49 & [1] & [1] & 1.75 & sensitivity below \\
\textsc{unsw\_nb15}         & security & 3.58 & [1] & [1] & 1.79 & sensitivity below \\
\textsc{ieee\_fraud}        & fraud    & 4.64 & [1] & [1] & 2.32 & sensitivity below \\
\textsc{acs\_publiccoverage}& census   & 4.78 & [1] & [1] & 2.39 & sensitivity below \\
\textsc{bike\_sharing}      & reg      & 4.90 & [1] & [1] & 2.45 & sensitivity below \\
\textsc{credit\_card\_fraud}& fraud    & 5.32 & [1] & [1] & 2.66 & sensitivity below \\
\textsc{acs\_income}        & census   & 6.49 & [1] & [1] & 3.25 & sensitivity below \\
\midrule
\textsc{diabetes\_readmission} & medical & 8.79  & [1] & [1] & 4.40 & sensitivity below \\
\textsc{brfss\_diabetes}       & medical & 9.46  & [1] & [1] & 4.73 & sensitivity below \\
\textsc{sepsis}                & medical & 12.75 & [1] & [1] & 6.38 & sensitivity below \\
\textsc{baf}                   & fraud   & 12.81 & [1] & [1] & 6.41 & sensitivity below \\
\textsc{lending\_club}         & credit  & 13.55 & [1] & [1] & 6.78 & sensitivity below \\
\textsc{mimic\_iv\_mortality}  & medical & 14.05 & [1] & [1] & 7.03 & sensitivity below \\
\bottomrule
\end{tabular}
\caption{Inputs to Figure~\ref{fig:map}. In the top block,
$\widehat\alpha_s$ and $\widehat\gamma$ are noisy stream-level proxies and
$\gamma^\star=d_0/(2\widehat\alpha_s)$ is recomputed from the displayed
values. In the lower blocks, bracketed values are assumptions for a
sensitivity analysis, so $\gamma^\star=d_0/2$ by construction. All fifteen
related benchmarks use deduplicated $d_0$; the six high-$d_0$ rows are
$99.97$--$100\%$ unique, so their correction is bounded by subsample noise
(Appendix~\ref{app:J}). ``Below''
means $\gamma<\gamma^\star$ for the stated inputs, not a population-level
phase classification.}
\label{tab:J-map}
\end{table}

For the TabReD streams, the filled-circle locations should be read as
estimated inputs, not exact coordinates. Six of the eight
$\widehat\gamma$ values lie in $[0.95,1.06]$, while the two-range slope check of
Appendix~\ref{app:J} disagrees on five streams. Nevertheless, if the
rough-field bound $\alpha_s\le1$ is accepted and the measured margins are
representative, then $2\alpha_s\widehat\gamma<d_0$ for all eight streams;
the closest is \textsc{sberbank-housing}, with a favorable left side of
$2.12$ (at $\alpha_s=1$; $1.63$ at the measured $\widehat\alpha_s=0.77$)
against $d_0=3.01$. This is conditional evidence for the difficult
branch, not an uncertainty-quantified phase assignment. The known
biases run in the favorable direction for this reading: the $\alpha_s$
proxy over-states smoothness (Appendix~\ref{app:J}), inflating the left
side, so the below-threshold comparison is conservative.

For the fifteen related benchmarks, neither $\alpha_s$ nor $\gamma$ is
observed. Their open squares answer a narrower question, namely whether, under the
generic transversal-margin hypothesis $\gamma=1$ and the favorable roughness
value $\alpha_s=1$, the estimated dimension alone would place the scenario
below threshold. The answer is yes for every displayed $d_0>2$. Changing either
assumption can change that answer, so these rows cannot support the claim
that all fifteen data sets have measured non-regular walls. All fifteen
were re-estimated after deduplication: the low-uniqueness members of the
nine re-measured data sets moved materially (two left the low-dimensional
reading entirely), while the six
high-$d_0$ rows are duplicate-free to four decimals ($99.97$--$100\%$
unique) and move by at most $0.10$, comparable to the TabReD-8
robustness check (at most $0.08$).

The study therefore supplies a consistent empirical warning, not a
population law: the observed TabReD proxies and a favorable
sensitivity scenario do not populate the regular branch. Establishing an
empirical phase map would require repeated labeled windows, uncertainty
intervals for $(d_0,\alpha_s,\gamma)$, and sensitivity to the dimension and
variogram estimators. The semi-synthetic experiment of the next subsection
supplies what the stream proxies cannot: controlled inputs whose effective
exponents can be calibrated and checked numerically. The ACS ordering (employment, unemployment, public
coverage, income) remains a useful descriptive pattern in $d_0$, but is not
by itself a validation of the theoretical threshold.

\subsection{Semi-Synthetic Illustration of the Threshold Mechanism}
\label{sec:phasesim}

The stream proxies of Section~\ref{sec:map} cannot manipulate
$(d_0,\alpha_s,\gamma)$; a generated control can vary them separately.
Covariates are
$Z\sim\mathrm{Unif}[0,1]^{d_0}$ with $\phi$ the identity, so the
doubling dimension is $d_0$. The signed drift is
$g(z)=\mathrm{sign}(W(z_1))\,|W(z_1)|^{1/\gamma}$, a function of the first
coordinate $z_1$ only,
where $W$ is a normalized Weierstrass field of exponent
$\alpha_0=\gamma\alpha_s\le1$. The construction targets a
H\"older-$\alpha_s$ drift magnitude and margin exponent $\gamma$; finite-grid
self-tests report the corresponding effective slopes rather than certify
the asymptotic exponents. Labels are
Bernoulli with $\eta_j=\tfrac12+0.45\,g$ over four windows. The novelty
representer is $w\equiv1$ and known, which removes the
representer/ratio channel entirely; the simulation therefore
isolates the sign-recovery mechanism that creates the threshold, not
the DML nuisance theory. The estimator is the specified two-stage procedure:
per-window anchor split, Stage-1 $\kappa$-NN with the non-adaptive
tuning $\kappa\asymp k^{2\alpha_s/(2\alpha_s+d_0)}$ (up to the
logarithmic factor) of \hyperref[lem:C11prime]{Lemma~C.11$'$}, cross-fitted sign recovery, sign-corrected residual
mean. Four arms vary one parameter each across
$\gamma^\star=d_0/(2\alpha_s)$. The design, prediction, and falsification criterion were fixed before
the sweep.

\begin{center}\small
\begin{tabular}{lccccll}
\toprule
arm & $d_0$ & $\alpha_s$ & $\gamma$ & $\gamma^\star$ & predicted & measured \\
\midrule
A (above) & 1 & 0.75 & 1 & 0.67 & $-0.500$ & $-0.509\pm0.024$ \\
B (below) & 8 & 0.75 & 1 & 5.33 & $-0.158$ & $-0.138\pm0.002$ \\
C (below) & 1 & 0.375 & 0.75 & 1.33 & $-0.375$ & $-0.305\pm0.008$ \\
D (above) & 1 & 0.375 & 2 & 1.33 & $-0.500$ & $-0.515\pm0.023$ \\
\bottomrule
\end{tabular}
\end{center}

Slopes are weighted log-log fits of RMSE against $N$ over the largest
five sample sizes ($N$ up to $131{,}072$; $100$--$400$ seeds per
point). The above-threshold arms are variance-dominated with slopes at
the regular $-\tfrac12$; the below-threshold arms are bias-dominated
and sit $181$ and $24$ standard errors away from $-\tfrac12$, with
point values on the same side of $-\tfrac12$ as, though not at, the
candidate bias-rate prediction
$-a(1+\gamma)$. The experiment illustrates the sign-recovery mechanism; it
does not establish the conditional lower bound. Arms A/B
differ only in $d_0$, arms C/D only in $\gamma$: each parameter alone moves
the scaling across the boundary. The generating code, its outputs, and the pre-registered design are
part of the reproduction deposit (Section~\ref{sec:setup}).

%% file: figures/fig_twowalls.tex
\begin{tikzpicture}[>=stealth, line cap=round]
  \definecolor{figblue}{HTML}{1E5AA8}
  \fill[figblue!8] (0,0) -- (7.2,0) -- (8.8,1.8) -- (1.6,1.8) -- cycle;
  \draw[black!45, line width=0.5pt] (0,0) -- (7.2,0) -- (8.8,1.8) -- (1.6,1.8) -- cycle;
  \node[black!55, font=\footnotesize, anchor=east] at (8.05,1.20)
    {readouts of the frozen $\phi$};
  \coordinate (etabar) at (4.2,0.95);
  \coordinate (eta)    at (4.2,4.0);
  \coordinate (hzero)  at (1.9,0.55);
  \fill (etabar) circle (1.7pt);
  \node[font=\footnotesize, anchor=north west, xshift=2pt, yshift=-2pt] at (etabar)
    {$\bar\eta_W=\mathbb E[\eta_W\mid\phi]$};
  \fill (eta) circle (1.7pt);
  \node[font=\footnotesize, anchor=north east, xshift=-2pt, yshift=-3pt] at (eta) {$\eta_W$};
  \fill (hzero) circle (1.7pt);
  \node[font=\footnotesize, anchor=east, xshift=-3pt] at (hzero) {$h_0\circ\phi$};
  \draw[dashed, line width=0.9pt, figblue] (eta) -- (etabar);
  \node[figblue, font=\footnotesize, anchor=west, align=left] at (4.35,2.5)
    {\textbf{Wall B} $=\mathbb E[\operatorname{Var}(\eta_W\mid\phi)]$\\ (representation)};
  \draw[black!60, line width=0.5pt] (4.02,1.13) -- (4.20,1.20);
  \draw[black!60, line width=0.5pt] (4.02,1.13) -- (4.02,0.95);
  \draw[->, line width=0.9pt] (hzero) -- (etabar);
  \node[font=\footnotesize, anchor=north] at (3.55,0.53)
    {correctable (measured channels C1--C4)};
  \draw[|-|, line width=0.9pt] (2.9,4.0) -- (5.5,4.0);
  \node[font=\footnotesize, align=center, anchor=south] at (4.2,4.15)
    {identified set $I(\beta)$, $\operatorname{diam}=D$\\
     (\textbf{Wall A}, raw form $D$ --- assumed, not measured)};
  \node[font=\footnotesize, anchor=north] at (4.4,-0.55)
    {$\underbrace{\|\eta_W-h_0\circ\phi\|^2}_{\text{excess risk}}
      \;=\;\underbrace{\|\eta_W-\bar\eta_W\|^2}_{\text{Wall B}}
      \;+\;\underbrace{\|\bar\eta_W-h_0\circ\phi\|^2}_{\text{correctable}}
      \qquad(\text{orthogonal; coupled only via }\mathrm{WallB}\le\mathbb E[\beta^2])$};
\end{tikzpicture}

%% file: figures/fig_borrowed.tex
\begin{tikzpicture}[>=stealth, x=1cm, y=3.4cm, line cap=round]
  \definecolor{figblue}{HTML}{1E5AA8}
  \draw[->, line width=0.5pt] (0,0) -- (0,1.08);
  \foreach \y in {0,0.5,1} {
    \draw[line width=0.5pt] (-0.07,\y) -- (0,\y);
    \node[left, font=\scriptsize] at (-0.09,\y) {\y};
  }
  \node[font=\footnotesize, rotate=90, anchor=south] at (-0.7,0.5) {$\eta$};
  \draw[black!25, dashed, line width=0.5pt] (0,0.5) -- (9.2,0.5);
  \node[right, font=\scriptsize, black!55] at (9.22,0.5) {$\eta_0=0.5$};
  \draw[figblue, line width=2.2pt] (1.6,0.2) -- (1.6,0.8);
  \draw[figblue, line width=0.8pt] (1.5,0.2) -- (1.7,0.2);
  \draw[figblue, line width=0.8pt] (1.5,0.8) -- (1.7,0.8);
  \node[left, font=\scriptsize, figblue] at (1.46,0.8) {0.8};
  \node[left, font=\scriptsize, figblue] at (1.46,0.2) {0.2};
  \draw[<->, line width=0.5pt, black!60] (1.95,0.2) -- (1.95,0.8);
  \node[right, font=\scriptsize, black!70, align=left] at (2.02,0.5)
    {$w_\beta(a)$\\ $=0.60$};
  \node[font=\footnotesize] at (1.6,0.07) {$a$:\ $\beta=0.3$};
  \fill (4.2,0.5) circle (1.8pt);
  \node[font=\footnotesize] at (4.2,0.07) {$b$:\ $\beta=0$ (pinned)};
  \draw[->, line width=0.9pt] (5.0,0.5) -- (6.5,0.5);
  \node[above, font=\scriptsize, align=center, yshift=2pt] at (5.75,0.5)
    {$\phi(a)=\phi(b)=z_0$};
  \node[below, font=\scriptsize, align=center, black!70, yshift=-2pt] at (5.75,0.5)
    {$\bar\eta_q(z_0)=\tfrac12\eta_q(a)+0.25$};
  \draw[figblue, line width=2.2pt] (7.4,0.35) -- (7.4,0.65);
  \draw[figblue, line width=0.8pt] (7.3,0.35) -- (7.5,0.35);
  \draw[figblue, line width=0.8pt] (7.3,0.65) -- (7.5,0.65);
  \node[left, font=\scriptsize, figblue] at (7.26,0.68) {0.65};
  \node[left, font=\scriptsize, figblue] at (7.26,0.32) {0.35};
  \draw[<->, line width=0.5pt, black!60] (7.75,0.35) -- (7.75,0.65);
  \node[right, font=\scriptsize, black!70, align=left] at (7.82,0.5)
    {$\mathrm{WallA}$\\ $=0.30$};
  \node[font=\footnotesize] at (7.4,0.07) {$z_0$};
\end{tikzpicture}

%% file: sec_beyond.tex
\section{Beyond the Boundary: Diagnostics and Decision Rules}
\label{sec:beyond}

This section develops three procedures from quantities that remain
computable under the boundary results.

\subsection{The Local-Correction Screening Rule}
\label{sec:certificate}

The local-correction screening rule is a per-region, per-window computation, and it asks for
no labels beyond those the corrector it gates already consumes.
\emph{Inputs requiring no labels of their own.} The frozen readout
$\eta_0$ comes with the model. The drift-budget profile $\beta$ is a
prior, by Theorem~\ref{thm:irred}, and not a measurement, unless it is
upgraded to the history-transferred estimate $\hat D_{\mathrm{cov}}$ of
Section~\ref{sec:dichotomy} under the remaining assumptions of
Section~\ref{sec:assumptions} that Theorem~\ref{thm:dichotomy} invokes
((A1) stationarity, (A4)
shape, (A5$'$) exogeneity, (A6) coverage, (A0) MCAR anchors, with the
uncovered mass $\hat\pi_{\mathrm{out}}$ reported alongside). The remaining
inputs are the embedding $\phi$ and the local geometry: neighborhood
density and freshness of the labeled buffer, and the deduplicated local
intrinsic dimension of Section~\ref{sec:dedup}.
\emph{The noise floor.} For classification it is the Bernoulli quantity
$\mathbb E[p(1-p)]$ over the frozen predictions, which needs no labels at
all. For regression it is the short-range decoupled semivariance over the
stream's residuals, read at the smallest available distances among pairs
at least $1{,}000$ steps apart and without extrapolation to zero distance
(Section~\ref{sec:twins} estimates the extrapolated nugget instead, and
reports it as a range). Our retrospective evaluation computes that floor over
the whole stream; a deployment would compute it over the revealed prefix.
From these, three quantities are computed per region.

\begin{enumerate}[leftmargin=*]
\item \emph{Decision-flip fraction}
$\Pr[\,|\eta_0-\tfrac12|\le\beta\,]$: the mass on which drift within
budget could change the decision (the identified-set diameter in the
decision discrepancy $\Delta_{01}$). Where it is near zero, even the full
assumed budget cannot move decisions, and freezing is safe in the decision
currency whatever the wall's height in the mean currency.
\item \emph{Price of freeze}
$\rho_\beta(x)=2\max\!\big(0,\beta(x)-|\eta_0(x)-\tfrac12|\big)$, the
worst-case regret of trusting the frozen model at $x$. Its region average
prices the freeze option, so adaptation is worth its risk only where that
price is material.
\item \emph{Deployable Wall~A}, via the fiber-average formula of
Theorem~\ref{thm:borrowed}:
$\mathrm{WallA}(\beta;R)=\mathbb E_{z\sim\nu_R}\!\big[\mathbb E_{p_W}[\,w_\beta\mid\phi=z\,]\big]$,
where $w_\beta=\min(1,\eta_0+\beta)-\max(0,\eta_0-\beta)$ is the post-clip
interval width and the inner average runs over the full fiber population.
For injective $\phi$ (generic at embedding dimension $192$) fibers
are singletons and this is the plain plug-in average of $w_\beta$ over the
region's unlabeled draws, accurate to $O_P(m^{-1/2})$ given the budget
profile. For a coarse or resolution-limited $\phi$ the inner average is
estimated by $k$-NN in $\phi$ over the \emph{full} window (the operational
form of the borrowing), at a rate graded partial in Appendix~F of Online
Appendix~1 (Corollary~F.5). The
mandatory order of operations is to average the post-clip width across the
fiber, never to average $\beta$ first and clip after. The post-clip width
is concave in $\beta$, so the latter overstates the wall.
\end{enumerate}

These quantities define the \emph{local-correction screening rule}. The rule
adapts where the local estimate is well determined against the calibrated floor and the flip
fraction is material, and freezes where the identified set is wide (the wall
output is then a restatement of the prior, and acting on it would be
acting on an assumption) or the buffer sparse or stale. When the budget is
transferred rather than assumed, its semantics must be reported with it.
The mean-response forecast is a calibrated forecast of typical drift that
upper-bounds the mean realized width, not a worst-case certificate.

The two gates that decide every verdict in the evaluation below (the
calibrated noise floor and the local-determinacy statistic) are empirical
instruments, not consequences of the Part~II theorems. What the theory
contributes to the rule is interpretation: Theorem~\ref{thm:irred} makes
the budget $\beta$ a prior, the three quantities above inherit that
status, and the theory thereby tells the operator which of the rule's
readings are measurements and which restate an assumption. On the
evaluated streams the prior-dependent quantities corroborate verdicts the
empirical gates have already forced (the zero flip fraction on
\textsc{homecredit-default} confirms a freeze the determinacy gate
decided on its own); they decide none.

\noindent\emph{The rule on all eight streams.}
We call this a screening rule rather than a certificate because it has no
finite-sample false-adapt guarantee. The evaluation below is retrospective;
Section~\ref{sec:conclusion} collects the corresponding soundness
limitations.
We apply the rule to all eight streams, not only to the harm case, so that
its false-freeze behavior is measured alongside its false-adapt
behavior. Table~\ref{tab:certificate} instantiates the two
label-free gates against the realized outcomes of
Table~\ref{tab:channels}. The first gate is room above the calibrated
floor ($1-\text{floor}$); the second is a local estimate determined
out-of-sample under the deployed strictly-past retrieval (past-only
buffer, $k=20$, unweighted neighbor mean). The rule is
fixed before the outcomes are consulted, and the eight verdicts are
unchanged over the whole box $\tau_{\mathrm{room}}\in[0,0.108)$,
$\tau_{\mathrm{sig}}\in[0,0.0616)$ of the two thresholds. The operating
point is $\tau_{\mathrm{room}}=0.10$ with $\tau_{\mathrm{sig}}=0$, that is,
a strict positivity test on the local statistic. \textsc{weather} is the
stream nearest a boundary of that box: its room is $0.108$, so a room
threshold of $0.2$ would freeze it.

In this retrospective evaluation, \emph{the rule makes no false adapt on any
stream}: every stream it clears improved under
adaptation ($+3.94\%$ and $+3.80\%$ RMSE, $+0.74$ AUC points), and the one
stream that is harmed is frozen. On \textsc{homecredit-default} the
freeze is decided by the local-determinacy gate alone
($R^2_{\mathrm{causal}}=-0.021$): the room gate passes, because a
Bernoulli floor of $0.85$ still leaves $0.15$ of nominal room. The
decision-currency quantities corroborate the same verdict independently.
The flip fraction is identically zero for every budget up to $\beta=0.25$
(the predictions concentrate around a mean of $0.019$ with no mass in
$[0.25,0.75]$), so within-budget drift cannot move a single decision, and
the price of freeze is zero. The calibrated floor changes how the stream
\emph{reads} rather than what is decided here: a variogram floor of the
kind the regression streams use reads $0.73$ on this stream against the
Bernoulli calibration's $0.85$, that is $0.27$ of nominal room against
$0.15$, which is the difference between a stream that looks worth
adapting and one that barely does. Even so, here the local-determinacy
gate refuses either way.

One of the three clearances warrants a caution the other two do not. On
\textsc{homesite-insurance} the clearing statistic is
$R^2_{\mathrm{causal}}=+0.070$ on the growing buffer, but the stream's
strict-split ceiling is $0.016$ and falls to $0.006$ under the
temporal-gap stress test of Section~\ref{sec:leak}, so the local
squared-error signal that clears the gate may be partly residual temporal
autocorrelation rather than durable spatial structure. The realized
$+0.74$pp gain traveled through the ranking channel C4, which the gate
does not observe: the verdict is right, but for reasons the rule cannot
claim credit for. A stress-tested variant of the determinacy gate
(scoring the local $R^2$ under a temporal gap) would demote this stream
to a marginal call and is the natural tightening; we flag rather than
adopt it, since changing the gate after seeing the outcomes would forfeit
the fixed-in-advance status of Table~\ref{tab:certificate}.

The price of that safety is conservatism, and it has two distinct causes
worth separating. Three regression streams are frozen although adaptation
helped them ($+0.83\%$, $+0.66\%$, $+0.59\%$ RMSE forgone). Two of the three
are attributed by Table~\ref{tab:channels} to the \emph{global} debias
channel C1 (on \textsc{delivery-eta}, $93\%$ of the probe corrector's gain is a running
mean), and the third, \textsc{maps-routing}, to a C1/partial-scale-C3
mixture; in all three the rule reports that the \emph{local} field
carries no out-of-sample signal, which for the first two is the full explanation and for the third is a partial one. A gate for the inexpensive
global channels is a different and easier instrument, which we do not
build here. The fourth false freeze is the one our own theory predicts:
\textsc{ecom-offers} gains $+0.86$ AUC points through ranking
recalibration (C4) while its squared-error local field is empty, and a
loss-denominated gate cannot see a ranking gain. This is the currency
limitation of Section~\ref{sec:gates} recurring inside the rule
itself, now with a price attached. For ranking deployments the
loss-denominated components are necessary but not sufficient, and a
practitioner who freezes on them forgoes gains of this size.

Across the eight retrospective streams, each of the four forgone gains
is of the same order as the single avoided harm. The rule's case rests on loss asymmetry rather than on
expected value; it is built for deployments where one undetected
degradation costs more than several missed sub-point gains.

\begin{table}[t]
\centering\small
\begin{tabular}{llrrlll}
\toprule
stream & task & floor & $R^2_{\mathrm{causal}}$ & verdict & realized & outcome \\
\midrule
\textsc{weather}            & reg & $0.89$ & $+0.052$ & adapt  & $+3.94\%$ & gain kept \\
\textsc{sberbank-housing}   & reg & $0.66$ & $+0.021$ & adapt  & $+3.80\%$ & gain kept \\
\textsc{homesite-insurance} & cls & $0.88$ & $+0.070$ & adapt  & $+0.74$pp & gain kept \\
\textsc{homecredit-default} & cls & $0.85$ & $-0.021$ & freeze & $\mathbf{-2.28}$\textbf{pp} & harm avoided \\
\textsc{cooking-time}       & reg & $0.92$ & $-0.009$ & freeze & $+0.83\%$ & forgone (C1) \\
\textsc{delivery-eta}       & reg & $0.97$ & $-0.018$ & freeze & $+0.66\%$ & forgone (C1, $93\%$ probe) \\
\textsc{maps-routing}       & reg & $0.88$ & $-0.006$ & freeze & $+0.59\%$ & forgone (C1/C3 mix) \\
\textsc{ecom-offers}        & cls & $0.87$ & $-0.022$ & freeze & $+0.86$pp & forgone (C4 ranking) \\
\bottomrule
\end{tabular}
\caption{The local-correction screening rule on all eight streams, with the thresholds
fixed before the outcomes were consulted, under the rule stated
above: adapt only
if $1-\text{floor}\ge\tau_{\mathrm{room}}$ and
$R^2_{\mathrm{causal}}>\tau_{\mathrm{sig}}(1-\text{floor})$, at the
operating point $\tau_{\mathrm{room}}=0.10$, $\tau_{\mathrm{sig}}=0$;
verdicts are constant over $\tau_{\mathrm{room}}\in[0,0.108)$,
$\tau_{\mathrm{sig}}\in[0,0.0616)$. Floors are the calibrated estimators
of Section~\ref{sec:certificate}, as fractions of residual variance;
$R^2_{\mathrm{causal}}$ is the local-determinacy statistic of
Appendix~\ref{app:J}, the strictly-past $k$-NN $R^2$ under the deployed
retrieval; it is the out-of-sample $R^2$ of the streaming corrector
itself (block $64$, first $100$ points reserved; Appendix~\ref{app:J}) and
not the growing-buffer oracle ceiling of Section~\ref{sec:leak}, which on
\textsc{weather} reads $0.067$ against the $+0.052$ printed here.
Gate inputs and the realized column are computed on the
single-prior probe streams (Appendix~\ref{app:J}). The rule speaks only to the local
channel C3.}
\label{tab:certificate}
\end{table}

\noindent\emph{Formal statements:} the decision-flip and price-of-freeze
functionals are the canonical discrepancy instances of the program
contract (Online Appendix~1, ``the formal program'');
the deployable estimand, its plug-in, and the clip/average
non-commutation are Theorem~F.3, Corollary~F.5, and
Remark~F.5, Appendix~F of Online Appendix~1; the budget-transfer
option, its error budget, and its forecast-not-certificate semantics are
Theorem~C.15 and Remark~C.16,
Appendix~C.

\subsection{The Degradation-Attribution Procedure}
\label{sec:attribution-procedure}

Given a labeled audit sample (periodic or delayed labels suffice), the
orthogonal split of Theorem~\ref{thm:wallB} operationalizes as a three-way
attribution of a window's excess error: the \emph{Wall-B share}
$\widehat{\mathbb E}[\operatorname{Var}(\eta_W\mid\phi)]$ (at a stated fiber
resolution), the \emph{correctable share} (the time-respecting-oracle ceiling of the
combined channels; the worked example below measures its local, C3 slice,
folding the temporal and ranking channels into the remainder), and the
\emph{Wall-A remainder}. Two estimates are required; the third share follows by subtraction.

\emph{Step 1 (Wall-B share, needs the audit labels).} Because an injective
embedding has singleton fibers and hence raw
$\mathrm{WallB}\equiv 0$, the deployable object is the resolution-$\rho$
deficit $\mathrm{WallB}_\rho$: coarsen $\phi$ to $\rho$-balls, estimate
the conditional mean of the audit labels at the finest resolution the
sample supports, and report the average variance of that mean within the
$\rho$-cells. The resolution $\rho$ must be stated
with the estimate ($\mathrm{WallB}_\rho$ is monotone in $\rho$ and
recovers $\mathrm{WallB}$ as $\rho\to 0$), so the share is meaningful
only relative to the fiber scale at which the readout is actually
operated; Appendix~\ref{app:J} gives the selection protocol that ties
$\rho$ to the deployed retrieval scale. A nonzero Wall-B share does not disable the history-transfer
machinery of Section~\ref{sec:dichotomy}. That theory applies verbatim to
the $\phi$-reachable drift, with the within-fiber discrepancy priced by
$\mathrm{WallB}^{1/2}$ and reported separately (the (A3) disclosure of
Appendix~C, Online Appendix~1).

\emph{Step 2 (Wall-A envelope, label-free).} The borrowed-certainty
plug-in of Section~\ref{sec:certificate}, computed from
$(\phi,\eta_0,\beta)$ alone, prices the Wall-A remainder \emph{ex ante}.
It is the diameter of the identified set under the assumed budget (an
upper envelope on what unidentified drift could contribute, not an
estimate of the realized share). The realized Wall-A share is then the
remainder of the audit window's excess error after subtracting the Wall-B
share and the correctable share, the latter already measured as the time-respecting
(leak-free) ceiling of the combined channels under the strict out-of-time
protocol of Section~\ref{sec:leak}.

Each share suggests a different intervention: representation changes for
Wall~B, engineering changes for the correctable share, and additional labels
for the Wall-A remainder. Table~\ref{tab:shares} works
the attribution on all eight streams, in the one currency common to all three
shares (squared error as a fraction of residual variance). The measured Wall-B share is indistinguishable
from zero on every stream: the fine-scale arm is strictly worse out of
sample everywhere, so no conditional-mean structure below the deployed
retrieval scale is detectable at these sample sizes, and representation
work is not the binding constraint. The main cross-stream differences lie in the split between the local correctable share and
the remainder: \textsc{weather} and \textsc{homesite-insurance} are the
only streams where the local channel has anything to recover ($0.025$
of $0.108$ and $0.016$ of $0.125$),
\textsc{sberbank-housing} has the largest room ($0.34$) with none of it
locally correctable (its realized gain traveled through freshness, C2,
which a spatial split cannot see), and on \textsc{homecredit-default}
the entire $0.15$ is remainder and naive correction harms. The remainder necessarily bundles the realized Wall-A share with the
temporal channels (C1, C2) and ranking-currency gains (C4) that a strict
spatial split cannot register. The measurements therefore do not support a
uniform remedy across streams.

\begin{table}[t]
\centering\small
\setlength{\tabcolsep}{5pt}%
\begin{tabular}{lrrrr}
\toprule
stream & room & correctable (local) & $\mathrm{WallB}_\rho$ & remainder \\
\midrule
\textsc{weather}            & $0.108$ & $0.025$ & $0.000$ & $0.084$ \\
\textsc{sberbank-housing}   & $0.337$ & $0.000$ & $0.000$ & $0.337$ \\
\textsc{cooking-time}       & $0.075$ & $0.000$ & $0.000$ & $0.075$ \\
\textsc{delivery-eta}       & $0.032$ & $0.000$ & $0.000$ & $0.032$ \\
\textsc{maps-routing}       & $0.117$ & $0.000$ & $0.000$ & $0.117$ \\
\textsc{ecom-offers}        & $0.134$ & $0.000$ & $0.000$ & $0.134$ \\
\textsc{homesite-insurance} & $0.125$ & $0.016$ & $0.000$ & $0.108$ \\
\textsc{homecredit-default} & $0.146$ & $0.000$ & $0.000$ & $0.146$ \\
\bottomrule
\end{tabular}
\caption{Worked three-way attribution on the eight probe streams, as
fractions of residual variance on the evaluation half. Room is $1$ minus
the calibrated floor (Table~\ref{tab:certificate}); the correctable
column is the strict out-of-time $k{=}20$ ceiling of
Table~\ref{tab:channels} (its local, C3 slice); $\mathrm{WallB}_\rho$
is the additional variance a finest-supported ($m{=}5$) readout explains
over the deployed scale, out of sample, at the protocol resolution
$\rho$ (median $20$th-neighbor distance, $1.9$--$4.0$ across streams;
Appendix~\ref{app:J}); the remainder bundles the realized Wall-A share
with the temporal and ranking channels a spatial split cannot see. Room
is measured on the full rule-evaluation window and treated as a common
fraction under approximate stationarity of the residual variance across
halves.
Rows may differ in the last digit from column arithmetic; shares are
rounded independently; the computation is part of the reproduction
deposit (Appendix~\ref{app:J}).}
\label{tab:shares}
\end{table}

\noindent\emph{Formal statements:} the Pythagorean split, the
$\mathrm{WallB}_\rho$ object, and the orthogonality qualifications are the
two-walls block of the program contract
(Online Appendix~1, ``the formal program''); the
$\phi$-reachable estimand and its Wall-B disclosure are assumption (A3) of
Appendix~C of Online Appendix~1; the deployable Wall-A envelope is
Corollary~F.5, Appendix~F.

\subsection{The Price of Labels}
\label{sec:pricelist}

When the Wall-A share dominates, additional labels are required. The
boundary theory gives two rates.

\emph{Region average, parametric.} $k$ anchors drawn i.i.d.\ from the
region pin the region-mean wall to
$\min\!\big(2\bar\beta_R,\ \tilde O(k^{-1/2})\big)$. Averaging a bounded
quantity over a fixed region is a parametric problem, so no geometry
enters the rate. The pure $k^{-1/2}$ collapse is operative only once
$k\gtrsim\bar\beta_R^{-2}$. Below that threshold the assumed budget still
binds, so the pin sits at the prior level $2\bar\beta_R$, not at the
anchor rate.

\emph{Pointwise, nonparametric.} Recovering the wall's height point by
point requires anchors \emph{plus} Lipschitz-in-$\phi$ smoothness, at the
local nonparametric rate $\tilde O(k^{-1/(2+d_0)})$. Each point can only
be estimated from anchors nearby in the embedding, and the intrinsic
dimension $d_0$ of the region's support in $\phi$-space governs how many
anchors are ever local. This is the same deduplicated $d_0$ that governs channel C3 and the margin
threshold. Higher intrinsic dimension both weakens unlabeled correction and
raises the label requirement.

Up to suppressed constants and logarithmic factors, the rates imply the
following costs at two points of the proxy map
(Table~\ref{tab:J-map}): past the $\bar\beta_R^{-2}$ threshold, halving
region-averaged uncertainty costs $4\times$ the labels, whatever the
geometry. At $d_0=3$ (the floor among the measured streams, where
\textsc{sberbank-housing} sits), halving \emph{pointwise} uncertainty
costs $2^{2+3}=32\times$. At $d_0=13$ (the
\textsc{homecredit-default} / \textsc{lending\_club} neighborhood, $d_0=13.8$
and $13.6$; the map's top, \textsc{maps-routing}, sits at $17.6$), it costs
$2^{2+13}\approx 3\times 10^4\times$. Equivalently, $N=10^4$ labels yield
region-averaged resolution of order $N^{-1/2}=10^{-2}$ on either stream,
but pointwise resolution of order $N^{-1/5}\approx 0.16$ at $d_0=3$ and
$N^{-1/15}\approx 0.54$ at $d_0=13$. On the high-dimensional streams a
pointwise drift map is out of reach at any realistic labeling budget,
while the regional audit is unaffected. Anchors yield the most where
the pointwise diameter is large and the region straddles the decision
boundary. Under the measured $d_0\approx 3$--$18$, pointwise certification is
costly, whereas region-averaged certification is comparatively inexpensive.
The labeling budget should therefore target regional wall audits and
decision-flip monitoring rather than pointwise drift maps.

\noindent\emph{Formal statements:} both rates are Branch~2(a) of the
decisive lemma in the program contract
(Online Appendix~1, ``the formal program''), with the
generic intrinsic dimension there instantiated by the deduplicated
$d_0$ of Table~\ref{tab:J-map} (Appendix~\ref{app:J}); the multi-window
refinement of the pointwise line (the anchor rate
$k^{-\alpha_s/(2\alpha_s+d_0)}$ and its margin-graded version) is given by
the margin-boundary results of Section~\ref{sec:dichotomy},
Appendix~\ref{app:H}, and Appendices~C, G, and~I of Online Appendix~1.

\subsection{Deployment Recommendations}
\label{sec:posture}

Together, the three procedures suggest the following deployment practice. Deploy the combined channels where an
out-of-time validation shows a positive gain. Screen the local channel
with the rule of Section~\ref{sec:certificate}: on the eight
retrospective streams it made no false adapt, at the measured price of
freezing four streams whose gains traveled wholly or partly through
channels it does not see; gates for the temporal and ranking channels are open instruments,
so in ranking deployments the deployment currency itself must be
monitored (Section~\ref{sec:gates}). Attribute residual degradation with
the three-way split, and allocate labels according to the price of labels when the
attribution puts the mass on the identifiability wall.

\section{Conclusion}
\label{sec:conclusion}

We studied two opposing claims about adaptation under drift: that it recovers
drift, and that unlabeled adaptation cannot help. The empirical results
support neither claim as a general rule. Across eight temporal streams, the
observed improvements and one harm case can be attributed to a small set of
mechanisms under a leakage-corrected protocol. Under the stated agnostic
drift class, the theory then separates uncertainty caused by
non-identifiability from error caused by the frozen representation. The
margin-indexed analysis proves the regular branch and states the remaining
condition in the lower-bound program explicitly. TabReD proxies and a
sensitivity analysis on fifteen additional benchmarks place the examined
settings on the difficult side under their stated assumptions, but do not
establish a universal empirical phase law. These distinctions determine
whether a deployment should improve the correction method, change the
representation, or acquire labels.

Several limitations bound these claims. The equality case
$\gamma=\gamma^\star$ is not covered by the current log-rate argument.
Above the threshold, the efficient constant under overlapping windows is
graded partial, and the rate statement itself holds only under the
stated nuisance-rate hypotheses, (H6) included, whose derivation from
primitives is graded partial. Below the threshold, the lower bound at $\gamma=0$ is
conditional on one graded affinity estimate (gap (2c)), for fixed and
growing $K$ alike; the margin adaptation for
$0<\gamma<\gamma^\star$ remains open (Remark~H.8); Online Appendix~1
records the status of every statement. On the empirical side, the proxy map estimates drift
smoothness and margin from
single streams, so the point estimates are noisy even where the verdict,
resting only on an inequality under the rough-field bound, is insensitive
to that noise. All measurements use a single
backbone family, so the map's $d_0$ floor should be re-established per
representation. Schema and semantic drift, where the covariate space itself
changes, lie outside our formalism altogether, and are in our view the
deepest open formal gap in the area.

The screening rule carries qualifications of its own. Its soundness can be
checked only with labels the deployment lacks, so it must be validated on
semi-synthetic controls with matched covariate laws; Section~\ref{sec:phasesim}
gives the generator we would start from, and a community-scale study is
future work. One-sidedness on all eight
streams in retrospective evaluation (Table~\ref{tab:certificate}) is an
observation on one deployment, not a bound: soundness on unseen streams is
not established. Finally, the loss-denominated gates forgo ranking gains of
measured size, so a sound ranking-currency gate with the same one-sided
profile remains open, as does a ranking-currency analogue of the wall
theory itself.

%% file: appendix/app_guide.tex
\section*{Reader's guide to the appendices}
\phantomsection\label{app:guide}
\addcontentsline{toc}{section}{Reader's guide to the appendices}

\paragraph{Where each appendix appears.} This document keeps the appendices a
reader needs to check the headline results: \textbf{B} (the label-free
irreducibility core behind Theorem~\ref{thm:irred}), \textbf{H} (the fixed-$K$
margin boundary behind Theorem~\ref{thm:dichotomy}), the auxiliary
Lemmas~C.11$'$ and G.14$'$ that Theorem~\ref{thm:dichotomy} quotes, and
\textbf{J} (the complete measurement protocol). Appendices \textbf{A, C,
D--F, G, I, K} and the program contract appear only in Online Appendix~1,
the supplementary document that accompanies this arXiv submission as the
ancillary file \texttt{anc/supplementary.pdf}. Its first half reproduces
all of Appendices A--K (B, H, and J included), the program contract, and
the two auxiliary lemmas, with their original letters and statement
numbers unchanged, so a statement cited by a letter absent from this
document (for example Theorem~G.1.4 or Corollary~F.5) appears there under
the same number. The lettering in this document therefore runs B, H, J by
design.

The appendices can be read at three depths. \textbf{(i) Contract level:}
read the statement-status conventions (Online Appendix~1, Appendices
A--K), the status note closing Appendix~\ref{app:H}, and the per-section
status tables of Online Appendix~1, \S S1; this suffices to verify the
grade (\emph{proven}, \emph{partial}, \emph{conditional},
\emph{cited-standard}, or \emph{conjecture}) of every claim cited from the main text.
\textbf{(ii) Pillar level:} additionally read the six pillar proofs,
namely Thm.~A.14 (exact reduction identity), Thm.~B.8 (label-free
irreducibility), Thm.~F.2--Cor.~F.5 (deployable Wall~A reduction),
Thm.~G.1.4 ($\sqrt N$ CLT), Thm.~H.9 (margin-indexed regular branch;
graded partial, see the status note of Appendix~\ref{app:H}), and
Thm.~I.3.2 (the strict branches above the threshold and the growing-$K$
remainder).
\textbf{(iii) Full:} everything, including Appendix~K of Online
Appendix~1 (exact admissibility and the permutation-mixture affinity
bound) and the auxiliary-lemmas section here (Lemmas~C.11$'$ and
G.14$'$); at this depth most of the reading is in Online Appendix~1: its
first half carries the verified statements of all of A--K (proof outlines
and consumed outputs kept, rendered numbering unchanged), and the full
developments live in \S S7 (the selection and measurability machinery
Lemmas~A.1--A.13 of Appendix~A, with the auxiliary OP1 remark C.4 of
Appendix~C), \S S5 (the A5$'$ orthogonalization program of Appendix~D),
\S S6 (the C.20 program of Appendix~E), and \S S3 (the complete staged
proof of Lemma~B.5$'$, Appendix~K's permutation-mixture affinity bound,
numbered Lemma~K.8 there and unrelated to Lemma~B.5 of Appendix~B,
together with the proofs of K.1, K.3, and K.4). Proofs that were promoted
into the main text (Section~\ref{sec:wallA}) are replaced in Appendices~B,
F, and the contract by one-line pointers; the statements and status tags
are unchanged.

Online Appendix~1 also records, per appendix, the status of every
statement, the ranked open points, and a status summary (\S S1); the
comparison with the published theorems behind Lemmas~C.11$'$ and G.14$'$
(\S S2); the complete staged proof of Lemma~B.5$'$ with its machine-checked
cases (\S S3); a one-page index of the verification programs (\S S4); and
the full developments of the abridged Appendices~D and~E (\S S5, \S S6). The
verification code and its outputs are part of the reproduction deposit
(Section~\ref{sec:setup}); they are not part of this arXiv source package.

%% file: appendix/app_B.tex
\section{Ancillarity, the Le Cam two-point wall, and its (B)-contingent scope}
\label{app:B}

\subsection*{B.0 Standing assumptions}

\textbf{(S1)} $(\mathcal X,\mathcal B_{\mathcal X})$ Polish with Borel $\sigma$-algebra; $\mathcal Y$ finite with $2^{\mathcal Y}$; primary case $\{0,1\}$. \textbf{(S2)} $p_0(\cdot\mid x)=h_0(\phi(x))$ a Markov kernel ($\phi,h_0$ Borel); $\eta_0(x)=p_0(1\mid x)$. \textbf{(S3)} $\beta:\mathcal X\to[0,1]$ Borel (used by the extremal pair of Lemma B.6). \textbf{(S4)} (O1) i.i.d. $x_1,\dots,x_n\sim p_W$; observation map $\Pi_A:(x,y)\mapsto x$; $y$ never observed; the observation law of a world $P$ is $Q_P^{(n)}:=(\Pi_A^{(n)})_\#P^{\otimes n}$. \textbf{(S5)} (B) $I(\beta)$ as in Appendix~A of \OAname{} (and Section~\ref{sec:idset} of the main paper). \textbf{(S6)} $R\in\mathcal B_{\mathcal X}$, $p_W(R)>0$; $\mu:=p_W(\cdot\mid R)$. \textbf{(S7)} An estimator is a Borel map $\hat\theta_n:\mathcal X^n\to\mathbb R$ (for the infinite-sequence claims: on $\mathcal X^{\mathbb N}$); a randomized estimator is Borel on $\mathcal X^n\times[0,1]$ (resp. $\mathcal X^{\mathbb N}\times[0,1]$) with independent $U\sim\mathrm{Unif}[0,1]$; general randomized procedures given by Markov kernels are covered by the kernel-rule case of Lemma B.5. Binary TV identity: $D_{\mathrm{TV}}(a,b)=|\eta_a-\eta_b|$; $B_\beta(x)=[\underline\eta,\overline\eta]$ with $\underline\eta=\max(0,\eta_0-\beta)$, $\overline\eta=\min(1,\eta_0+\beta)$, width $w_\beta=\overline\eta-\underline\eta\in[\beta,2\beta]$.

\subsection*{B.1 Construction and ancillarity}

\begin{astmt}{Lemma B.1 (world construction). [proven]}\phantomsection\label{a:B1}
For $p_W\in\mathcal P(\mathcal X)$ and any Markov kernel $q$, $P=p_W\otimes q$ defined by $P(E)=\sum_y\int_{E_y}q(y\mid x)p_W(dx)$ is a probability measure on $(\mathcal X\times\mathcal Y,\mathcal B_{\mathcal X}\otimes2^{\mathcal Y})$ with $X$-marginal $p_W$ and $p_W$-a.e.-unique disintegration $q$.
\end{astmt}

\begin{proof}
Since $\mathcal Y$ is finite, every product-measurable $E$ decomposes as $\bigcup_yE_y\times\{y\}$ with Borel sections; countable additivity holds per $y$ by monotone convergence, and the total mass is $1$. For uniqueness, note that for each $y$ both versions are Radon--Nikodym derivatives of $A\mapsto P(A\times\{y\})\le p_W(A)$ with respect to $p_W$.
\end{proof}

\begin{astmt}{Lemma B.2 (product pushforward). [proven]}\phantomsection\label{a:B2}
$(\Pi_A^{(n)})_\#P^{\otimes n}=((\Pi_A)_\#P)^{\otimes n}$.
\end{astmt}

\begin{proof}
The two sides agree on measurable rectangles, which form a generating $\pi$-system, and both are probability measures, so they coincide by Dynkin's $\pi$--$\lambda$ theorem.
\end{proof}

\begin{astmt}{Proposition B.3 (ancillarity, $n\le\infty$). [proven]}\phantomsection\label{a:B3}
For every Markov kernel $q$ and every $n$: $Q^{(n)}_{p_W\otimes q}=p_W^{\otimes n}$; hence $D_{\mathrm{TV}}(Q_P^{(n)},Q_{P'}^{(n)})=0$ for all $q,q'\in I(\beta)$. The same holds for the infinite-sequence law on $(\mathcal X^{\mathbb N},\mathcal B_{\mathcal X}^{\otimes\mathbb N})$.
\end{astmt}

\begin{proof}
The identity $(\Pi_A)_\#P=p_W$ uses only $q(\mathcal Y\mid x)=1$, and Lemma B.2 gives the finite case. For $n=\infty$, the i.i.d. infinite product $P^{\otimes\mathbb N}$ exists on an arbitrary probability space by the infinite-product-measure theorem (equivalently Ionescu--Tulcea with constant kernels; Kolmogorov extension is also available under (S1)); the coordinatewise map $\Pi_A^{(\mathbb N)}$ is product-measurable; cylinder sets form a generating $\pi$-system on which the two observation laws agree by the finite case, and Dynkin's theorem concludes.
\end{proof}

\begin{astmtrm}{Remark B.4 (terminology).}\phantomsection\label{a:B4}
With $p_W$ fixed and $q\in I(\beta)$ the parameter, the full data $x_{1:n}$ is \textbf{ancillary} for $q$ (exact, uniform in $n$): marginalizing the never-observed $y$ erases $q$ from the model before any data are drawn.
\end{astmtrm}

\subsection*{B.2 Le Cam two-point and the irreducibility theorem}

\textbf{The functional.} $\theta(P):=\mathbb E_{\mu}[\eta_P]$, $\eta_P:=dP(\cdot\times\{1\})/dp_W$ (a Radon--Nikodym derivative, which exists because $P(A\times\{1\})\le p_W(A)$ with both measures finite, and satisfies $0\le\eta_P\le1$ a.e.). By Lemma B.1's uniqueness and $\mu\ll p_W$, $\theta(p_W\otimes q)=\mathbb E_\mu[\eta_q]$ is a well-defined functional of the world.

\begin{astmt}{Lemma B.5 (two-point inequality; randomized and kernel rules). [proven]}\phantomsection\label{a:B5}
For probability measures $Q_0,Q_1$ on $(\Omega,\mathcal F)$ and $\theta_0,\theta_1\in\mathbb R$: every randomized estimator satisfies $\max_iE|\hat\theta-\theta_i|\ge\tfrac12|\theta_0-\theta_1|(1-D_{\mathrm{TV}}(Q_0,Q_1))$.
\end{astmt}

\begin{proof}
Let $\nu=Q_0+Q_1$, $g_i=dQ_i/d\nu$, and $d\mu_\wedge=\min(g_0,g_1)d\nu$, so that $\mu_\wedge(\Omega)=1-D_{\mathrm{TV}}(Q_0,Q_1)$. Pointwise, $|\hat\theta-\theta_0|+|\hat\theta-\theta_1|\ge|\theta_0-\theta_1|$, and $\mu_\wedge\le Q_i$; integrating against $\mu_\wedge$, summing, and halving gives the bound. For randomization, replace $Q_i$ by $Q_i\otimes\lambda$, which has the same TV distance. \textbf{Kernel rules:} for a Markov kernel rule $K(\omega,dt)$, $\int(|t-\theta_0|+|t-\theta_1|)K(\omega,dt)\ge|\theta_0-\theta_1|$ pointwise; integrate against $\mu_\wedge$ to obtain the same bound. So ``no label-free procedure improves'' covers all randomized procedures, not only $\mathrm{Unif}[0,1]$-randomized maps.
\end{proof}

\begin{astmt}{Lemma B.6 (extremal pair). [proven]}\phantomsection\label{a:B6}
$\eta_\pm=\Pi_{[0,1]}(\eta_0\pm\beta)$ define Borel Markov kernels $q_\pm$ with $q_\pm(\cdot\mid x)\in B_\beta(x)$ for \textbf{every} $x$ (binary TV identity: $|\eta_+-\eta_0|=\min(1-\eta_0,\beta)\le\beta$, symmetrically), so $q_\pm\in I(\beta)$; and $\theta(P_+)-\theta(P_-)=\mathbb E_\mu[w_\beta]$. $\square$
\end{astmt}

\begin{astmt}{Lemma B.7 (binary mean diameter: direct). [proven]}\phantomsection\label{a:B7}
$D_{\mathrm{mean}}(\beta;R)=\mathbb E_\mu[w_\beta]$, attained by $(q_-,q_+)$.
\end{astmt}

\begin{proof}
For the upper bound, $\eta_q,\eta_{q'}\in[\underline\eta,\overline\eta]$ $p_W$-a.e., hence $\mu$-a.e., and integrating gives the bound; attainment is Lemma B.6.
\end{proof} \emph{(This instantiates Theorem A.14 with $\delta^{\mathrm{mean}}_\beta=w_\beta$; the argument does not use Theorem A.14.)}

\begin{astmt}{Theorem B.8 (label-free irreducibility within (B)). [proven]}\phantomsection\label{a:B8}
Assume (S1)--(S7), binary $\mathcal Y$. For every $n$ (including $n=\infty$), over $\mathcal W=\{p_W\otimes q:q\in I(\beta)\}$:
\[
\inf_{\hat\theta_n\ \mathrm{randomized}}\ \sup_{P\in\mathcal W}\ E\big|\hat\theta_n-\theta(P)\big|\ =\ \tfrac12\,D_{\mathrm{mean}}(\beta;R)\ =\ \tfrac12\,E_{p_W(\cdot\mid R)}[w_\beta],
\]
constant in $n$, attained at the two-point subfamily $\{P_-,P_+\}$ (lower bound) and by the zero-data midpoint estimator $\hat\theta^\star=\mathbb E_\mu[(\underline\eta+\overline\eta)/2]$ (upper bound). If $p_W(R\cap\{\beta>0\})>0$ the bound is strictly positive ($w_\beta\ge\beta$, valid also at the both-clips boundary via $\beta\le1$). The wall height is a functional of the prior $(\beta,p_0,p_W)$ alone.
\end{astmt}

\begin{proof}
The proof has been promoted to the main text: the proof of Theorem~\ref{thm:irred} (Section~\ref{sec:irred}) transcribes it in full, and the inputs it consumes (Lemmas B.1--B.7 above) are unchanged.
\end{proof}

\begin{astmt}{Corollary B.9 (power $=$ size). [proven]}\phantomsection\label{a:B9}
Any test $\psi_n:\mathcal X^n\to[0,1]$ of $q$ vs $q'$ satisfies $E_{P'}[\psi_n]=\mathbb E_P[\psi_n]=E_{p_W^{\otimes n}}[\psi_n]$. $\square$
\end{astmt}

\begin{astmt}{Corollary B.10 (confidence sets cannot shrink; outer-expectation form). [proven]}\phantomsection\label{a:B10}
Let $C_n:\mathcal X^n\to2^{I(\beta)}$ satisfy the measurability convention ($\{x_{1:n}:q'\in C_n\}$ measurable for each fixed $q'$) and uniform coverage $P_q^{(n)}(q\in C_n)\ge1-\alpha$ for all $q\in I(\beta)$. Then, with $\operatorname{diam}_\theta C=\sup_{q,q'\in C}|\theta(P_q)-\theta(P_{q'})|$, the pointwise minorant bound
\[
\operatorname{diam}_\theta C_n\ \ge\ D_{\mathrm{mean}}(\beta;R)\cdot\mathbf 1_{\{q_+\in C_n\}\cap\{q_-\in C_n\}}
\]
holds, the minorant is measurable by the convention, and taking (inner, hence also outer) expectations under $\mu_n:=p_W^{\otimes n}$ with the union bound $\mu_n(q_+\in C_n,\,q_-\in C_n)\ge1-2\alpha$ (each marginal event has $Q$-probability $\ge1-\alpha$ by Prop. B.3, since all observation laws coincide) gives $E_*[\operatorname{diam}_\theta C_n]\ge(1-2\alpha)D_{\mathrm{mean}}(\beta;R)$ for every $n$. As a supremum over an uncountable data-dependent family, $\operatorname{diam}_\theta C_n$ itself need not be measurable, which is why the statement takes inner/outer-expectation form. $\square$
\end{astmt}

\begin{astmtrm}{Remark B.11 (uniformity; oracle $p_W$; other functionals).}\phantomsection\label{a:B11}
The constant is the same for every $n$, including $n=\infty$, and is unaffected by an oracle revealing $p_W$ itself: within (B) the observation functional $P\mapsto Q_P^{(n)}$ factors through $p_W$ alone. For any real functional $\theta$ on $\mathcal W$, the same proof gives minimax risk $\ge\tfrac12\sup_{q,q'}|\theta(P_q)-\theta(P_{q'})|$; the extremal pairs for $\Delta_{01}$ and $\rho_\beta$ are deferred (\OAname{}, \S S1, open point 6).
\end{astmtrm}

\subsection*{B.3 Scope delimitation: the scope theorems}

\begin{astmtrm}{Definition B.12 (class-restricted identified set). [proven-well-posed as scoped]}\phantomsection\label{a:B12}
For a set $\mathcal C$ of joint laws, $I_{\mathcal C}(p_W):=\{q:\ q\text{ is a disintegration of some }P\in\mathcal C\text{ with }(\Pi_A)_\#P=p_W\}$ (modulo $p_W$-a.e.\ equality), and $D^{\mathcal C}_\Delta(R):=\sup_{q,q'\in I_{\mathcal C}(p_W)}\mathbb E_\mu[\Delta(f_q,f_{q'})]$ with $\sup\emptyset:=0$. \emph{Scope note:} for non-singleton classes and general $\Delta$ this presupposes the measurability of the integrand, so carry the kernel hypotheses (S5) of Appendix~A of \OAname{} (not the mechanism prior (S5) above); in this section only singleton/empty cases are used, where the integrand is $\equiv0$ under (D0), which we assume of any discrepancy here.
\end{astmtrm}

\begin{astmt}{Proposition B.13 (covariate-shift collapse: prior-driven). [proven]}\phantomsection\label{a:B13}
$\mathcal C_{\mathrm{cov}}=\{p\otimes p_0\}$: for every observed $p_W$, $I_{\mathcal C_{\mathrm{cov}}}(p_W)=\{p_0(\cdot\mid\cdot)\}$ and $D^{\mathcal C_{\mathrm{cov}}}_\Delta(R)=0$ at $n=0$, for every (D0)-discrepancy, every $R$, every $\beta$.
\end{astmt}

\begin{proof}
The proof has been promoted to the main text (proof of Theorem~\ref{thm:covshift}(i), Section~\ref{sec:irred}); it rests on the uniqueness clause of Lemma B.1.
\end{proof}

\begin{astmt}{Lemma B.14 (moment identification of $\pi_W$). [proven]}\phantomsection\label{a:B14}
Fix known $\mu_y\in\mathcal P(\mathcal X)$, $y\in\mathcal Y=\{1,\dots,|\mathcal Y|\}$. The following are equivalent: (1) $\pi\mapsto\sum_y\pi(y)\mu_y$ is injective on $\Delta(\mathcal Y)$; (2) $\{\mu_y\}$ are linearly independent in $M(\mathcal X)$; (3) there is a bounded Borel $T:\mathcal X\to\mathbb R^{|\mathcal Y|}$ with nonsingular moment matrix $M_{jy}=\mathbb E_{\mu_y}[T_j]$ and $\mathbb E_{p_W}[T]=M\pi$ for every $p_W=\sum_y\pi(y)\mu_y$. In particular $\pi_W=M^{-1}\mathbb E_{p_W}[T]$.
\end{astmt}

\begin{proof}
The proof has been promoted to the main text and is reproduced inside the proof of Theorem~\ref{thm:covshift}(ii) (Section~\ref{sec:irred}). That proof assembles the total-mass observation (affine $\iff$ linear independence for probability measures), the two-priors converse, the Hahn--Jordan/annihilator selection of the indicator moments $T_j=\mathbf 1_{A_j}$ with nonsingular $M$, and moment linearity; (3)$\Rightarrow$(1) follows from nonsingularity.
\end{proof}

\begin{astmt}{Lemma B.15 (Bayes pinning; nonnegative RN versions). [proven]}\phantomsection\label{a:B15}
$\lambda=\sum_y\mu_y$, $g_y=d\mu_y/d\lambda$ \textbf{chosen nonnegative everywhere} (replace any version by $\max(g_y,0)$, another version, and one needed so that $q_\pi(\cdot\mid x)\in\Delta(\mathcal Y)$ at every $x$, as Lemma B.1's kernel framework requires); $m_\pi=\sum_y\pi(y)g_y$; $q_\pi(y\mid x)=\pi(y)g_y(x)/m_\pi(x)$ on $\{m_\pi>0\}$, $:=\pi$ elsewhere. Then $q_\pi$ is a Markov kernel and the $p_\pi$-a.e.-unique disintegration of $P_\pi(A\times\{y\})=\pi(y)\mu_y(A)$.
\end{astmt}

\begin{proof}
One has $p_\pi(\{m_\pi=0\})=0$; that $q_\pi$ is a Markov kernel and a disintegration of $P_\pi$ is a direct verification, and uniqueness follows from Lemma B.1.
\end{proof}

\begin{astmt}{Theorem B.16 (label-shift collapse: data-pinned). [proven]}\phantomsection\label{a:B16}
Under Lemma B.14's condition (2) and $p_W\in\mathcal M_{\mathrm{ls}}$, the set of $X$-marginals of the label-shift family $\mathcal C_{\mathrm{ls}}:=\{P_\pi:\pi\in\Delta(\mathcal Y)\}$ of Lemma B.15 (else the class is falsified and $D^{\mathcal C}:=0$ vacuously): $\pi_W$ is unique, recovered by $\pi_W=M^{-1}\mathbb E_{p_W}[T]$, $I_{\mathcal C_{\mathrm{ls}}}(p_W)=\{q_{\pi_W}\}$, and $D^{\mathcal C_{\mathrm{ls}}}_\Delta(R)=0$ for all $R$ and (D0)-$\Delta$. Binary: $q_W(1\mid x)=\pi_W(1)g_1(x)/(\pi_W(1)g_1(x)+\pi_W(0)g_0(x))$. $\square$
\end{astmt}

\begin{astmt}{Proposition B.17 ($\sqrt n$-estimability of $\pi_W$). [proven]}\phantomsection\label{a:B17}
$\hat\pi=M^{-1}\hat b$, $\hat b_j=n^{-1}\sum_iT_j(x_i)$: $\mathbb E\|\hat\pi-\pi_W\|_2\le\|M^{-1}\|_{\mathrm{op}}\sqrt{|\mathcal Y|/(4n)}$ (the variance is $\le1/(4n)$ per coordinate since each $T_j$ is $[0,1]$-valued; Jensen's inequality and the operator-norm bound then give the result). $\square$
\end{astmt}

\noindent\emph{Rate transfer to the $\theta$-functional requires a lower bound on the mixture density $m_{\pi_W}$ on the relevant region. That lower bound is \textbf{[conjecture]}, with the failure mode (boundary $\pi_W$, barely-overlapping class-conditional supports) explicitly delimiting it.}

\begin{astmt}{Theorem B.18 ((B)-contingency of irreducibility; scoped maximality). [proven]}\phantomsection\label{a:B18}
Fix the observation model (S4). Over the identical channel:
\begin{enumerate}
\item (B): wall $D_{\mathrm{mean}}(\beta;R)=\mathbb E_\mu[w_\beta]>0$ whenever $p_W(R\cap\{\beta>0\})>0$, invariant to $n$ (Theorem B.8);
\item $\mathcal C_{\mathrm{cov}}$: $D^{\mathcal C}=0$ identically, at $n=0$ (Prop. B.13);
\item $\mathcal C_{\mathrm{ls}}$ (known, linearly independent class-conditionals): $D^{\mathcal C}=0$ at population level, $\pi_W$ recovered at rate $n^{-1/2}$ (Thm. B.16, Prop. B.17);
\item (maximality, \textbf{scoped}): any mechanism class $\mathcal C$ over the same channel whose compatible conditionals embed in the budget, $I_{\mathcal C}(p_W)\subseteq I(\beta)$, satisfies $D^{\mathcal C}_\Delta(R)\le D_\Delta(\beta;R)$ (monotonicity of the pairwise supremum). Thus (B) is maximal \textbf{among budget-$\beta$-respecting mechanism priors}. Note $\mathcal C_{\mathrm{ls}}$ does not embed in $I(\beta)$ in general. Its collapse is by the direct computation (3), not by (4).
\end{enumerate}
Hence label-free irreducibility is a property of the mechanism prior (B), not of label-free observation per se. $\square$
\end{astmt}

\begin{astmtrm}{Remark B.19 (the unrestricted ``maximal wall'' reading does not hold: heuristic only).}\phantomsection\label{a:B19}
The budget-free class $\mathcal C_{\mathrm{all}}=\{p_W\otimes q:q\text{ any kernel}\}$ is a \emph{less} committal prior over the same channel; by Lemma B.1 every kernel is compatible with every marginal, so $I_{\mathcal C_{\mathrm{all}}}(p_W)$ is all kernels and $D^{\mathcal C_{\mathrm{all}}}_{\mathrm{mean}}(R)=1$ (take $q\equiv\delta_1$, $q'\equiv\delta_0$), exceeding $\mathbb E_\mu[w_\beta]$ whenever the latter is $<1$. So ``any defensible coupling prior can only lower the wall'' holds only under clause (4)'s embedding restriction; the ``upper envelope'' reading of Section~\ref{sec:irred} should be read as scoped to budget-respecting priors, and a precise envelope formulation over non-embedding priors (re-centered or enlarged balls) remains \textbf{unformulated} (\OAname{}, \S S1, open point 8).
\end{astmtrm}

\begin{astmt}{Proposition B.20 (the mechanism class is itself unfalsifiable from unlabeled data). [proven]}\phantomsection\label{a:B20}
If $p_W\in\mathcal M_{\mathrm{ls}}$, the worlds $p_W\otimes q$ ($q\in I(\beta)$, e.g. $q=p_0$), $p_W\otimes p_0\in\mathcal C_{\mathrm{cov}}$, and $P_{\pi_W}\in\mathcal C_{\mathrm{ls}}$ all induce $p_W^{\otimes n}$ for every $n$; every label-free test between mechanism classes has power $=$ size. The choice among (B), covariate shift, label shift is a prior, not a measurable fact of the window. $\square$
\end{astmt}

\begin{astmtrm}{Remark B.21 (two flavors of collapse; the currency).}\phantomsection\label{a:B21}
$\mathcal C_{\mathrm{cov}}$ collapses \textbf{prior-driven} ($n=0$); $\mathcal C_{\mathrm{ls}}$ collapses \textbf{data-pinned} (the class reduces the unknown to $\pi\in\Delta(\mathcal Y)$, identified and $n^{-1/2}$-estimated from unlabeled data). Under (B) neither route exists: the observation law is constant in the unknown. The currency of class (3) of Theorem B.18 is ``known $\{\mu_y\}$'', label-derived side information obtained outside the window: exactly the currency to which the program's deliverable is directed.
\end{astmtrm}

%% file: appendix/app_H.tex
\section{The magnitude-wall margin boundary (Conjecture E.6 of \OAname{}, regime (II))}
\label{app:H}

\noindent\textbf{Inherited results.} Section~G resolved the covered-part covariate wall
$\Dcov=2\int_\J\Psi\,d\nu$, $\Psi(r)=\E[b\mid R{=}r]$, at the parametric rate in the
\emph{direct/sign-coherent} regime~(I), and left \emph{one} decisive obstruction: the
\emph{raw magnitude} wall $b=|\bar\eta_J-\eta_0|$ of regime~(II), where the non-smooth
$|\cdot|$ Stage-1 map makes the outcome-bias hypothesis (H5) fail ``for all $K\ge1$''
(Rmk.~G.1.5 of Appendix~G, \OAname{}). Section~H establishes the regular branch and
records the conditional/open status of the lower-bound program. It
assembles two derivations (Route~A constructive, Route~B lower-bound)
and their synthesis, and records two qualifications: (i) no $\sigma^2/b$ heavy tail, and hence
no $\varepsilon_1^2$ bias-cap / ``$\as>d/2$'' clause, arises in the synthesis for Conjecture~E.6
(remark following Lem.~\ref{lem:H1}), and (ii)
a general-$K$ \emph{Assouad} lower bound at the exact rate $(N/K)^{-a}$ does not apply
(the functional is a scalar projection $\sum_j\tau_j$; the fuzzy
Le~Cam bound of Thm.~\ref{thm:H7} loses a $\sqrt K$ factor). All notation is inherited from Sections~C and~G.

\medskip\noindent\textbf{Inherited notation and standing conditions.}
$g(x):=\bar\eta_J(x)-\eta_0(x)$, $b:=|g|$, $s:=\sgn g$, plug-in sign
$\hat s:=\sgn(\hat\eta_J-\eta_0)$ from the per-window within-split Stage-1 fit $\hat\eta_J$;
$\E[y-\eta_0\mid\mathcal F_\phi]=g$ (observation condition O2 of Appendix~C, \OAname{}, plus Stage-1 in $\phi$-coordinates); representer
$w=d\nu/d\barmu\le\Gcov$ (A6); efficient influence function (EIF) $\phi=2[w(b-\Psi)+(\Psi-\theta)]$ (Prop.~G.1.1). Set
\[
a:=\frac{\as}{2\as+d}\in\Big(0,\tfrac12\Big)\ \ (\text{since }d\ge1),\quad
\varepsilon_1\asymp\Big(\frac{\log k}{k}\Big)^{a}\ \ (\text{Lemma~\ref{lem:C11prime}}),\quad k=N/K,\quad N=Kk.
\]
Standing: \textbf{(T)}~$\Pbar(g=0)=0$ (crossing set is $\barmu$-null); \textbf{(Mgn$_\gamma$)}~
$\Pbar(0<b\le t)\le Ct^{\gamma}$. Throughout $K$ is fixed and $k\to\infty$ unless a growing-$K$
regime is named. Define the \emph{margin threshold}
\[
\boxed{\ \gamma^\star:=\frac1{2a}-1=\frac{d}{2\as}\ }\qquad\Longleftrightarrow\qquad a(1+\gamma^\star)=\tfrac12.
\]

\bigskip\hrule\bigskip
\subsection{\texorpdfstring{Route A: $\sqrt N$ above the margin threshold}{Route A: above the margin threshold}}

The construction estimates $b$ by \emph{recovering its sign} and reading the \emph{raw}
Bernoulli residual, so no Jensen/smoothing bias is ever incurred; the only error is
misclassifying the sign, which the margin makes second-order.

\setcounter{theorem}{0}%
\begin{lemma}[weighted sign error; the exact outcome-bias identity]\label{lem:H1}\textbf{[proven]}
On the Stage-1 sup-norm event $E_1=\{\|\hat\eta_J-\bar\eta_J\|_\infty\le\varepsilon_1\}$
(prob.\ $\ge1-\sum_jk_j^{-2}$, Lemma~\ref{lem:C11prime}) the \emph{straddle} holds pointwise,
\[
\hat s(x)\ne s(x)\ \Longrightarrow\ 0<b(x)\le\varepsilon_1,\tag{$\star$}
\]
because $\hat g-g=\hat\eta_J-\bar\eta_J$ gives $|\hat g-g|\le\varepsilon_1$, so a sign flip
forces $0<b=|g|\le|g-\hat g|\le\varepsilon_1$. Hence, under (Mgn$_\gamma$),
\[
\Pbar(\hat s\ne s)\le\Pbar(0<b\le\varepsilon_1)\le C\varepsilon_1^{\gamma},
\qquad
\boxed{\ \E_{\Pbar}\!\big[b\,\mathbf 1(\hat s\ne s)\big]\ \le\ \varepsilon_1\,\Pbar(0<b\le\varepsilon_1)\ \le\ C\varepsilon_1^{1+\gamma}.\ }
\]
The sign-recovered signed observation $\hat b^{\mathrm{obs}}:=\hat s\,(y-\eta_0)$ has,
via the pointwise identity $(\hat s-s)g=-2\,b\,\mathbf 1(\hat s\ne s)$,
\[
\begin{aligned}
\beta_1^{\mathrm{sign}}(R)
  &:=\E[\hat b^{\mathrm{obs}}-b\mid R]
    =\E[(\hat s-s)g\mid R]\\
  &=-2\,\E[b\,\mathbf 1(\hat s\ne s)\mid R],
\qquad
\|\beta_1^{\mathrm{sign}}\|_{L^1(\barmu)}
  \le 2C\varepsilon_1^{1+\gamma}.
\end{aligned}
\]
Off $E_1$ the contribution is $\le O(1)\sum_jk_j^{-2}\asymp K^3N^{-2}=o(N^{-1/2})$
($K$ fixed). A dyadic-shell refinement of $(\star)$ gives the same order with a
$\gamma$-independent constant.
\end{lemma}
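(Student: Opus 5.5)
The plan is to prove Lemma~\ref{lem:H1} pointwise on the Stage-1 event $E_1$, then integrate, and finally account for the complement of $E_1$ separately.

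\emph{Straddle.} Write $\hat g:=\hat\eta_J-\eta_0$. Then $\hat g-g=\hat\eta_J-\bar\eta_J$, so on $E_1$ we have $|\hat g-g|\le\varepsilon_1$ everywhere. Suppose $\hat s(x)\ne s(x)$ at a point with $g(x)\ne0$. Then $g$ and $\hat g$ have opposite signs, or $\hat g=0$, and in either case $|g|\le|g-\hat g|\le\varepsilon_1$. The points with $g=0$ are $\Pbar$-null by (T), so they can be discarded. Because the sign convention at zero would otherwise matter, I would state $(\star)$ up to this null set.

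\emph{Margin bounds.} Apply (Mgn$_\gamma$) to the event $\{0<b\le\varepsilon_1\}$. This gives $\Pbar(\hat s\ne s)\le C\varepsilon_1^\gamma$. Multiplying the indicator by $b\le\varepsilon_1$ on that event gives the boxed weighted bound $C\varepsilon_1^{1+\gamma}$.

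\emph{Outcome-bias identity.} Condition on $\phi$ (and, through cross-fitting, on the held-out Stage-1 fit, which makes $\hat s$ a fixed $\mathcal F_\phi$-measurable function). Then O2 gives $\E[\hat s(y-\eta_0)\mid\mathcal F_\phi]=\hat s g$. Since $b=sg$, the bias is $(\hat s-s)g$. Checking the two cases pointwise: if $\hat s=s$ the expression vanishes, and if $\hat s=-s$ it equals $-2sg=-2b$. This yields $(\hat s-s)g=-2b\,\mathbf 1(\hat s\ne s)$. Taking the conditional expectation given $R$ produces $\beta_1^{\mathrm{sign}}$. The $L^1(\barmu)$ norm is then bounded by $2\E_{\Pbar}[b\,\mathbf 1(\hat s\ne s)]$ via the tower property, because $\Pbar$ disintegrates over $\barmu$. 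Combined with the margin bound, this gives $2C\varepsilon_1^{1+\gamma}$.

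\emph{Off $E_1$.} Every quantity involved is bounded by $O(1)$, since $|y-\eta_0|\le1$ and $b\le1$. The contribution is therefore at most $O(1)\,\Pr(E_1^c)\le O(1)\sum_j k_j^{-2}=O(K k^{-2})=O(K^3N^{-2})$, which is $o(N^{-1/2})$ for fixed $K$.

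\emph{Main obstacle.} The step I expect to be delicate is measurability and independence: $\hat s$ must be independent of the evaluation anchor, so that the conditional-unbiasedness step is legitimate. I would handle this by invoking the held-out Stage-1 evaluation of the cross-fitting scheme. The dyadic-shell refinement would decompose $\{0<b\le\varepsilon_1\}$ into the shells $\{2^{-l-1}\varepsilon_1<b\le2^{-l}\varepsilon_1\}$ and sum $2^{-l}\varepsilon_1\cdot C(2^{-l}\varepsilon_1)^\gamma$. This geometric series is bounded by $C\varepsilon_1^{1+\gamma}/(1-2^{-1-\gamma})\le2C\varepsilon_1^{1+\gamma}$, so the constant is $\gamma$-independent.
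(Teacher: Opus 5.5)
Your proposal is correct and follows the paper's own inline argument essentially step for step: the straddle from $|\hat g-g|\le\varepsilon_1$ on $E_1$, the margin bound on $\{0<b\le\varepsilon_1\}$, the pointwise identity made legitimate by the held-out (cross-fitted) $\hat s$ together with O2, and the $O(1)\Pr(E_1^c)\asymp K^3N^{-2}$ remainder; your appeal to (T) for the null set $\{g=0\}$ makes explicit what the paper leaves implicit. One small repair is needed: your straddle explicitly allows $\hat g=0$, so the exact identity $(\hat s-s)g=-2b\,\mathbf 1(\hat s\ne s)$ requires a $\pm1$-valued tie-breaking convention for $\hat s$ (with $\operatorname{sign}0=0$ it equals $-b$, not $-2b$, on $\{\hat g=0\ne g\}$), although the inequality $|(\hat s-s)g|\le 2b\,\mathbf 1(\hat s\ne s)$, which is all the $L^1(\bar\mu)$ bound needs, holds under either convention.
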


\noindent\emph{Remark (plug-in smoothing bias versus sign-error bias).} The naive plug-in
$|\hat\eta_J-\eta_0|$ carries the deterministic Stage-1 \emph{smoothing} bias
$\asymp\varepsilon_1$ for \emph{all} $\gamma$ (this is exactly Rmk.~G.1.5(a)) and is
\textbf{not} the estimator here. Route~A uses the \emph{raw} label, for which
$\E[y-\eta_0\mid\mathcal F_\phi]=g$ is unbiased and the \emph{only} bias is the weighted
sign error above; consequently $\|\beta_1^{\mathrm{sign}}\|_{L^1}\lesssim\varepsilon_1^{1+\gamma}$
\emph{with no cap at $\varepsilon_1^2$}. (No heavy tail $\sigma^2/b$ arises: for
sub-Gaussian noise the pointwise $|\cdot|$-bias decays
super-exponentially for $b\gg\sigma$, so the integrated Jensen bias converges to
$\varepsilon_1^{1+\gamma}$ for every $\gamma>0$; no $\as>d/2$ clause is needed.)

\setcounter{theorem}{1}%
\begin{proposition}[linearization: regime (II) $\to$ regime (I)]\label{prop:H2}\textbf{[proven]}
With the sign recovered, $b=sg$ is linear in $\bar\eta_J$ on the correctly-signed region, so
$\hat b^{\mathrm{obs}}=\hat s(y-\eta_0)$ \emph{is} the regime-(I) direct/sign-coherent object
of Section~G: conditionally on the frozen $\hat s$ the label fluctuation $y-\bar\eta_J$ is
mean-zero given $R$ (variance, not bias), and the entire outcome bias is
$\beta_1^{\mathrm{sign}}$ of Lem.~\ref{lem:H1}. The fold-balanced anchor-level split of
Def.~G.1.2 freezes $\hat s$ off the evaluation fold, so $\hat b^{\mathrm{obs}}_{ji}$ has
conditionally-independent, mean-zero fluctuation given $R$ (this neutralizes the second,
in-sample-fluctuation obstruction of Rmk.~G.1.5(b)). The EIF
$\phi=2[w(b-\Psi)+(\Psi-\theta)]$, the DML2 one-step $\hat D_{\mathrm{cov}}$ (Def.~G.1.2 with
$\hat b^{\mathrm{obs}}$), and the double-robust product remainder $R_{\mathrm{prod}}
=\int(\hat\Psi-\Psi)(w-\hat w)\,d\barmu$ (Lem.~G.1.3) apply verbatim.
\end{proposition}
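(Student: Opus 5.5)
The plan is to show that, once $\hat s$ is frozen, the sign-recovered observation is an instance of the regime-(I) object of Section~G. Everything then reduces to one conditional-expectation computation plus the fold structure of Definition~G.1.2. Concretely, I would condition on the training folds, so that $\hat s$ is a fixed Borel function of $\phi(x)$ (hence of $x$) independent of the evaluation anchors. Then I would decompose
\[
\hat b^{\mathrm{obs}}=\hat s\,(y-\eta_0)=\hat s\,(y-\bar\eta_J)+\hat s\,g .
\]
The first term has conditional mean zero given $\mathcal F_\phi$ by the observation condition $\E[y-\eta_0\mid\mathcal F_\phi]=g$, that is, $\E[y-\bar\eta_J\mid\mathcal F_\phi]=0$. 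Since $\hat s$ is $\mathcal F_\phi$-measurable and frozen, the tower property gives mean zero given $R$. This term therefore contributes only variance, bounded by $1$ because $|\hat s|\le1$ and $y-\bar\eta_J\in[-1,1]$.

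The second term I would write as $\hat s g=sg+(\hat s-s)g=b+(\hat s-s)g$. Here $b=sg$ is exactly linear in $\bar\eta_J$ wherever $\hat s=s$, and the discrepancy is the pointwise identity from Lemma~\ref{lem:H1}: it equals $-2b\,\mathbf 1(\hat s\ne s)$. It is checked casewise. If $\hat s=s$ it vanishes. If $\hat s=-s\neq0$, then $(\hat s-s)g=-2sg=-2b$. The null set $\{g=0\}$ is excluded by (T), and $\hat s=0$ only on a set that I would absorb by a tie-breaking convention. Thus
\[
\E[\hat b^{\mathrm{obs}}-b\mid R]=\beta_1^{\mathrm{sign}}(R),
\]
which is the entire outcome bias, as claimed.

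Conditional independence across evaluation anchors follows from (A0): anchors are i.i.d.\ within windows and independent across windows. Given the training folds, the evaluation-fold pairs $(x_{ji},y_{ji})$ are therefore independent, and $\hat b^{\mathrm{obs}}_{ji}$ is a fixed measurable map of each pair. This is the step that removes the in-sample-fluctuation obstruction of Rmk.~G.1.5(b). The fold-balanced equal quotas of Definition~G.1.2 keep the per-window structure needed later by Lemma~G.14$'$.

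Finally, I would verify that the Section~G machinery applies verbatim. The EIF of Proposition~G.1.1 and the product remainder of Lemma~G.1.3 depend on the outcome only through its conditional mean given $R$, which is $b$ up to $\beta_1^{\mathrm{sign}}$, and through boundedness, which holds. Substituting $\hat b^{\mathrm{obs}}$ for the direct regime-(I) outcome therefore reproduces the same one-step decomposition, with the extra additive remainder $R^{\mathrm{sign}}_{\mathrm{stage1}}=\int w\,\beta_1^{\mathrm{sign}}\,d\barmu$, controlled by $\Gcov\|\beta_1^{\mathrm{sign}}\|_{L^1}$.

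The main obstacle I expect is measurability and independence of $\hat s$ from the evaluation fold. This includes the held-out Stage-1 fit, whose target $\hat\eta_J$ must be trained on anchors from the appropriate windows without touching fold $\ell$. I would handle it by citing the held-out Stage-1 clause of Definition~G.1.2 directly.
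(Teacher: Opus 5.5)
Your proposal is correct and follows essentially the paper's route: the split $\hat b^{\mathrm{obs}}=\hat s(y-\bar\eta_J)+\hat s g$, the identity of Lemma~\ref{lem:H1} for the bias term, and the held-out fold structure of Definition~G.1.2, which freezes $\hat s$ and gives conditional independence from (A0); your tie-breaking convention at $\hat s=0$ is a useful precision, since there $(\hat s-s)g=-b$ rather than $-2b$, although the paper's bound still holds as an inequality. Two small points: the tower step from $\mathcal F_\phi$ down to $R$ tacitly needs the novelty index to be $\phi$-measurable (the statement itself relies on the same convention), and because the one-step uses $\hat w^{(-\ell)}$ the sign remainder is $\int\hat w\,\beta_1^{\mathrm{sign}}\,d\barmu$, so its $\Gcov$ bound needs $\hat w\le\Gcov$ rather than $w\le\Gcov$.
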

\noindent\emph{Remark (no orthogonality in the sign nuisance).} We do \emph{not} invoke
``Neyman-orthogonality in the sign nuisance'': $s\in\{\pm1\}$ admits no score/tangent, so the
Gateaux-derivative-in-$s$ language is not defined. The $\varepsilon_1^{1+\gamma}$ bias is
established self-containedly by the exact identity and margin bound of Lem.~\ref{lem:H1};
nothing downstream uses orthogonality-in-$s$.

\setcounter{theorem}{2}%
\begin{theorem}[Route~A $\sqrt N$ CLT above threshold]\label{thm:H3}\textbf{[partial:
new argument proven; CLT inherits the Section-G regime-(I) foundation]}
Assume (A0)--(A7), (A5$'$), interior regime, (T), (Mgn$_\gamma$) with $\gamma>\gamma^\star$,
$K$ fixed, $k\to\infty$, the within-window split (Def.~G.1.2), the sup-norm rate
$\varepsilon_1\asymp(\log k/k)^{a}$ (Lemma~\ref{lem:C11prime}), and (H1)--(H4), (H6). Then the sign-corrected
DML2 one-step
\[
\hat D_{\mathrm{cov}}=\frac2m\sum_l\hat\Psi\big(\hat r_{K+1}(\tilde x_l)\big)
+\frac2N\sum_\ell\sum_{(j,i)\in I_\ell}\hat w^{(-\ell)}(\hat r_j)\big(\hat b^{\mathrm{obs}}_{ji}-\hat\Psi^{(-\ell)}(\hat r_j)\big)
\]
satisfies $\sqrt N(\hat D_{\mathrm{cov}}-\Dcov)\Rightarrow\mathcal N(0,V)$,
$V=4\E_{\barmu}[w^2\sigma_b^2]+4\tau_{NT}\Var_\nu(\Psi)$ with $\tau_{NT}:=\lim N/m$ (Prop.~G.1.1; efficient constant
$V^\star_{A1}$ under the generalized-method-of-moments pooling of Lem.~G.3.1), for the magnitude wall
$\Dcov=2\int_\J\Psi\,d\nu$, $\Psi(r)=\E[|\bar\eta_J-\eta_0|\mid R{=}r]$. The Hájek--Le~Cam
local-asymptotic-minimax (LAM) bound (Thm.~G.3.2(a)) supplies the matching $N^{-1/2}$ floor at fixed super-threshold
$\gamma$, so the rate is two-sided there.
\end{theorem}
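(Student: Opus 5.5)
The plan is to reduce Theorem~\ref{thm:H3} to the regime-(I) CLT of Section~G (Theorem~G.1.4). The only new ingredient is that the sign-corrected observation $\hat b^{\mathrm{obs}}=\hat s(y-\eta_0)$ carries an outcome bias of order $o(N^{-1/2})$ above threshold.

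First, by Proposition~\ref{prop:H2}, conditionally on the frozen out-of-fold sign $\hat s^{(-\ell)}$, the fluctuation $y-\bar\eta_J$ is mean-zero given $R$. Hence $\hat b^{\mathrm{obs}}$ is a regime-(I) direct observation whose conditional mean is $b+\beta_1^{\mathrm{sign}}$. The DML2 one-step of Definition~G.1.2, the EIF $\varphi$, and the double-robust product remainder (Lemma~G.1.3) then apply verbatim.

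I then write the one-step decomposition as in Step~2 of the main-text proof:
\[
\hat D_{\mathrm{cov}}-\Dcov=2(\nu_m-\nu)\Psi+2(\Pbar_N-\Pbar)\big(w(b^{\mathrm{obs}}-\Psi)\big)+2R_{\mathrm{emp}}+2R_{\mathrm{prod}}+2R_{\mathrm{ratio}}+2R^{\mathrm{sign}}_{\mathrm{stage1}}.
\]
Each remainder is handled separately:
\begin{itemize}
\item $R_{\mathrm{prod}}\le\|\hat\Psi-\Psi\|\,\|\hat w-w\|=o_P(N^{-1/2})$ by (H3).
\item $R_{\mathrm{ratio}}=o_P(N^{-1/2})$ by (H6) together with (A5), (A7) for the boundary of $\J$.
\item $R_{\mathrm{emp}}$, the fold-conditional empirical-process term with estimated nuisances, is $o_P(N^{-1/2})$ by Lemma~\ref{lem:G14prime} under the equal-quota stratified split. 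This replaces any Donsker condition.
\item $R^{\mathrm{sign}}_{\mathrm{stage1}}$ is bounded using (A6) and Lemma~\ref{lem:H1}: $|R^{\mathrm{sign}}_{\mathrm{stage1}}|\le\Gcov\|\beta_1^{\mathrm{sign}}\|_{L^1(\barmu)}\le 2C\Gcov\varepsilon_1^{1+\gamma}$ on $E_1$, and the complement $E_1^c$ contributes $O(K^3N^{-2})$.
\end{itemize}

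The threshold step comes next. With $\varepsilon_1\asymp(\log k/k)^a$ and $k=N/K$ at fixed $K$, we get $\sqrt N\varepsilon_1^{1+\gamma}\asymp K^{1/2}(\log k)^{a(1+\gamma)}k^{1/2-a(1+\gamma)}$. Since $\gamma>\gamma^\star$ is strict, the exponent $1/2-a(1+\gamma)<0$, and this absorbs the polylog. This is precisely where the equality case is excluded.

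For the leading terms I use two independent, bounded, mean-zero averages. The target average $2(\nu_m-\nu)\Psi$ has variance $4\Var_\nu(\Psi)/m$. The anchor average has variance $4\E_{\barmu}[w^2\sigma_b^2]/N$; its window-stratified non-i.i.d. structure is handled by Lindeberg--Feller across windows. Here stationarity (A1) ensures that the per-window score means vanish, so the variance is the pooled $V$. Using $N/m\to\tau_{NT}$ and Slutsky then gives $\sqrt N(\hat D_{\mathrm{cov}}-\Dcov)\Rightarrow\mathcal N(0,V)$.

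The matching floor is Theorem~G.3.2(a). LAN of the stratified experiment in the interior regime, together with the Hájek--Le~Cam convolution and LAM theorems on $\mathcal T_{A1}$, gives $\liminf N\Var(T_N)\ge V^\star_{A1}>0$. That result needs no margin input, so the rate is two-sided at every fixed $\gamma>\gamma^\star$.

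I expect the main obstacle to be the interaction between the sign-frozen split and the Stage-1 event. The bound of Lemma~\ref{lem:H1} needs the sup-norm event $E_1$ to hold for the out-of-fold fit $\hat\eta_J^{(-\ell)}$ uniformly over folds. It also needs $\hat b^{\mathrm{obs}}_{ji}$ to stay conditionally independent of the nuisances used in the same fold. I would handle this by a union bound over the fixed number of folds in Lemma~\ref{lem:C11prime}, conditioning throughout on the held-out sigma-field. Anything beyond that, such as attainment of $V^\star_{A1}$ under overlapping coverage, is inherited from Section~G, which is why the theorem is graded partial.
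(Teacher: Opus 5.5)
Your proposal is correct and follows the paper's proof essentially step for step. You reduce to regime~(I) via Proposition~\ref{prop:H2}, use the Thm.~G.1.4 decomposition with the product, ratio and sign remainders controlled by (H3), (H6), and Lemma~\ref{lem:H1} with $w\le\Gcov$, and make the strict-inequality threshold computation. The CLT then comes from Lemma~\ref{lem:G14prime}, Lindeberg--Feller and Slutsky, with (A1) fixing the variance, and the matching floor from Thm.~G.3.2(a). Your explicit polylog bookkeeping and the union bound over folds for the out-of-fold sup-norm event are small refinements of points the paper leaves implicit.
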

\begin{proof}[Key steps and the threshold computation]
Decompose as in Thm.~G.1.4:
$\hat D_{\mathrm{cov}}-\Dcov=2[(\nu_m-\nu)\hat\Psi+(\Pbar_N-\Pbar)(\hat w(\hat b^{\mathrm{obs}}-\hat\Psi))]+2R_2$,
$R_2=R_{\mathrm{prod}}+R_{\mathrm{ratio}}+R_{\mathrm{stage1}}^{\mathrm{sign}}$, with
$R_{\mathrm{stage1}}^{\mathrm{sign}}=\int\hat w\,\beta_1^{\mathrm{sign}}\,d\barmu$. Then
$R_{\mathrm{prod}}=o_P(N^{-1/2})$ by (H3) (Lem.~G.1.3), $R_{\mathrm{ratio}}=o_P(N^{-1/2})$ by
(H6), and $|R_{\mathrm{stage1}}^{\mathrm{sign}}|\le\Gcov\|\beta_1^{\mathrm{sign}}\|_{L^1}
\le2C\Gcov\varepsilon_1^{1+\gamma}$. With $\varepsilon_1\asymp k^{-a}$ (logs absorbed by the
strict inequality) and $k=N/K$,
\[
\begin{gathered}
\varepsilon_1^{1+\gamma}\asymp(N/K)^{-(1+\gamma)a}\overset{K\text{ fixed}}{\asymp}N^{-(1+\gamma)a},\\
N^{-(1+\gamma)a}=o(N^{-1/2})\iff(1+\gamma)a>\tfrac12\iff
\boxed{\ \gamma>\gamma^\star=\frac1{2a}-1=\frac{d}{2\as}.\ }
\end{gathered}
\]
Cross-fitting replaces any Donsker condition; Lindeberg--Feller on the two independent bounded
mean-zero averages plus Slutsky give the CLT (Lemma~\ref{lem:G14prime} with the
Lindeberg--Feller step, \hyperref[app:discharges]{auxiliary-lemmas section}).
\end{proof}

\noindent\emph{Qualifications.} Theorem~\ref{thm:H3} is a
\emph{reduction}, not an unconditional closure. Its \emph{new} content is \emph{proven},
namely the weighted sign error (Lem.~\ref{lem:H1}), the bias identity, and the threshold
$\gamma^\star$. Its CLT conclusion inherits the two Section-G regime-(I) qualifications graded
\emph{partial}, (i) and (ii), and one input, (iii): (i) the efficient constant is $V^\star_{A1}$ (GMM/inverse-variance pooling),
$=V^\star$ only under disjoint novelty coverage; (ii) the ratio channel (A5$'$) is
\emph{not} removed, so (H6) stands as $\varepsilon_r=o(N^{-1/2})$ (Conj.~G.2.3 open); (iii)
the straddle $(\star)$ uses the \emph{sup-norm} Stage-1 rate, supplied by
Lemma~\ref{lem:C11prime} (\hyperref[app:discharges]{auxiliary-lemmas section}) under the minimal-mass
reading of (A3); an $L^2$-only Stage-1 rate would give a different
Audibert--Tsybakov exponent. Thus the correct
reading is ``\emph{reduces regime (II) to regime (I) above the margin threshold, modulo the
Section-G regime-(I) qualifications}.''

\setcounter{theorem}{3}%
\begin{corollary}[reach limitation and the growing-$K$ threshold]\label{cor:H4}\textbf{[proven]}
\emph{(a) Generic-margin reach.} Under (T) with a genuine crossing (bounded positive density
of $g$ at $0$, which is the generic interior case and the only one where the sign is truly
ambiguous), the margin exponent is $\gamma=1$ (the tube $\{0<b\le t\}$ has mass $\asymp t$).
Then $\gamma>\gamma^\star$ requires $\gamma^\star<1$, i.e.\ $\boxed{d<2\as}$. So at the
generic transversal margin Route~A's $\sqrt N$ is non-vacuous only in \emph{low intrinsic
dimension / high smoothness}; for large-$d$ representations with $d\ge2\as$ the generic
crossing lies \emph{at or below} threshold: below it falls under Route~B, and at
equality no claim is made (Thm.~\ref{thm:H9}(c)). This is a scope
statement, not a defect: $\gamma^\star>0$ always ($a<\tfrac12$), and the naive
magnitude reconstruction is the $\gamma=0$ case, which needs $a>\tfrac12$ (impossible for
$K\ge1$). That is the exact failure of (H5) that Route~A removes.
\emph{(b) Growing $K$.} If $K=K_N$ with $\log K/\log N\to\lambda\in[0,\tfrac12)$
(so $k\asymp N^{1-\lambda}$), solving $(N/K)^{-(1+\gamma)a}=o(N^{-1/2})$ gives
$\gamma^\star(\as,d,\lambda)=\frac1{2a(1-\lambda)}-1$. The restriction $\lambda<\tfrac12$ is
\emph{necessary}: the off-$E_1$ failure budget is $\sum_jk_j^{-2}=Kk^{-2}\asymp N^{3\lambda-2}$,
which is $o(N^{-1/2})$ iff $\lambda<\tfrac12$; for
$\lambda\ge\tfrac12$ the centering is uncontrolled unless the Stage-1 tail is strengthened
from $k^{-2}$ to $k^{-c}$ with $Kk^{-c}=o(N^{-1/2})$ (supplied by the tail generalization
of Lemma~\ref{lem:C11prime}, \hyperref[app:discharges]{auxiliary-lemmas section}). Along growing $K$
the CLT additionally requires a variance-stabilization hypothesis; see
the hypothesis check following Lemma~\ref{lem:G14prime} (\hyperref[app:discharges]{auxiliary-lemmas section}).
\end{corollary}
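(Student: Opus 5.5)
For part~(a) I will compute the margin exponent directly from the assumed crossing density, and then substitute into the threshold identity $a(1+\gamma^\star)=\tfrac12$. Let $f_g$ be the density of the pushforward $g_\#\Pbar$ near $0$, with $0<f_{\min}\le f_g\le f_{\max}$ on some $[-t_0,t_0]$. Since (T) makes $\{g=0\}$ null, for $t\le t_0$
\[
\Pbar(0<b\le t)=\int_{-t}^{t}f_g(u)\,du\in[2f_{\min}t,\;2f_{\max}t].
\]
The upper bound gives (Mgn$_\gamma$) with $\gamma=1$. The lower bound shows that no $\gamma>1$ is available, so $\gamma=1$ is the exact exponent. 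The sufficient condition $\gamma>\gamma^\star$ of Theorem~\ref{thm:H3} then reads $1>d/(2\as)$, that is, $d<2\as$.

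For $d\ge 2\as$ I only need to sort the cases. If $d>2\as$, then $1<\gamma^\star$, which is the Route~B side. If $d=2\as$, we are at equality, where Thm.~\ref{thm:H9}(c) makes no claim. Positivity of $\gamma^\star$ follows from $a<\tfrac12$, which holds because $d\ge1$. The $\gamma=0$ remark then uses the exponent condition of Lemma~\ref{lem:H1}: with no margin, the sign bias is $\varepsilon_1^{1}$, and $\varepsilon_1=o(N^{-1/2})$ would require $a>\tfrac12$, which is impossible. This reproduces the failure of (H5).

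For part~(b) I will redo the threshold computation in the proof of Theorem~\ref{thm:H3} with $k=N/K_N$. The hypothesis $\log K/\log N\to\lambda$ gives $K=N^{\lambda+o(1)}$ and hence $k=N^{1-\lambda+o(1)}$. Then
\[
\varepsilon_1^{1+\gamma}\asymp (\log k/k)^{a(1+\gamma)}=N^{-(1-\lambda)a(1+\gamma)+o(1)},
\]
and this is $o(N^{-1/2})$ whenever $(1-\lambda)a(1+\gamma)>\tfrac12$. Solving for $\gamma$ gives $\gamma>\gamma^\star(\lambda)=\frac{1}{2a(1-\lambda)}-1$. This reduces to $\gamma^\star$ at $\lambda=0$ and is strictly larger than $\gamma^\star$ for $\lambda>0$.

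The off-$E_1$ contribution in Lemma~\ref{lem:H1} is $O(1)\sum_j k_j^{-2}=O(Kk^{-2})=N^{\lambda-2(1-\lambda)+o(1)}=N^{3\lambda-2+o(1)}$. This is $o(N^{-1/2})$ iff $3\lambda-2<-\tfrac12$, that is, iff $\lambda<\tfrac12$. At $\lambda\ge\tfrac12$ the $k^{-2}$ tail bound no longer controls the centering. A stronger tail $k^{-c}$ from the generalization of Lemma~\ref{lem:C11prime} restores control when $\lambda-c(1-\lambda)<-\tfrac12$. The variance-stabilization remark is then simply a pointer to the hypothesis check after Lemma~\ref{lem:G14prime}.

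I expect the main obstacle to be bookkeeping rather than analysis. The hypothesis $\log K/\log N\to\lambda$ pins the exponents only up to $N^{o(1)}$ factors, and the $(\log k)^a$ factor also has to be absorbed. Both are handled only because the inequalities are strict, so I will state (b) with strict inequalities throughout and make no claim at $\gamma=\gamma^\star(\lambda)$ or at $\lambda=\tfrac12$.

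I also need to read ``necessary'' in (b) correctly. It means necessary for the off-$E_1$ budget of this argument to be $o(N^{-1/2})$ under the $k^{-2}$ tail. It is not a lower bound for every possible estimator, and I will phrase it that way.
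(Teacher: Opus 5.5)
Your proposal is correct and follows the same route as the paper, which gives no separate proof but carries the argument inline in the statement: the tube-mass computation giving $\gamma=1$ and hence $d<2\alpha_s$, the threshold recomputation with $k=N/K_N$, and the off-$E_1$ budget $Kk^{-2}\asymp N^{3\lambda-2}$. Your bookkeeping is slightly more careful than the paper's. The paper's parenthetical $k\asymp N^{1-\lambda}$ tacitly assumes $K\asymp N^{\lambda}$, and under that reading the ``iff $\lambda<\tfrac12$'' is exact. Under $\log K/\log N\to\lambda$ alone, however, the boundary $\lambda=\tfrac12$ is genuinely undetermined: for example, $K=N^{1/2}/\log N$ gives $Kk^{-2}=o(N^{-1/2})$. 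So your refusal to claim anything at $\lambda=\tfrac12$ is justified.
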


\bigskip\hrule\bigskip
\subsection{Route B: the Cai--Low floor below the margin threshold}

Below threshold the wall is a genuine non-smooth-functional object. We state the mechanism,
the conditional $\gamma=0$ floor, and the remaining gaps: the affinity estimate needed by
the lower-bound tool, gap~(2c) (the quantitative resummation estimate for the
permutation-mixture affinity, Appendix~K of \OAname{}), is open, and the
positive-margin adaptation is unwritten.

\setcounter{theorem}{4}%
\begin{proposition}[polynomial-approximation moment duality]\label{prop:H5}\textbf{[proven]}
With $\delta_D:=\inf_{\deg p\le D}\sup_{|v|\le1}\big||v|-p(v)\big|=\beta_*/D\,(1+o(1))$
(Bernstein constant $\beta_*\approx0.2802$),
\[
\sup\Big\{\E_{\nu_1}|v|-\E_{\nu_0}|v|:\ \nu_0,\nu_1\in\mathcal P([-1,1]),\ \textstyle\int v^\ell d\nu_0=\int v^\ell d\nu_1,\ \ell=0,\dots,D\Big\}=2\delta_D
\]
by LP duality of the moment problem; the extremal pair sits on the Chebyshev alternation
points of the best degree-$D$ approximant. Two drift-height priors matching $D$ moments are
near-indistinguishable from Bernoulli labels, yet the $|\cdot|$-functional they induce differs
by $\asymp1/D$. This non-regularity lives entirely in the labels$\to b$ map, not in the
linear $\int(\cdot)\,d\nu$ (Prop.~G.1.5).
\end{proposition}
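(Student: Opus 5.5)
The proof runs by linear-programming duality on the truncated moment problem, followed by Chebyshev alternation for the best uniform approximation of $|v|$. Throughout, write $\mathcal M_D$ for the set of pairs $(\nu_0,\nu_1)$ of probability measures on $[-1,1]$ whose first $D$ moments agree.

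\emph{Upper bound.} Let $p^\star$ be a best degree-$D$ uniform approximant of $|v|$, so that $\big||v|-p^\star(v)\big|\le\delta_D$ on $[-1,1]$. For any pair in $\mathcal M_D$ the moments agree, hence $\int p^\star\,d\nu_1=\int p^\star\,d\nu_0$. Then
\[
\E_{\nu_1}|v|-\E_{\nu_0}|v|=\int(|v|-p^\star)\,d\nu_1-\int(|v|-p^\star)\,d\nu_0\le2\delta_D .
\]
This is the easy (weak-duality) half.

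\emph{Attainment.} I would view the supremum as a linear program over the signed measure $\sigma=\nu_1-\nu_0$. The constraints are $\int v^\ell d\sigma=0$ for $\ell\le D$ and $\|\sigma\|_{TV}\le2$, with objective $\int|v|\,d\sigma$. Its dual is exactly $2\inf_{\deg p\le D}\|\,|v|-p\,\|_\infty$. Rather than invoke abstract strong duality, I would exhibit the extremal pair directly.
\begin{enumerate}
\item By the Chebyshev equioscillation theorem, the error $e=|v|-p^\star$ attains $\pm\delta_D$ with alternating signs at $D+2$ points $t_0<\dots<t_{D+1}$ in $[-1,1]$.
\item Because the space of degree-$\le D$ polynomials has dimension $D+1$, there is a nonzero vector $c$ with $\sum_i c_i t_i^\ell=0$ for $\ell=0,\dots,D$. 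Divided differences give $c_i\propto\prod_{j\ne i}(t_i-t_j)^{-1}$, so the $c_i$ alternate in sign.
\item I normalize $\sum_i|c_i|=2$ and fix the overall sign so that $\operatorname{sign}c_i=\operatorname{sign}e(t_i)$. Setting $\nu_1=\sum_{c_i>0}c_i\delta_{t_i}$ and $\nu_0=\sum_{c_i<0}|c_i|\delta_{t_i}$ gives two probability measures: the $\ell=0$ constraint forces equal masses of $1$ each.
\item These measures match moments up to degree $D$, and $\int|v|\,d(\nu_1-\nu_0)=\sum_i c_i e(t_i)=\delta_D\sum_i|c_i|=2\delta_D$.
\end{enumerate}
This shows the supremum is attained and equals $2\delta_D$, with the extremal pair on the alternation points.

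\emph{Asymptotics and symmetry.} The asymptotic $\delta_D=\beta_*/D\,(1+o(1))$ is Bernstein's classical theorem on approximating $|x|$, which I would cite. It holds for the even degrees $D=2n$ and transfers to odd $D$ because $\delta_{2n+1}=\delta_{2n}$ by evenness. If symmetric priors are needed, as in the text, I would symmetrize: $p^\star$ can be chosen even, its alternation set can be taken symmetric, and the measures $v\mapsto\nu_i(-\,\cdot)$ can be averaged with $\nu_i$ without changing $\E|v|$ or the moment constraints.

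\emph{Main obstacle.} The delicate step is the sign-matching in the attainment construction, that is, checking that the alternation signs of $c$ coincide with those of $e$ at the $D+2$ points. Both sequences alternate, so a global sign flip suffices. I would verify this carefully using the divided-difference formula rather than rely on abstract LP strong duality.

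The closing interpretive sentences (indistinguishability from Bernoulli labels, and localization of the non-regularity in the labels$\to b$ map) are commentary, not part of the claim. They follow from the moment matching together with the later $\chi^2$ computation, so I would not prove them here.
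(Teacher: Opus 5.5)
Your proof is correct, and it reaches the identity by a more hands-on route than the paper. The paper gives no argument beyond ``LP duality of the moment problem'' (following Cai--Low), plus Bernstein's theorem. It treats $\sup=2\delta_D$ as strong duality for the truncated moment LP and only asserts that the optimizer sits on the alternation points.

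You instead prove that strong duality by hand for this particular LP. The best approximant $p^\star$ is the dual certificate that gives weak duality. The divided-difference annihilator on the $D+2$ equioscillation points is a primal certificate, and its alternating signs realize complementary slackness.

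Your route gives a self-contained, finite-dimensional argument with an explicit, finitely supported extremal pair, and it actually \emph{proves} the paper's side claim about where that pair lives; the abstract route would need a separate complementary-slackness step for that. The duality route is shorter and does not depend on Chebyshev/Haar structure, so it extends to settings where no equioscillation theorem is available. Your odd-$D$ transfer via $\delta_{2n+1}=\delta_{2n}$ is a useful addition, since Bernstein's theorem is usually stated along even degrees.

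There is one small slip in the symmetrization remark. For even $D$, an alternation set of exactly $D+2$ points can never be chosen symmetric. A symmetric set of even size omits $0$, so its two innermost points $\pm t$ are adjacent and, because $e$ is even, carry the same sign. For example, at $D=0$ the error $e=|v|-\tfrac12$ alternates only on pairs such as $\{-1,0\}$.

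Nothing breaks, because your reflection-averaging step alone does the work. It produces symmetric moment-matched measures with the same value $2\delta_D$ (for $D=0$, $\tfrac12(\delta_{-1}+\delta_1)$ against $\delta_0$). Since $e$ is even, these measures remain supported on $\{e=\delta_D\}$ and $\{e=-\delta_D\}$ respectively.
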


\setcounter{theorem}{5}%
\begin{theorem}[single-window floor, $K=1$]\label{thm:H6}\textbf{[partial: cited-standard
(Thm.~E.3); shape-class admissibility is supplied by the Appendix-K derandomization already at $K=1$;
its affinity bound is modulo gap (2c) (Thm.~K.6)]}
For $K=1$, $k$ anchors, over the Hölder$(\as)$ interior class in the zero-crossing
(no-margin) regime, $\inf_{\hat\theta}\sup\E|\hat\theta-\theta|\gtrsim k^{-a}/\mathrm{polylog}(k)$,
i.e.\ $k^{-a}$ up to logarithmic factors. Partition
the $d$-dimensional $\phi$-support into $m=h^{-d}$ cells of side
$h=k^{-1/(2\as+d)}$; on each cell set $g=A v_c$ with $A=h^{\as}=k^{-a}$ (Hölder-maximal); draw
$v_c\stackrel{iid}\sim\nu_\iota$ (the moment-matched pair of Prop.~\ref{prop:H5}). With
$A^2n_c=kh^{2\as+d}=\Theta(1)$ and $D\sim\log k/\log\log k$ so $(D{+}1)!\gtrsim m$, the mixture
$\chi^2=O(1)$ while the functional gap is $A\cdot2\delta_D\asymp k^{-a}(\log\log k/\log k)$;
Le~Cam's two-point method then gives the bound. This is the Cai--Low/Lepski--Nemirovski--Spokoiny $\int|f|$ rate $=$ the pointwise
regression rate (no averaging gain, never $\sqrt N$-parametric); the sketch gives
$k^{-a}\log\log k/\log k$, and the exact logarithmic power together with the
uniform-over-cells Bernoulli$\leftrightarrow$Gaussian transfer are cited from \citet{cailow2011} and
\citet{nussbaum1996} (the cited-standard step Thm.~E.3 already carries).
\end{theorem}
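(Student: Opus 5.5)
The plan is a fuzzy-hypotheses Le~Cam argument \citep[Thm.~2.15]{tsybakov2009introduction}. The non-regularity is put entirely into the labels$\to|\cdot|$ map via Proposition~\ref{prop:H5}. The steps, in order, are as follows.

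\emph{Construction.} Tile the $d$-dimensional $\phi$-support into $m=h^{-d}$ cells of side $h=k^{-1/(2\as+d)}$. On each cell place a smooth bump $g=A v_c\psi((x-x_c)/h)$ with amplitude $A=h^{\as}=k^{-a}$. The constant is chosen so that the field stays $(L,\as)$-H\"older and interior: $\eta_0$ is bounded away from $0,1$ and $A\to0$.

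\emph{Priors.} Take $v_c\sim\nu_\iota$, where $\nu_0,\nu_1$ is the symmetric moment-matched pair of Proposition~\ref{prop:H5} at degree $D$. To respect the shape class (A4) with exactly constant $\Psi$ at $K=1$, replace the i.i.d.\ draw by the derandomized, phase-complementary assignment of Appendix~K. That assignment fixes the multiset of cell heights and randomizes only their placement by a permutation.

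\emph{Functional gap.} The functional $\theta=\int|g|\,d\nu$ is linear in the cell magnitudes, so its two prior means differ by $A\cdot2\delta_D\cdot\|\psi\|_1\asymp k^{-a}/D$. Its fluctuation is at most $A/\sqrt m$ under i.i.d.\ heights and is zero under the derandomized version, which is negligible against the gap. This gives the separation $2s\asymp k^{-a}/D$ required by the fuzzy bound.

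\emph{Affinity.} Each cell receives $n_c\asymp kh^d$ anchors, so $A^2n_c=kh^{2\as+d}=\Theta(1)$. I would first pass from Bernoulli labels to the Gaussian sequence model, citing \citet{nussbaum1996} and \citet{cailow2011}, as Theorem~E.3 does. In the Gaussian model, the per-cell likelihood-ratio expansion in Hermite polynomials shows that the first $D$ terms cancel because the moments match. The remainder is then $\chi^2_c\lesssim\sum_{\ell>D}(A^2n_c)^\ell/\ell!\lesssim 1/(D{+}1)!$. Tensorizing over the cells gives $\chi^2\le\exp(m/(D{+}1)!)-1=O(1)$ once $(D{+}1)!\gtrsim m$, which forces $D\asymp\log k/\log\log k$. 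Le~Cam with $\mathrm{TV}\le\tfrac12$ then yields
\[
\inf_{\hat\theta}\sup\E|\hat\theta-\theta|\gtrsim k^{-a}\,\frac{\log\log k}{\log k},
\]
which is $k^{-a}$ up to polylog.

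\emph{Main obstacle.} The Hermite tensorization is clean only for i.i.d.\ heights. For the admissible derandomized construction, the mixture is over permutations, so the cell likelihoods are no longer independent. The needed bound is therefore the permutation-mixture affinity estimate, gap~(2c). I would first try a de~Finetti/Diaconis--Freedman comparison: permutation mixtures of a fixed multiset are close to i.i.d.\ sampling from its empirical law when $m$ is large. The aim is to bound the excess $\chi^2$ by $O(D^2/m)$. This is exactly the step the statement grades as conditional.
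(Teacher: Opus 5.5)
This is essentially the paper's argument. It uses the same cells with $A=h^{\alpha_s}=k^{-a}$ and $A^2n_c=\Theta(1)$, the moment-matched pair of Proposition~\ref{prop:H5}, the Hermite bound $\chi^2_c\lesssim 1/(D{+}1)!$ tensorized under $(D{+}1)!\gtrsim m$, and the gap $A\cdot 2\delta_D\asymp k^{-a}\log\log k/\log k$. It also rests on the same two conditional inputs: the cited Bernoulli$\leftrightarrow$Gaussian transfer and, for the admissible derandomized construction, the permutation-mixture affinity gap~(2c), which you correctly flag rather than claim.

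One caution on your proposed attack on (2c). A Diaconis--Freedman comparison controls only the first $n$ of $m$ exchangeable coordinates, with error of order $n^2/m$, so it is vacuous when all $n=m$ cells are permuted. In fact a permuted fixed multiset is at total-variation distance $1-o(1)$ from i.i.d.\ draws from its empirical law, because the exact-count event separates them. Any such comparison would therefore have to be made at the data level, after the noise, rather than on the priors.
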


\setcounter{theorem}{6}%
\begin{theorem}[fixed-$K$ lower bound at $\gamma=0$]\label{thm:H7}
\textbf{[conditional: the admissible-class version routes through the
Appendix-K derandomization and requires the open affinity gap (2c) (Thm.~K.6),
in addition to the cited-standard transfer steps in Thm.~E.3]}
Place the $K$ windows on \emph{disjoint} novelty cells $\J_1,\dots,\J_K$ (Stage-1 cannot pool,
Rmk.~C.14), each of $\nu$-mass $1/K$, $w\equiv1$. Let each window be independently ``active''
with probability $p$ under $\Lambda_0$ and $p'$ under $\Lambda_1$, with $|p'-p|=c/\sqrt K$,
active state $=$ the fuzzy prior $\nu_1$ of Prop.~\ref{prop:H5} at budget $k$, coherently
oriented. Then the total data-affinity is bounded,
\[
\chi^2(\Lambda_1\Vert\Lambda_0)=\prod_{j=1}^K\big(1+\chi^2_j\big)-1
\le e^{\sum_j\chi^2_j}-1,\qquad
\sum_{j=1}^K\chi^2_j\asymp K\cdot\frac{c^2}{K}\cdot O(1)=O(1),
\]
so $\mathrm{TV}(\Lambda_1,\Lambda_0)\le\tfrac12$ for $c$ sufficiently
small,
while the functional separates by
$\E_{\Lambda_1}\theta-\E_{\Lambda_0}\theta=(p'-p)\cdot2\delta_D A\asymp
(N/K)^{-a}\,\frac{\log\log k}{\log k}\Big/\sqrt K$, the same logarithmic factor as in
Thm.~\ref{thm:H6} (it comes from $\delta_D\asymp1/D$ at the degree $D\sim\log k/\log\log k$
needed for $\chi^2=O(1)$). Le~Cam gives, conditional on gap~(2c), in the no-margin case
$\gamma=0$,
\[
\inf_{\hat\theta}\ \sup\ \E|\hat\theta-\theta|\ \ge\ c'\,\frac{(N/K)^{-a}}{\sqrt K\,\mathrm{polylog}(k)}
\ =\ c'\,\frac{k^{-a}}{\sqrt K\,\mathrm{polylog}(k)},
\]
with the same logarithmic power as in Thm.~\ref{thm:H6}. For \emph{fixed} $K$ this is
$k^{-a}$ up to logarithmic factors, and
$k^{-a}/(\sqrt K\,\mathrm{polylog}(k))\big/N^{-1/2}=k^{1/2-a}/\mathrm{polylog}(k)\to\infty$
(since $a<\tfrac12$): $\sqrt N$-estimability is \emph{impossible} and $\theta$ is
\emph{non-regular}, for every fixed $K\ge1$.
\end{theorem}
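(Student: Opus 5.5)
The proof is a fuzzy-hypotheses Le~Cam argument (\citealp[Thm.~2.15]{tsybakov2009introduction}). It tensorizes $K$ copies of the single-window construction of Theorem~\ref{thm:H6} across disjoint novelty cells. The activation probabilities are tuned so that the total affinity stays $O(1)$, while the mean functional still separates by a $1/\sqrt K$ fraction of the single-window gap. Because the windows sit on disjoint novelty cells $\J_1,\dots,\J_K$ of $\nu$-mass $1/K$, with $w\equiv1$ and independent data across windows by (A0), the data law under either mixture is a product over windows.

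\emph{Step 1 (per-window building block).} Inside window $j$, take the cell partition of Theorem~\ref{thm:H6}, with side $h=k^{-1/(2\as+d)}$ and amplitude $A=k^{-a}$. The ``active'' state draws cell heights from the moment-matched prior $\nu_1$ of Proposition~\ref{prop:H5}, and the ``inactive'' state from $\nu_0$. The degree is $D\sim\log k/\log\log k$. I would use the single-window affinity bound (conditional on gap~(2c), via the Appendix-K derandomized admissible version) to get $\chi^2(\text{active}\Vert\text{inactive})\le C_0=O(1)$. The per-window data law under $\Lambda_\iota$ is then the two-component mixture $(1-p_\iota)P^{\mathrm{in}}+p_\iota P^{\mathrm{act}}$. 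A direct computation gives
\[
\chi^2_j=\frac{(p'-p)^2}{\ }\cdot\frac{\chi^2(P^{\mathrm{act}}\Vert P^{\mathrm{mix}}_0)\text{-type term}}{1}\lesssim (p'-p)^2\,\frac{C_0}{p(1-p)}=O(c^2/K),
\]
with $p$ bounded away from $0$ and $1$. This holds because $P^{\mathrm{mix}}_1-P^{\mathrm{mix}}_0=(p'-p)(P^{\mathrm{act}}-P^{\mathrm{in}})$ and $\chi^2(P^{\mathrm{act}}\Vert P^{\mathrm{mix}}_0)\le\chi^2(P^{\mathrm{act}}\Vert P^{\mathrm{in}})/(1-p)$ by convexity.

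\emph{Step 2 (tensorization and functional gap).} Chi-square tensorizes over products, so $\chi^2(\Lambda_1\Vert\Lambda_0)=\prod_j(1+\chi^2_j)-1\le e^{\sum_j\chi^2_j}-1=O(c^2)$. Hence $\mathrm{TV}\le\tfrac12\sqrt{\chi^2}\le\tfrac12$ once $c$ is small. On the functional side, $\theta=\sum_j K^{-1}\cdot$(the cell-average of $A|v_c|$ in window $j$). Coherent orientation means every active window contributes the same-sign increment $A(\E_{\nu_1}|v|-\E_{\nu_0}|v|)=2\delta_DA$. Therefore $\E_{\Lambda_1}\theta-\E_{\Lambda_0}\theta=(p'-p)\,2\delta_DA\asymp k^{-a}(\log\log k/\log k)/\sqrt K$. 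I would also check concentration: $\theta$ is an average of $K m_h$ bounded independent terms, so $\Var_{\Lambda_\iota}\theta\lesssim A^2/(Km_h)+A^2\delta_D^2p(1-p)/K$. The second term is the delicate one. It is of the same order as the squared gap, so I would take $c$ small relative to the constant of that term, or equivalently apply Theorem~2.15 in its variant with prior probabilities of a $\theta$-gap of half size. Chebyshev then gives that $\theta$ lies within a quarter gap of its mean with probability at least $3/4$ under each prior.

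\emph{Step 3 (conclusion).} Le~Cam's fuzzy two-hypotheses bound then yields the stated floor $c'k^{-a}/(\sqrt K\,\mathrm{polylog}(k))$. For fixed $K$, dividing by $N^{-1/2}=(Kk)^{-1/2}$ gives $k^{1/2-a}/\mathrm{polylog}(k)\to\infty$ since $a<\tfrac12$, which establishes non-regularity. Admissibility of the product construction, meaning monotone $\Psi$ and membership in the H\"older interior class with (A1)--(A7), is inherited from the exactly-constant-$\Psi$ derandomized construction window by window. Disjointness of the cells prevents the windows from interacting through $\Psi$.

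The main obstacle is Step~1's per-window affinity. With i.i.d.\ heights it is the standard Cai--Low $\chi^2\asymp m_h/(D+1)!$, but the i.i.d.\ construction is inadmissible. The derandomized permutation-mixture replacement needs exactly the open resummation estimate (2c). The statement is accordingly conditional on that estimate, and I would take it as given.
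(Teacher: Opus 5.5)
Your outline is the paper's: Bernoulli activation of the moment-matched block on disjoint novelty cells, exact tensorization of $\chi^2$, separation $(p'-p)\,2\delta_DA$, then fuzzy Le~Cam, all conditional on gap~(2c). Your Step~1 mixture bound, the tensorization, and the final $k^{1/2-a}$ arithmetic are fine.

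The gap is in the concentration step you add to license Theorem~2.15 of Tsybakov, and your proposed fix runs the wrong way. The paper itself passes directly from ``$\mathrm{TV}\le\tfrac12$ for $c$ sufficiently small'' to ``Le~Cam gives'', so this step is not in its text. Under $\Lambda_\iota$ the functional depends on the activation pattern through $(2\delta_DA/K)\sum_j\tau_j$. As you compute, $\Var_{\Lambda_\iota}\theta\ge4\delta_D^2A^2p_\iota(1-p_\iota)/K$, while the squared separation is $4c^2\delta_D^2A^2/K$. Chebyshev at a quarter gap with probability $3/4$ therefore needs $c^2\ge64\,p_\iota(1-p_\iota)$ for both $\iota$. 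This is not an artifact of Chebyshev: once $c\ll\sqrt{p(1-p)}$, the mean shift $c\sqrt K$ of $\sum_j\tau_j$ is smaller than its standard deviation $\sqrt{Kp(1-p)}$. Taking $c$ small at fixed $p$ thus makes both priors charge both sides of every threshold between the two means; at $K=1$ with $0<p<p'<1$, $\theta$ is simply two-valued under each prior. If you instead enforce $c^2\ge64p(1-p)$, your own bound $\sum_j\chi^2_j\lesssim c^2C_0/(p(1-p))$ is at least $64C_0$. Then $\mathrm{TV}\le\tfrac12$ is no longer available, because the single-window input gives $C_0$ only as $O(1)$. A literal reading of the paper's ``$c$ sufficiently small'' meets the same objection. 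Mean separation plus small TV is not a lower bound on its own; what rescues the statement is that each $\tau_j$ is only partially identified.

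There are two ways to close the step.
\begin{itemize}
\item \emph{Fixed $K$.} The $\sqrt K$ is only a constant, so take the all-or-nothing pair $p=0$, $p'=1$. The separation becomes the full $2\delta_DA$, and $\Var\theta\le A^2/(Km_h)\ll\delta_D^2A^2$ under each prior (exactly zero for the derandomized constant-$\Psi$ block). Bretagnolle--Huber with $\mathrm{KL}\le K\log(1+C_0)$ gives $\mathrm{TV}\le1-\tfrac12(1+C_0)^{-K}$, which is bounded away from $1$. The fuzzy bound only needs $1-\mathrm{TV}$ to exceed the now-negligible failure probabilities, not $\mathrm{TV}\le\tfrac12$.
\item \emph{Explicit $1/\sqrt K$.} Replace the two-prior Chebyshev step by a Bayes-risk bound under one prior with $\tau_j$ i.i.d.\ Bernoulli$(\tfrac12)$. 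Since $\mathrm{TV}(P^{\mathrm{act}},P^{\mathrm{in}})\le1-1/(2(1+C_0))$, the expected posterior variance of $\sum_j\tau_j$ is at least $K/(8(1+C_0))$. Anti-concentration of independent Bernoulli sums then gives absolute-loss Bayes risk $\gtrsim(\delta_DA/K)\sqrt{K/(1+C_0)}$ for $K\gtrsim1+C_0$.
\end{itemize}

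Separately, disjoint cells do not by themselves stop the windows from interacting through $\Psi$. $\Psi$ is one nondecreasing curve on $\J$, so arbitrary activation patterns need a baseline trend to stay in $\PmonL$. That costs $K\cdot2\delta_DA\lesssim A$ (Remark~\ref{rmk:H8}(iii)) and is harmless only because $K$ is fixed.
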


\setcounter{theorem}{7}%
\begin{remark}[why Assouad does not apply, and the $\sqrt K$ gap]\label{rmk:H8}
A general-$K$ derivation might invoke Assouad over
$\tau\in\{0,1\}^K$ with $\theta(P_\tau)=\theta_{\mathrm{base}}+(A\delta_D/K)\sum_j\tau_j$. But
this functional depends on $\tau$ \emph{only through the scalar} $\sum_j\tau_j$: the induced
separation is
$|\theta_\tau-\theta_{\tau'}|=(A\delta_D/K)|\sum(\tau-\tau')|\le(A\delta_D/K)\,\rho_H(\tau,\tau')$,
an \emph{upper} bound by Hamming distance, whereas Assouad's lemma as stated in
\citet[Thm.~2.12]{tsybakov2009introduction} requires a
\emph{lower} bound $\ge2s\,\rho_H$; per-coordinate bit decoding is impossible for a projection
functional. The correct tool is the fuzzy Le~Cam bound of Thm.~\ref{thm:H7}, which costs a
\emph{genuine} $\sqrt K$ relative to the rate $(N/K)^{-a}$. Consequences:
\begin{enumerate}[nosep,label=(\roman*)]
\item \textbf{Fixed $K$:} non-regularity/$\sqrt N$-impossibility is
Thm.~\ref{thm:H7}'s conclusion, conditional on gap~(2c) and the Thm.~E.3 transfer steps; the rate is $k^{-a}$ up to logarithmic factors and the constant $\sqrt K$.
\item \textbf{Growing $K$:} the $\sqrt K$ gap diverges. Only the sandwich
$[\,(N/K)^{-a}/(\sqrt K\,\mathrm{polylog}),\ (N/K)^{-a}\cdot\mathrm{polylog}\,]$ is available: its lower end is
Thm.~\ref{thm:H7}'s floor, conditional on gap~(2c), and its upper end is the minimax
rate conditional on the same gap (Thm.~I.2.1 of Appendix~I, \OAname{};
Proposition~\ref{prop:growingK}). For the
variance-averaged estimator the coherent linear functional does average down by $\sqrt K$.
\item \textbf{Admissibility:} the $|\cdot|$-nonsmoothness caps
$\Psi\lesssim A=k^{-a}$ per cell, so keeping all $2^K$ sign patterns monotone-Lipschitz (A4)
forces $K\cdot2\delta_D A\lesssim A$, i.e.\ $K\lesssim\log k/\log\log k$. Beyond that the
cube is inadmissible in $\mathcal P^L_{\mathrm{mon}}$. This cap applies to the
i.i.d.-heights cube only; the derandomized, phase-complementary construction of
Appendix~K (\OAname{}) is admissible at every $K$; only its
permutation-mixture affinity bound remains modulo gap~(2c).
\item \textbf{Margin realizability:} the zero-crossing-with-density drift gives
$\gamma=1$, which is sub-threshold ($\gamma<\gamma^\star$) iff $\as<d/2$; for $\as\ge d/2$ a
higher-order-vanishing ($\gamma<1$) crossing must be built and its Hölder membership checked
(not carried out here); the margin adaptation of Thm.~\ref{thm:H7} to
$0<\gamma<\gamma^\star$ is unwritten.
\item \textbf{Upper bound:} the crude per-window plug-in has coherent, non-cancelling
$|\cdot|$-Jensen bias $\asymp k^{-a}$ but is \emph{not} a certificate of the floor; the
matching achiever is the poly-debiased U-statistic of Prop.~\ref{prop:H10} (rate-sharp only at
$K=1$).
\end{enumerate}
\end{remark}

\bigskip\hrule\bigskip
\subsection{The regular branch, boundary estimator, and lower-bound status}

\setcounter{theorem}{8}%
\begin{theorem}[regular branch and lower-bound status for the magnitude wall]\label{thm:H9}\textbf{[partial]}
Let $\gamma^\star=d/(2\as)$, $a=\as/(2\as+d)\in(0,\tfrac12)$. Under
(A0)--(A7), (A5$'$), (T), (Mgn$_\gamma$), $K$ fixed, $k\to\infty$, and for the
within-window-split one-step, the following hold:
\begin{enumerate}[nosep,label=(\alph*)]
\item \emph{Above threshold $\gamma>\gamma^\star$ (Route~A):} $\Dcov$ is \emph{regular},
rate $N^{-1/2}$, EIF $\phi=2[w(b-\Psi)+(\Psi-\theta)]$, efficient constant $V^\star_{A1}$
(Thm.~\ref{thm:H3}), \emph{modulo the Section-G regime-(I) qualifications}.
\item \emph{Below threshold $\gamma<\gamma^\star$ (Route~B):} at $\gamma=0$,
Thm.~\ref{thm:H7} gives a conditional fixed-$K$ lower bound
$(N/K)^{-a}/\sqrt K$ up to logarithmic factors, implying non-regularity if gap~(2c) and the cited
transfer steps hold. For $0<\gamma<\gamma^\star$, both the proposed
margin-adaptive upper bound (Prop.~\ref{prop:H10}) and the matching lower-bound
adaptation are open.
\item \emph{Boundary point:} $a(1+\gamma^\star)=\tfrac12$ \emph{identically},
so the two polynomial exponents meet at $\gamma=\gamma^\star$. This identity does not settle the
boundary case: the Stage-1 rate contains a logarithmic factor, and Route~A requires the strict
little-$o$ condition $\varepsilon_1^{1+\gamma}=o(N^{-1/2})$.
\end{enumerate}
Consequently, for fixed $K$ (the first line theorem-grade modulo the regime-(I)
qualifications, in particular hypothesis (H6); the second conditional on gap~(2c)),
\[
\begin{aligned}
\inf_{\hat D}\ \sup\ \E|\hat D-\Dcov|
&\asymp N^{-1/2}
&&\bigl(\gamma>\gamma^\star\bigr),\\
\inf_{\hat D}\ \sup\ \E|\hat D-\Dcov|
&\gtrsim (N/K)^{-a}/(\sqrt K\,\mathrm{polylog}(k))
&&\bigl(\gamma=0;\ \text{conditional}\bigr).
\end{aligned}
\]
\end{theorem}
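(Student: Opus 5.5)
Theorem~\ref{thm:H9} is an assembly result, so I would prove it by collecting the three clauses from results already established in this appendix and then checking the two displayed rate lines.

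For clause (a), I would invoke Theorem~\ref{thm:H3} directly. Its hypotheses match the standing list here once $\gamma>\gamma^\star$ is imposed, namely (A0)--(A7), (A5$'$), (T), (Mgn$_\gamma$), fixed $K$, the within-window split, the sup-norm Stage-1 rate of Lemma~\ref{lem:C11prime}, and (H1)--(H4), (H6). That theorem gives the $\sqrt N$ CLT with the EIF $\phi=2[w(b-\Psi)+(\Psi-\theta)]$, and the Hájek--Le~Cam LAM bound of Theorem~G.3.2(a) supplies the matching $N^{-1/2}$ floor. Together these give the two-sided $\asymp N^{-1/2}$ line. To pass from the CLT to a bound on $\E|\hat D-\Dcov|$, I would use boundedness of $\hat D$, which follows from clipping to $[0,2]$ at no cost, and then uniform integrability. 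I would carry the regime-(I) qualifications (i)--(iii) listed after Theorem~\ref{thm:H3} verbatim; this is why the tag is \emph{partial}.

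For clause (b), at $\gamma=0$ I would quote Theorem~\ref{thm:H7}, which gives the floor $(N/K)^{-a}/(\sqrt K\,\mathrm{polylog}(k))$ conditional on gap~(2c) and the Thm.~E.3 transfer steps. Non-regularity then follows from the computation $k^{1/2-a}/\mathrm{polylog}(k)\to\infty$, using $a<\tfrac12$ and $d\ge1$. For $0<\gamma<\gamma^\star$ nothing needs proving: the clause only records that Prop.~\ref{prop:H10} and the lower-bound adaptation are open, following Remark~\ref{rmk:H8}(iv).

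For clause (c), the identity $a(1+\gamma^\star)=\tfrac12$ is immediate from $\gamma^\star=1/(2a)-1$. The substantive point is negative, and I would make it explicit. At equality, $\varepsilon_1^{1+\gamma^\star}\asymp(\log k/k)^{1/2}$. Relative to $N^{-1/2}\asymp k^{-1/2}$ (fixed $K$), this ratio is $(\log k)^{1/2}\to\infty$, so the bias bound of Lemma~\ref{lem:H1} is not $o(N^{-1/2})$, and Route~A's Step~1 fails. This shows only that the present argument does not settle the boundary case; it does not prove non-regularity there.

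The main obstacle is not new mathematics but bookkeeping of conditional status. The consequence display must inherit exactly the qualifications of its sources: the first line is conditional on (H6) and the efficient-constant caveat, and the second on gap~(2c). No claim may leak to $\gamma=\gamma^\star$ or to the positive-margin sub-threshold region.
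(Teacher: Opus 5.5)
Your assembly matches the paper's. Clause (a) comes from Theorem~\ref{thm:H3} with the LAM floor of Theorem~G.3.2(a). Clause (b) comes from Theorem~\ref{thm:H7}, with Proposition~\ref{prop:H10} and Remark~\ref{rmk:H8}(iv) recording the open range $0<\gamma<\gamma^\star$. Clause (c) comes from $\gamma^\star=1/(2a)-1$. Your check that at equality the Lemma~\ref{lem:H1} bound exceeds $N^{-1/2}$ by a factor $\asymp\sqrt K(\log k)^{1/2}\to\infty$ is correct, and it quantifies what the paper only states in words.

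\textbf{The gap.} The step you add beyond the paper fails as written. You obtain the upper half of $\inf_{\hat D}\sup\E|\hat D-\Dcov|\asymp N^{-1/2}$ from the CLT by clipping to $[0,2]$ and invoking uniform integrability. Clipping gives $|\hat D-\Dcov|\le2$, but the variable that must be uniformly integrable is $\sqrt N|\hat D-\Dcov|$. Clipping bounds that only by $2\sqrt N$. An estimator that is off by $1$ on an event of probability $N^{-1/4}$ satisfies the same CLT, yet has $\E|\hat D-\Dcov|\ge N^{-1/4}$. Nothing in the hypotheses rules this out, because the remainders are controlled only in probability:
\begin{itemize}
\item $R_{\mathrm{prod}}=o_P(N^{-1/2})$ through (H3);
\item $R_{\mathrm{ratio}}$ is controlled through (H6) only on the (A5) event of probability $\ge1-\delta_r$, and nothing requires $\delta_r=O(N^{-1/2})$.
\end{itemize}
Separately, the display takes a supremum over the model class, and a CLT at each fixed $P$ gives no uniformity.

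\textbf{Two repairs.} The first is to read the first display line as the paper does: as shorthand for a $\sqrt N$-CLT at each $P$ plus the LAM floor. That is what ``theorem-grade modulo the regime-(I) qualifications'' licenses. The second is to bypass the CLT and bound $\E|\hat D-\Dcov|$ directly from the one-step decomposition. On that route:
\begin{itemize}
\item the two centred linear terms have fold-conditional second moment $O(\Gcov^2/N+1/m)$ by Lemma~\ref{lem:G14prime}, with $\hat w$ truncated at $\Gcov$ and $m\ge cN$;
\item the sign bias is at most $2C\Gcov\varepsilon_1^{1+\gamma}$ on $E_1$, with class-uniform constants;
\item the off-$E_1$ part is $O(K^3N^{-2})$.
\end{itemize}
However, $R_{\mathrm{prod}}$ and $R_{\mathrm{ratio}}$ would then need $L^1$, class-uniform versions of (H3) and (H6), together with $\delta_r=O(N^{-1/2})$. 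These are strictly stronger than the hypotheses the theorem carries, so this route adds assumptions rather than just bookkeeping.

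\textbf{A smaller bookkeeping point,} given your own emphasis on it. The hypotheses of Theorem~\ref{thm:H3} do not match this theorem's standing list. (H1)--(H4), (H6) and the interior regime are absent from that list and enter only through the ``modulo'' clause. The interior regime is what makes the experiment LAN for Theorem~G.3.2, so it must be carried explicitly, alongside (H6), for both halves of the first line.
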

\noindent\emph{Scope.} The $\gamma=0$ lower side is conditional on gap~(2c) even at
$K=1$, because the admissible construction depends on it.

\setcounter{theorem}{9}%
\begin{proposition}[the boundary estimator: Cai--Low/LNS polynomial debiasing]\label{prop:H10}
\textbf{[partial ($K{=}1$); conjecture ($K{>}1$)]}
Per window and per width-$h$ cell ($n_c=kh^d$ anchors, drift $\mu_C=\bar\eta_j-\eta_0$), the estimator proceeds in four steps:
(1)~take the best minimax polynomial $P_{2D}$ of $|x|$ on $[-M,M]$, $M\asymp k^{-a}$, with uniform error
$\asymp M/D$; (2)~form the unbiased $U$-statistic $\hat U_\ell=\binom{n_c}{\ell}^{-1}\!\sum_{\text{distinct}}
\prod_p(y_{i_p}-\eta_0)$ for $\mu_C^\ell$ ($\eta_0$ known); (3)~set $\widehat{|\mu_C|}=\sum_\ell
c_\ell\hat U_\ell$, which has bias $\le M/D$ and variance $\asymp M^2 2^{O(D)}/n_c$; (4)~choose the degree $D^\star\asymp\log n_c$
so that bias$^2\asymp$var. Aggregating $\tfrac1N\sum_{\text{cells}}\widehat{|\mu_C|}\,w$ attains
$k^{-a}=(N/K)^{-a}$ up to logarithmic factors; with Thm.~\ref{thm:H6}'s lower bound this is
\emph{rate-sharp up to logarithmic factors at $(\gamma{=}0,K{=}1)$, conditional on gap~(2c)}.
\emph{This is the estimator that realizes
the clean $\varepsilon_1^{1+\gamma}$ rate} (raw-label $U$-statistics carry no
$\hat\eta$-smoothing bias), together with Route~A's sign-recovery step, and \emph{not} the
plug-in $|\hat\eta_J-\eta_0|$. For $0<\gamma<\gamma^\star$ the margin-adaptive version applies
poly-debiasing only in the near-crossing layer (mass $\lesssim\varepsilon_1^{\gamma}$) and the
linear-bulk one-step elsewhere, targeting $(N/K)^{-a(1+\gamma)}$ (the aggregation of the $2^{O(D)}/n_c$ cell
variances is not carried out here, so this achievability is partial). The general-$K$ aggregated lower bound is the
$\sqrt K$ gap of Rmk.~\ref{rmk:H8}.
\end{proposition}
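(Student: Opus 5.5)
The plan is to argue cell by cell and then aggregate, following the Cai--Low/Lepski--Nemirovski--Spokoiny template for $\int|f|$. Fix a window and a width-$h$ cell $C$ with $n_c=kh^d$ anchors, all drawn from the bounded-below anchor density of (A3). On the cell the drift $\mu_C=\bar\eta_j-\eta_0$ is approximated by a constant. The Hölder-$\as$ condition gives a within-cell approximation error of $h^{\as}=k^{-a}$, and I would fold this into the bias budget from the start, since it is of the same order as the target rate.

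Because $\eta_0$ is known and the labels are Bernoulli and independent given $x$, the products $\prod_{p=1}^{\ell}(y_{i_p}-\eta_0(x_{i_p}))$ over distinct indices are unbiased for $\mu_C^\ell$, up to that same within-cell heterogeneity. So $\hat U_\ell$ is an unbiased $U$-statistic of order $\ell$.

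\textbf{Step 1 (bias).} Take the best uniform polynomial approximant $P_{2D}(v)=\sum_\ell c_\ell v^\ell$ of $|v|$ on $[-M,M]$ with $M\asymp k^{-a}$. This interval covers the drift range by the Hölder-maximal amplitude cap and the interior regime. By Proposition~\ref{prop:H5} and rescaling, the uniform error is $M\delta_{2D}\asymp M/D$. Linearity then gives $|\E\widehat{|\mu_C|}-|\mu_C||\le M/D$ plus the Hölder remainder.

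\textbf{Step 2 (variance).} Bound $\Var(\hat U_\ell)$ by Hoeffding's decomposition of $U$-statistics with bounded kernels. The coefficients of the rescaled Chebyshev expansion satisfy $|c_\ell|M^\ell\le M\,2^{O(D)}$, which yields $\Var\widehat{|\mu_C|}\lesssim M^2 2^{O(D)}/n_c$.

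\textbf{Step 3 (degree choice and aggregation).} Choose $D^\star\asymp\log n_c$ with a small constant, so that $2^{O(D)}\le n_c^{\epsilon}$ and the squared bias $M^2/D^2$ dominates. Aggregate $\tfrac1N\sum_{\text{cells}} w\,\widehat{|\mu_C|}$ over $m_h=h^{-d}$ cells, with $w\le\Gcov$ by (A6).

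\begin{itemize}
\item The bias is a weighted average of per-cell biases, hence $\lesssim k^{-a}/\log k$.
\item The cells are disjoint and the sample-splitting is independent, so the variances add with weights $1/m_h^2$. The aggregated standard deviation is then $k^{-a}n_c^{\epsilon}/\sqrt{m_h n_c}$, which is $o(k^{-a})$.
\end{itemize}

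This gives the $(N/K)^{-a}$ rate up to logs. At $K=1$, combining it with the floor of Theorem~\ref{thm:H6} (itself conditional on gap~(2c)) gives rate-sharpness, which is all that is claimed as partial.

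\textbf{Margin-adaptive version.} For $0<\gamma<\gamma^\star$ I would split the cells by the Stage-1 estimate. On cells where $|\hat\eta_J-\eta_0|>2\varepsilon_1$, the straddle $(\star)$ of Lemma~\ref{lem:H1} certifies the sign, and Route~A's raw-label one-step applies with no bias. On the near-crossing layer, of $\Pbar$-mass $\lesssim\varepsilon_1^{\gamma}$ by (Mgn$_\gamma$), the polynomial-debiased estimator above is used, contributing bias $\lesssim\varepsilon_1\cdot\varepsilon_1^{\gamma}/D$. This is the targeted $(N/K)^{-a(1+\gamma)}$ up to logs.

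\textbf{Main obstacle.} I expect it to be the variance aggregation over the layer. The per-cell $2^{O(D)}/n_c$ factors must be summed over a data-dependent cell set selected using the same Stage-1 fit, and cross-fitting must render that selection independent of the evaluation anchors. I would first try to freeze the layer from held-out anchors via Definition~G.1.2, then bound the layer's cardinality by $m_h\varepsilon_1^{\gamma}$ on the event $E_1$.

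\textbf{$K>1$.} Here the obstacle is structural. Coherent drift across windows means the per-window biases do not cancel, so the statement can only be conjectural at $K>1$. I would record this using the $\sqrt K$ gap of Remark~\ref{rmk:H8} rather than attempt to close it.
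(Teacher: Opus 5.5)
Your four steps, the Stage-1 split in the margin-adaptive version, the identification of the layer-variance aggregation as the open point, and the treatment of $K>1$ all follow the statement's own sketch. Two steps of your $\gamma=0$ argument, however, fail as written.

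\textbf{The interval $[-M,M]$ does not cover the drift range.} The amplitude $A=h^{\alpha_s}=k^{-a}$ is a choice made inside the lower-bound construction of Theorem~\ref{thm:H6}. It is not a constraint of the H\"older interior class, and your upper bound must hold over that whole class to be rate-sharp against that theorem. In that class $|\mu_C|$ can be of order one. The error bound $M/D$ holds only on $[-M,M]$. For $|\mu_C|\gg M$, the expectation $P_{2D}(\mu_C)$ of your cell estimator is a degree-$2D$ polynomial evaluated far outside its approximation interval, and it bears no relation to $|\mu_C|$. You therefore need the Cai--Low screening already at $\gamma=0$:
\begin{itemize}
\item a held-out Stage-1 fit sends cells with $|\hat\eta_J-\eta_0|>2\varepsilon_1$ to the sign-corrected raw residual mean, which is unbiased on $E_1$ by the straddle of Lemma~\ref{lem:H1};
\item only the remaining cells, where $|\mu_C|\lesssim\varepsilon_1$ on $E_1$, go to the polynomial.
\end{itemize}
This is the ``together with Route~A's sign-recovery step'' clause of the statement. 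It is exactly the split you already use for $0<\gamma<\gamma^\star$.

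\textbf{The per-cell variance bound does not follow.} It cannot be obtained from Hoeffding's decomposition together with $|c_\ell|M^\ell\le M\,2^{O(D)}$. Write $c_\ell=\tilde c_\ell M^{1-\ell}$, with $\tilde c_\ell$ the coefficients of the approximant on $[-1,1]$, and let $\sigma^2\le\frac14$ be the conditional variance of $y-\eta_0$. Then
\[
\Var(\hat U_\ell)\approx\sum_{j=1}^{\ell}\binom{\ell}{j}^{2}\binom{n_c}{j}^{-1}\sigma^{2j}\mu_C^{2(\ell-j)},
\qquad
c_\ell^{2}\binom{n_c}{\ell}^{-1}\sigma^{2\ell}\approx\tilde c_\ell^{\,2}M^{2}\,\ell!\,\Big(\frac{\sigma^{2}}{M^{2}n_c}\Big)^{\ell}.
\]
At $M\asymp k^{-a}$ one has $M^2n_c=kh^{2\alpha_s+d}=\Theta(1)$, which causes two problems.
\begin{enumerate}
\item There is no factor $1/n_c$. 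Already the order-two component at $\mu_C=0$ has variance $\approx2\tilde c_2^{\,2}M^2(\sigma^2/(M^2n_c))^2\asymp\tilde c_2^{\,2}M^2$. A single cell cannot resolve $|\mu_C|$ below its own noise level $n_c^{-1/2}\asymp M$.
\item The fully degenerate terms carry $\ell!\,(\sigma^2/(M^2n_c))^\ell=(\log k)^{\Theta(\log k)}$ at $\ell\asymp D\asymp\log n_c$. This is not $2^{O(D)}$, and averaging over $m_h=k^{1-2a}$ cells cannot absorb it.
\end{enumerate}

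\textbf{The repair is Cai and Low's,} in one of two forms.
\begin{itemize}
\item Enlarge the interval to $M\asymp k^{-a}\sqrt{\log k}$. Then $M^2n_c\gtrsim D$, so $\ell!\,(\sigma^2/(M^2n_c))^\ell\le1$ for $\ell\le2D$. The per-cell variance becomes $M^2\,2^{O(D)}$.
\item Keep $M\asymp k^{-a}$ and lower the degree to $D\asymp\epsilon\log k/\log\log k$, so that $(2D)!\le k^{O(\epsilon)}$. The per-cell variance becomes $M^2k^{O(\epsilon)}$.
\end{itemize}
In either case, balance the degree against the aggregated standard deviation $M\,2^{O(D)}/\sqrt{m_h}$ (resp.\ $M\,k^{O(\epsilon)}/\sqrt{m_h}$), not against a per-cell variance. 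The bias $M/D$, plus the $O(k^{-a})$ discretization term you already budgeted, then dominates and gives $k^{-a}$ up to logarithmic factors. With both repairs, your conclusion survives.

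The per-cell bound $M^2\,2^{O(D)}/n_c$, which you take from the statement, is best read as Cai--Low's bound for the average over coordinates. The number of cells $m_h$ belongs where $n_c$ appears.
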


\setcounter{theorem}{10}%
\begin{theorem}[regular branch of Conjecture~E.6; Section~G as the perfect-margin endpoint]\label{thm:H11}
\textbf{[partial]}
For fixed $K$, the covered-part covariate wall $\Dcov=2\int_\J\Psi\,d\nu$ is $\sqrt N$-regular
(rate $N^{-1/2}$, EIF $\phi$, efficient constant $V^\star_{A1}$) under
\[
\underbrace{w\le\Gcov}_{\text{(A6) overlap}}\ \wedge\ \underbrace{\Pbar(\bar\eta_J=\eta_0)=0}_{\text{(T)}}\ \wedge\ \underbrace{\gamma>\gamma^\star=\tfrac{d}{2\as}}_{\text{(Mgn}_\gamma)},
\]
\emph{for all $\as$} (no $\as>d/2$ clause; the near-crossing part has mass
$\lesssim\varepsilon_1^{\gamma}$ and the bulk $\int(\bar\eta_J-\eta_0)\,s\,w\,d\nu$ is
linear-in-$\Psi$, hence $\sqrt N$-regular at any smoothness by Prop.~G.1.5, so that no
higher-order influence function is needed).
At $\gamma=0$, the conditional lower bound of Theorem~\ref{thm:H7} implies
non-regularity if gap~(2c) holds. For $0<\gamma<\gamma^\star$, the lower-bound
adaptation remains open; no claim is made here at $\gamma=\gamma^\star$.

\smallskip
\emph{Reconciliation.} Section~G's regime~(I) is the \emph{certified-sign / $\beta_1\equiv0$}
branch: the signed drift is observed, so $\beta_1\equiv0$ for every $\gamma$. It is the
\emph{perfect-margin endpoint} of the magnitude problem in the precise sense that as the
sign-recovery penalty $\varepsilon_1^{1+\gamma}\to0$ (margin $\gamma\to\infty$, or the sign
supplied) Route~A's object converges to regime~(I)'s directly-observed object. Section~G's own
side condition $\varepsilon_1^{1+\gamma}=o(N^{-1/2})$ (Thm.~G.1.4) is \emph{algebraically
identical} to $\gamma>\gamma^\star$; and its ``(H5) fails for all $K\ge1$'' is exactly the
$\gamma=0$ (no-margin) instance, since then $\|\beta_1\|_{L^1}\asymp\varepsilon_1$ needs
$a>\tfrac12$. Thus Section~G sits on the regular side of the frontier:
(H5) holds above $\gamma^\star$; equality remains unresolved, and its failure
below the threshold is conditional/open as described above. With bounded
overlap and transversality, the magnitude wall recovers $\sqrt N$ when drift stays sufficiently
clear of the decision boundary $\bar\eta_J=\eta_0$. The conditional
lower-bound program points to a non-parametric Cai--Low price for
near-boundary drift but does not yet establish it throughout the sub-threshold region.
\end{theorem}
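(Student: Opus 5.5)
The statement is Theorem~\ref{thm:H11}. I would prove it as an assembly of results already stated, rather than as a new argument. Its content splits into three parts: the regular branch under (A6)$\wedge$(T)$\wedge\gamma>\gamma^\star$; the status claims at $\gamma=0$ and on $0<\gamma<\gamma^\star$; and the reconciliation with Section~G.

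\textbf{Regular branch.} I would invoke Theorem~\ref{thm:H3} directly. Its hypotheses are met, because (A6) gives $w\le\Gcov$, which bounds $R^{\mathrm{sign}}_{\mathrm{stage1}}$. Transversality (T) makes the crossing set null, so the margin tube is well defined. For $\gamma>\gamma^\star$, Lemma~\ref{lem:H1} gives the bias bound $\|\beta_1^{\mathrm{sign}}\|_{L^1}\le2C\varepsilon_1^{1+\gamma}=o(N^{-1/2})$.

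The ``for all $\alpha_s$'' clause needs one extra check: no smoothness condition beyond (A3) enters. I would split $\Dcov$ into two layers. The near-crossing layer $\{0<b\le\varepsilon_1\}$ has $\Pbar$-mass at most $C\varepsilon_1^\gamma$ by (Mgn$_\gamma$), and its contribution to the bias is exactly the weighted sign error above. The bulk is $\int(\bar\eta_J-\eta_0)s\,w\,d\nu$ with $s$ frozen by cross-fitting. This is linear in $\Psi$, so Proposition~\ref{prop:H2} reduces it to the regime-(I) object, which Proposition~G.1.5 shows is regular at any smoothness. Hence no higher-order influence function is needed, and no $\alpha_s>d/2$ clause arises. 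The remark after Lemma~\ref{lem:H1} already rules out a $\sigma^2/b$ heavy tail.

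For efficiency, the constant $V^\star_{A1}$ and the matching $N^{-1/2}$ floor are taken from Theorem~G.3.2 as quoted in Theorem~\ref{thm:H3}. The result inherits that theorem's \emph{partial} grading.

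\textbf{Status claims.} These are mostly restatement. At $\gamma=0$, Theorem~\ref{thm:H7} supplies the floor $k^{-a}/(\sqrt K\,\mathrm{polylog})$, conditional on gap~(2c). Non-regularity follows because $k^{1/2-a}\to\infty$ when $a<\tfrac12$. On $0<\gamma<\gamma^\star$ the result simply records Remark~\ref{rmk:H8}(iv) and Proposition~\ref{prop:H10} as open. At $\gamma=\gamma^\star$ no claim is made, for the reason in Theorem~\ref{thm:H9}(c): the logarithmic factor in $\varepsilon_1$ defeats the little-$o$ condition.

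\textbf{Reconciliation.} This part is algebraic. The side condition of Theorem~G.1.4, $\varepsilon_1^{1+\gamma}=o(N^{-1/2})$ with $k=N/K$ and $K$ fixed, is equivalent to $(1+\gamma)a>\tfrac12$. That is $\gamma>\gamma^\star$, the same computation as in Theorem~\ref{thm:H3}.

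At $\gamma=0$ the tube bound is vacuous, so $\|\beta_1\|_{L^1}\asymp\varepsilon_1$. The condition then needs $a>\tfrac12$, which is impossible; this is the failure of (H5).

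In the certified-sign regime, $\hat s=s$ identically, so $\beta_1\equiv0$ by the identity $(\hat s-s)g=-2b\mathbf 1(\hat s\ne s)$. This is the $\gamma\to\infty$ limit of Route~A's bias.

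\textbf{Main obstacle.} I expect the hardest step to be making the ``for all $\alpha_s$'' bulk argument rigorous. The sign $\hat s$ is data-dependent, so linearity in $\Psi$ holds only conditionally on the held-out fold. Controlling this requires the fold-balanced design of Lemma~\ref{lem:G14prime}, and I would lean on Proposition~\ref{prop:H2} for that. Everything beyond this step inherits the partial status of the regime-(I) foundation.
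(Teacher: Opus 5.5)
Your proposal is correct and is essentially the paper's own argument: Theorem~H.11 is a synthesis there too. The regular branch is assembled from the straddle and weighted sign-error bound of Lemma~H.1, the regime-(II)$\to$(I) linearization of Proposition~H.2, the cross-fitted DML2 CLT of Theorem~H.3 with the Le~Cam floor of Theorem~G.3.2, and the bulk/near-crossing split with Proposition~G.1.5 for the ``for all $\alpha_s$'' clause; the reconciliation reduces to $(1+\gamma)a>\tfrac12\iff\gamma>\gamma^\star$, and the $\gamma=0$, sub-threshold and equality clauses are read off Theorem~H.7, Remark~H.8(iv)/Proposition~H.10 and Theorem~H.9(c).

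Two small corrections. First, the step you flag as the main obstacle is already discharged by the fold-balanced split inside Proposition~H.2, which freezes $\hat s$ off the evaluation fold, together with Lemma~G.14$'$. Second, (T) is not needed to make the tube well defined; its role is to make the sign-ambiguous crossing set $\{g=0\}$ null, which the straddle $\hat s\ne s\Rightarrow 0<b\le\varepsilon_1$ (and hence $\bar P(\hat s\ne s)\le\bar P(0<b\le\varepsilon_1)$) requires.
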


\bigskip\hrule\bigskip
\subsection{Summary of status (Section H)}

The per-claim status record for this section (claims, assumptions and
tools used, weakest steps) is in \S S1 of \OAname{}.
In summary, Conjecture~E.6 (regime~II, the magnitude wall) has an established regular
branch above $\gamma^\star=d/(2\as)$, where the wall is
$\sqrt N$-regular (Route~A, Thm.~\ref{thm:H3}, modulo the Section-G regime-(I)
qualifications). At $\gamma=0$, the fixed-$K$ lower bound is conditional on
the Appendix-K gap~(2c); the positive-margin adaptation below the threshold
is open. Equality is unresolved under the logarithmic Stage-1 rate. The
general/growing-$K$ conditional program is taken up in Appendix~I of \OAname{}.

%% file: appendix/app_discharges.tex
\section*{\texorpdfstring{Auxiliary lemmas C.11$'$ and G.14$'$}{Auxiliary lemmas C.11-prime and G.14-prime}}
\phantomsection\label{app:discharges}
\addcontentsline{toc}{section}{Auxiliary lemmas C.11-prime and G.14-prime}

\noindent\emph{Scope.} This section proves two steps that Appendices C and G of
\OAname{} and Appendix~\ref{app:H} use: the Stage-1 sup-norm $k$-NN rate behind Lem.~C.11
(Lemma~C.11$'$ below; it is non-adaptive, tuning $\kappa$ with known $(\as,d)$,
which suffices downstream), and the empirical-to-population and CLT steps of
Thm.~G.1.4, used again in Thm.~H.3 (Lemma~G.14$'$ and the Lindeberg--Feller step
below). The comparison with the published theorems is in \S S2 of
\OAname{}. Step (3) of Thm.~C.12 is not covered here (closing paragraph).

\paragraph{Item 1: Stage-1 sup-norm $k$-NN (Lem.~C.11).} The published uniform
$k$-NN rates do not cover the sup-norm rate in the form Lem.~C.11 needs
(\S S2 of \OAname{} compares them); the lemma below proves it directly,
following the scheme of \citet{jiang2019uniform} under (A3)-compatible hypotheses
with two standard concentration inequalities.

\begin{astmt}{Lemma C.11$'$ (sup-norm variable-radius $k$-NN under minimal mass). [proven]}\taglabel{C.11$'$}{lem:C11prime}
Let $x_1,\dots,x_n$ be i.i.d.\ $q$ on a region $R\subset\mathbb R^{D}$ ($\phi$-coordinates,
$D=$ ambient representation dimension), $y_i\in[0,1]$, $\eta(x)=\E[y\mid x]$
$(L,\as)$-H\"older on $R$. Assume the \emph{minimal-mass condition}
\[
\textbf{(M)}\qquad q\big(B(x,r)\big)\ \ge\ c_0\,r^{d}\qquad\forall x\in R,\ 0<r\le r_0 .
\]
\emph{Scope of (M).} When the lemma is invoked per window $j=1,\dots,K$ (as in Thm.~H.3's
straddle step), (M) is required with a \emph{single} pair of constants $(c_0,r_0)$
\emph{uniform over $j$}, and must hold on the \emph{full anchor-region support} of $p_j$
(every $x\in R_j$, not merely the realized anchor sites), because H.1's straddle reads the
sign at \emph{population} points ($\beta_1^{\mathrm{sign}}(R)=\E[\cdot\mid R]$ integrates over
$X\sim\Pbar$ on that support).
Let $\hat\eta$ be the $\kappa$-NN average (the convention of \citet{jiang2019uniform}: average of $y_i$ over
$N_\kappa(x)=B(x,r_\kappa(x))\cap\{x_i\}$, $r_\kappa(x)=$ $\kappa$-NN radius; ties enlarge
$N_\kappa$, which only helps) with
$\kappa=\lceil n^{2\as/(2\as+d)}(\log n)^{d/(2\as+d)}\rceil$. (Log convention: throughout
this lemma and its proof, $\log$ denotes the natural logarithm, $\log=\ln$.) Then there is
$n_0=n_0(D,d,\as,c_0,r_0)$ such that for $n\ge n_0$, with probability $\ge1-n^{-2}$,
\[
\sup_{x\in R}\big|\hat\eta(x)-\eta(x)\big|\ \le\
\Big[\,L\big(4/c_0\big)^{\as/d}+2\sqrt{D+4}\,\Big]\Big(\frac{\log n}{n}\Big)^{\as/(2\as+d)} .
\]
\end{astmt}

\begin{proof}
\textbf{Step 0 ((A3)$\Rightarrow$(M)).} If $q$ is measure-doubling with exponent $d$ on $R$
($q(B(x,2r))\le2^{d}q(B(x,r))$, the operative meaning of ``doubling intrinsic dimension
$\le d$'') and $\Delta=\operatorname{diam}R$, then iterating $\ell=\lceil\log_2(\Delta/r)\rceil$
times gives $q(B(x,r))\ge2^{-d\ell}q(B(x,\Delta))\ge(r/2\Delta)^{d}q(R)$: (M) holds with
$c_0=q(R)(2\Delta)^{-d}$, $r_0=\Delta$. Alternatively, density bounds w.r.t.\ an
Ahlfors-$d$-regular reference measure give (M) directly. Assumption (A3) is read here in its minimal-mass form (M); this argument records when (M) follows from the stronger measure-doubling reading printed in Section~\ref{sec:assumptions}.

\textbf{Step 1 (uniform radius; bias).} Balls in $\mathbb R^{D}$ have VC dimension $D+1$. By the
VC relative-deviation bound (\citealp[Thm.~5.1]{bousquet2004introduction}, as paraphrased in
\citealp[Thm.~15]{chaudhuri2010cluster}): w.p.\ $\ge1-\delta/2$, every ball $B$
satisfies $q_n(B)\ge q(B)-\beta_n\sqrt{q(B)}$ with
$\beta_n=\sqrt{(4/n)\big((D{+}1)\ln 2n+\ln(16/\delta)\big)}$. Set
$h:=(2\kappa/(c_0n))^{1/d}$ (require $h\le r_0$ and $\kappa\ge8((D{+}1)\ln2n+\ln(16/\delta))$,
i.e.\ $\beta_n^2\le\kappa/2n$; both hold for $n\ge n_0$). For any $x\in R$,
$q(B(x,h))\ge c_0h^{d}=2\kappa/n$, and $t-\beta_n\sqrt t\ge\kappa/n$ for all $t\ge2\kappa/n$
under $\beta_n^2\le\kappa/2n$; hence $q_n(B(x,h))\ge\kappa/n$, i.e.\
$\sup_{x\in R}r_\kappa(x)\le h$. Since every point of $N_\kappa(x)$ lies within
$r_\kappa(x)\le h$: $\big|\,|N_\kappa(x)|^{-1}\sum_{i\in N_\kappa(x)}\eta(x_i)-\eta(x)\big|
\le L h^{\as}$, uniformly. \emph{(No cover argument or per-cover-point stability of the NN sets is needed.)}

\textbf{Step 2 (design-conditional noise; realizable-NN-set counting).} Condition on the design
$(x_1,\dots,x_n)$: $\xi_i:=y_i-\eta(x_i)$ are independent, mean-zero, $|\xi_i|\le1$. Every
realizable $N_\kappa(x)$ is a subset of the design cut out by a closed Euclidean ball, of size
$\ge\kappa$; by Sauer's lemma (VC dim $D{+}1$) the number of distinct ball-cut subsets is
$M\le(n+1)^{D+1}$. (We use Sauer's bound instead of the arrangement count in
\citet[Lem.~3]{jiang2019uniform}; it is conservative.) Hoeffding's inequality for a fixed set $A$ with $|A|=s\ge\kappa$ gives
$\Pr(|s^{-1}\sum_{i\in A}\xi_i|>t)\le2e^{-st^2/2}\le2e^{-\kappa t^2/2}$. A union bound with
$t=\sqrt{2((D{+}1)\ln(n{+}1)+\ln(4/\delta))/\kappa}$ gives, w.p.\ $\ge1-\delta/2$ conditionally
on \emph{any} design (hence unconditionally),
$\sup_x\big||N_\kappa(x)|^{-1}\sum_{i\in N_\kappa(x)}\xi_i\big|\le t$.

\textbf{Step 3 (combine).} Take $\delta=n^{-2}$. Then $\ln(4/\delta)\le2\ln n+\ln4$, and with
the stated $\kappa$ the bias is $\le Lh^{\as}=L\big(2\kappa/(c_0n)\big)^{\as/d}\le
L(4/c_0)^{\as/d}(\log n/n)^{\as/(2\as+d)}$ (without the ceiling
$(\kappa/n)^{\as/d}=(\log n/n)^{\as/(2\as+d)}$ exactly; the ceiling on $\kappa$ costs at most
a factor $2$, absorbed into $4/c_0$), while the noise is
$\le2\sqrt{D+4}\,(\log n/n)^{\as/(2\as+d)}$ for $n\ge n_0$. The failure probability is $\le n^{-2}$. \qedhere
\end{proof}

\noindent\emph{Tail generalization.} Repeating the proof with $\delta=n^{-c}$ changes only
constants ($\kappa$'s side condition and the noise constant scale with $c$): Lemma~C.11$'$
holds with probability $\ge1-n^{-c}$ for any fixed $c>0$, which supplies the $k^{-c}$ tail
used in Cor.~\ref{cor:H4}(b).

\paragraph{Item 2: empirical-to-population / CLT steps of Thm.~G.1.4 (echoed in
Thm.~H.3).} \citet{chernozhukov2018dml} (henceforth CCDDHNR), Theorem~3.1, assumes an i.i.d.\ sample,
while the paper's anchor design is window-stratified (independent, not identically
distributed; inid), so it does not apply as published (\S S2 of \OAname{}). Its
fold-conditional proof device and its model-free Lemma~6.1 carry over verbatim; the
short inid extension is written out below, and the CLT is then Lindeberg--Feller
\citep[Prop.~2.27]{vandervaart1998}.

\begin{astmt}{Lemma G.14$'$ (fold-conditional empirical-to-population under window-stratified sampling). [proven]}\taglabel{G.14$'$}{lem:G14prime}
Fix fold $\ell$ with evaluation windows $I_\ell$, each contributing $k$ i.i.d.\ anchors
(all windows independent), $N_\ell=|I_\ell|k$, $\Pbar_\ell:=|I_\ell|^{-1}\sum_{j\in I_\ell}P_j$.
Let $\hat f_\ell$ be measurable w.r.t.\ the training data (independent of the evaluation
anchors) with $\|\hat f_\ell\|_{L^2(\Pbar_\ell)}<\infty$. Then
\[
\E\Big[\big((\Pbar_{N_\ell}-\Pbar_\ell)\hat f_\ell\big)^2\,\Big|\,\mathrm{train}\Big]
=\frac1{N_\ell^2}\sum_{j\in I_\ell}k\,\Var_{P_j}(\hat f_\ell)
\ \le\ \frac{\|\hat f_\ell\|^2_{L^2(\Pbar_\ell)}}{N_\ell},
\]
so $(\Pbar_{N_\ell}-\Pbar_\ell)\hat f_\ell=O_P\big(N_\ell^{-1/2}\|\hat f_\ell\|_{L^2(\Pbar_\ell)}\big)$
unconditionally by CCDDHNR Lemma~6.1(b); in particular $=o_P(N^{-1/2})$ whenever
$\|\hat f_\ell\|_{L^2(\Pbar_\ell)}=o_P(1)$ and $N_\ell\asymp N$.
\end{astmt}
\begin{proof}
Conditionally on the training data $\hat f_\ell$ is a fixed function; the evaluation anchors are
independent with $\E[\Pbar_{N_\ell}\hat f_\ell\mid\mathrm{train}]
=|I_\ell|^{-1}\sum_j\E_{P_j}\hat f_\ell=\Pbar_\ell\hat f_\ell$, an identity that \emph{requires
equal quotas} $k_j\equiv k$ matching $\Pbar$'s uniform mixture weights (A0); for unequal $k_j$,
$\Pbar$ must be re-weighted to $\sum_j(k_j/N)P_j$, which is precisely Appendix C's Convention
C.0 (anchor-weighted pool $\barmu=N^{-1}\sum_jk_j\mu_j$): unequal quotas are handled by the
already-weighted mixture, provided $\Pbar$ (and $\barmu$) are read consistently in that
convention; Appendix G's uniform mixture is the equal-$k$ special case. The variance splits over independent terms, $\Var_{P_j}\le\E_{P_j}[\hat f^2]$, and averaging over $j\in I_\ell$ gives the displayed bound. Chebyshev's inequality, followed by Lemma~6.1(b), gives the unconditional rate.
\end{proof}

\noindent\emph{Application closing step (2).} Take $\hat f_\ell=\hat w^{(-\ell)}(b-\hat\Psi^{(-\ell)})
-w(b-\Psi)$; since $b,\hat\Psi,\Psi\in[0,1]$ and $w\le\Gcov$,
$\|\hat f_\ell\|_{L^2(\Pbar_\ell)}\le\|\hat w-w\|_{L^2}+\Gcov\|\hat\Psi-\Psi\|_{L^2}=o_P(1)$ by
(H3). For the target term, $\hat\Psi$ is independent of the target sample and $\nu_m$ is i.i.d., so
$\Var\le m^{-1}\|\hat\Psi-\Psi\|^2_{L^2(\nu)}\le m^{-1}\Gcov\|\hat\Psi-\Psi\|^2_{L^2(\barmu)}$,
and (H4), $m\ge cN$, gives $o_P(N^{-1/2})$. Both displays of step (2) are proven for
fold-balanced designs, in which every window contributes equally to every fold (the
anchor-level design of Def.~G.1.2), so that $\bar P_\ell=\bar P$ for each fold.
The theorem uses this design. Window-level folding is not covered: there a fold-centering term
$\sum_\ell(N_\ell/N)(\bar P_\ell-\bar P)\hat f_\ell$, of first order in the nuisance
error, appears and is not controlled by the three proof steps. Unequal quotas are
handled by Convention~C.0's anchor-weighted mixture, as noted in the proof.

\paragraph{Step (3), the CLT: precise citation and hypothesis check.} The remaining terms are
$2N^{-1/2}\sum_{j,i}\phi_{\rm src}(O_{ji})+2\sqrt{N/m}\cdot m^{-1/2}\sum_l\phi_{\rm tgt}(\tilde O_l)$,
$\phi_{\rm src}=w(b-\Psi)$, $\phi_{\rm tgt}=\Psi-\theta$: a row-independent triangular array of
\emph{bounded} summands ($|\phi_{\rm src}|\le\Gcov$, $|\phi_{\rm tgt}|\le1$). The CLT used
is the Lindeberg--Feller CLT for independent non-identically distributed triangular arrays,
namely \citet[Prop.~2.27, \S2.8]{vandervaart1998} (independent mean-zero rows,
$\sum_i\Var\to\Sigma$ plus the Lindeberg condition
imply $\sum_iY_{ni}\Rightarrow\mathcal N(0,\Sigma)$), plus Cram\'er--Wold and Slutsky; CCDDHNR's own
proof uses exactly this pair. \emph{Hypothesis check:} (i) Independence across windows, anchors,
and the target sample holds by (A0) and two-sample independence. (ii) The Lindeberg condition
holds automatically: summands are bounded by $C/\sqrt N$ after scaling, so the Lindeberg sum vanishes for
$N>(C/(\epsilon\sqrt V))^2$, given $V>0$. (iii) Variance identification is the genuine content:
under the fixed-quota design the CLT variance is $N^{-1}\sum_jk\Var_{P_j}(\phi_{\rm src})$, which
equals the mixture variance $\Var_{\Pbar}(\phi_{\rm src})$ used in Prop.~G.1.1's $V$ \emph{if and
only if the per-window means vanish}, $\E_{P_j}[\phi_{\rm src}]=0$ $\forall j$. That is exactly
what stationarity (A1) delivers ($\E[b\mid R{=}r,J{=}j]=\Psi(r)$ for every $j$). Without (A1) the
stratified variance is $V-K^{-1}\sum_j(\E_{P_j}\phi_{\rm src})^2$ (strictly $<V$ unless
every per-window mean vanishes), so the CI constant $V$ is conservative; the one-step
estimator nevertheless \emph{remains centered on the pooled} $\theta$, and bias arises only
relative to the \emph{transfer estimand} $D_{\mathrm{cov}}^{(K+1)}$, whose very definition
requires (A1) extended to window $K{+}1$. (A1) is essential for the CLT constant.
(iv) For growing $K=K_N$
(Cor.~\ref{cor:H4}(b)), convergence of $|I_\ell|^{-1}\sum_j\Var_{P_j}$ is a \emph{new} variance-stabilization
hypothesis.

\paragraph{What is not covered.} Step (3) of the proof of Theorem~C.12 (\OAname{}), the inid extension of the \emph{ratio-type} bracketing argument for
the shape-constrained least-squares estimator
\citep[Thm.~2.7.5]{vandervaartwellner1996}, is a separate open point and is not
settled here.

%% file: appendix/app_J.tex
\section{Measurement protocol and the proxy/sensitivity study}
\label{app:J}

This appendix specifies the measurement protocol behind the empirical
sections of the paper, namely the probe, oracle, and gate configurations
of Sections~\ref{sec:anatomy}--\ref{sec:methodology}, the bootstrap
reconstruction of Section~\ref{sec:weld}, and the decision-rule and
attribution evaluations of
Sections~\ref{sec:certificate}--\ref{sec:attribution-procedure}, and it
documents the two evidence levels used in
Section~\ref{sec:map}. Eight TabReD streams provide noisy, single-stream
proxies for all threshold ingredients. Fifteen related benchmarks lack the
labeled temporal drift field needed to estimate $\alpha_s$ and $\gamma$;
they therefore enter only a sensitivity analysis at
$(\alpha_s,\gamma)=(1,1)$. The appendix also records the deduplication
correction that invalidated two raw low-dimensional readings
(Section~\ref{app:J-dedup}). These analyses motivate, but do not establish, a population-level
claim about which phase real data sets occupy. Printed values are
transcribed from the measurement record of the reproduction deposit (the
per-stream record of each estimate and its provenance).

\subsection{Measurement protocol}
\label{app:J-protocol}

\paragraph{Data sets.}
Twenty-three real tabular data sets: the eight industrial TabReD streams
(five regression, three classification;
$n_{\mathrm{test}}=4{,}647$--$59{,}951$ test points), and fifteen shift benchmarks with precomputed test embeddings of the
frozen tabular foundation model, namely the four ACS census tasks (the
canonical shift benchmark), fraud (\textsc{baf}, \textsc{ieee\_fraud},
\textsc{credit\_card\_fraud}), network security (\textsc{unsw\_nb15},
\textsc{cicids}), medical (\textsc{brfss\_diabetes},
\textsc{diabetes\_readmission}, \textsc{sepsis},
\textsc{mimic\_iv\_mortality}), credit (\textsc{lending\_club}), and
\textsc{bike\_sharing} ($n_{\mathrm{test}}=6{,}101$--$1{,}162{,}213$).

\paragraph{Ingredients of the threshold.}
A data-set-specific phase assessment requires three quantities and a
sampling protocol that supports their joint interpretation.

\emph{(i) Deduplication first.} Rows identical to four decimal places are
removed and every intrinsic-dimension estimate is recomputed on the unique
subset. The step matters because duplicate rows produce
first-neighbor distances $r_1\approx 0$ that break both estimators below
and bias $d_0$ sharply \emph{downward}
(Section~\ref{app:J-dedup}).

\emph{(ii) Intrinsic dimension $d_0$.} For the TabReD streams, the eight
$d_0$ values printed in Table~\ref{tab:J-map} are TwoNN
\citep{facco2017twonn} estimates on a $12{,}000$-row subsample of the raw
streams. The streams are $99.8$--$100\%$ unique at four decimals and
deduplication moves the estimate by at most $0.08$
(Table~\ref{tab:J-tabred-dedup}, where the check is run with the MLE
estimator), so the raw and deduplicated readings are interchangeable at
the printed precision. For the fifteen related
benchmarks, $d_0$ is the Levina--Bickel MLE \citep{levina2004mle} on a $12{,}000$-row subsample, with TwoNN and
PCA cross-checks on borderline cases. The two estimators do not agree
pointwise on low-uniqueness data (an instability documented in
Section~\ref{app:J-dedup}); the deduplicated MLE values are the corrected
record.

\emph{(iii) Drift smoothness $\alpha_s$.} On the TabReD streams (where
labels exist), $\alpha_s$ is the H\"older exponent of the drift/residual
field, read as half the variogram log-log slope. The slope saturates at $2$
in the limit (finite-sample slopes can slightly exceed it), so the
variogram cannot certify $\alpha_s>1$; we therefore work under the
rough-field bound $\alpha_s\le 1$. Three qualifications are recorded
in the measurement record: the estimate is taken on the $k$-NN-\emph{smoothed} drift
field and is therefore biased \emph{high} (the raw-residual H\"older on
\textsc{weather} is $\sim 0.5$); the point
estimates are unstable: the slope is refit on the lower half of the
variogram bins alone, and on $5$/$8$ streams the two estimates differ by more
than $0.2$ (\textsc{sberbank-housing}, \textsc{delivery-eta},
\textsc{homecredit-default}, \textsc{homesite-insurance},
\textsc{ecom-offers}; differences $0.22$--$0.76$, against at most $0.11$ on
the other three), with four of the sixteen values sitting on the clip bounds
$[0.05,1]$; and the field is proxied from residuals on a \emph{single}
stream, not the idealized multi-window drift response $\Psi$ of the
theory. We therefore report them as stream-level proxies and separately
inspect the rough-field bound $\alpha_s\le 1$. On the fifteen label-free
benchmarks we also measured the readout-field smoothness
$\alpha(\eta_0)$, the variogram exponent of $\eta_0(x)=\sigma(z_0(x))$,
where $z_0$ is the logit of the frozen readout $h_0\circ\phi$ and
$\sigma$ the logistic function; this is a
label-free \emph{proxy} for $\alpha_s$. Because no result here justifies
substituting it for drift smoothness, it is not used in the main table's
threshold calculation.

\emph{(iv) Margin exponent $\gamma$.} On the TabReD streams $\gamma$ is fit
from the same residual field (the same single-stream proxy qualification applies). The
measured values sit near the generic transversal value $\gamma\approx 1$,
ranging from $0.58$ (\textsc{homesite-insurance}) to $3.39$
(\textsc{homecredit-default}). At that value the drift crosses the decision
boundary with a density. On the related benchmarks $\gamma$ is not
measurable; $\gamma=1$ is an explicit sensitivity assumption, not an
estimate.

\paragraph{Probe and oracle configurations.}
All measurements of Sections~\ref{sec:anatomy}--\ref{sec:methodology} run on precomputed predictions and
$192$-dimensional embeddings of the frozen backbone. The regional-offset
probe fits a $k$-means codebook of $1$, $8$, $32$, or $128$ regions and
per-region offsets (mean residual for regression; a one-dimensional
Newton logit offset for classification), regions with fewer than $30$
fitting points shrinking fully to the global offset; the out-of-time arm
fits codebook and offsets on the first half of the stream and evaluates
on the second.
The C4 replication (Section~\ref{sec:channels}) repeats the out-of-time arm at cut
fractions $0.4$, $0.5$, $0.6$, and $0.7$, with $k$-means seeds $0$--$5$
(four initialization restarts per fit) and codebook sizes $8$, $32$, and $128$ regions; the quoted
ranges are the per-cut seed means at the best codebook size, selected on
the same out-of-time evaluation ($8$ regions on \textsc{homesite-insurance}:
$+1.33$, $+1.53$, $+1.93$, $+1.76$; $128$ regions on \textsc{ecom-offers}:
$+1.30$, $+1.44$, $+1.79$, $+1.80$; seed standard errors at most $0.007$ on
\textsc{homesite-insurance} and $0.12$ on \textsc{ecom-offers}).
At the other codebook sizes every cut on both streams remains positive
(\textsc{ecom-offers} with $8$ regions: $+0.56$ to $+0.79$).
Signal ceilings use leave-one-out $k$-NN regression of
residuals on embeddings ($k=20$) in three grades: batch
(time-symmetric), causally growing buffer, and strict out-of-time split.
The temporal-gap stress test on \textsc{homesite-insurance} truncates the
strict split's fitting prefix $G$ rows before the cut, so no retrieved
neighbor lies within $G$ stream positions of the evaluation half;
$G=1{,}000$ (the decoupling distance also used for the variogram) gives
$R^2=0.006$, and the archived sweep reads $0.016$, $0.013$, $0.011$,
$0.006$, $0.008$ at $G=0$, $250$, $500$, $1{,}000$, $2{,}000$.
Noise floors use the temporally decoupled variogram (pairs
$\ge1{,}000$ steps apart). The gate probe couples a streaming $k$-NN
corrector ($k=20$, correction scale $0.7$, block size $64$) to a
test-by-betting e-process on per-round log-loss regret (burn-in
$n_{\mathrm{burn}}\ge20$, level $0.05$). The corrector's remaining
constants (a minimum past buffer of $100$ points; for classification,
clipping of the corrected probability to $[10^{-6},1-10^{-6}]$) are
listed with the reconstruction below and are the values in the deposited
code; the e-process monitor itself is not part of the deposit.
The drift-smoothness proxy $\alpha_s$ is half the log-log slope of the
variogram of the $20$-NN-smoothed residual field against embedding
distance, over $15$ log-spaced bins spanning the $2$nd to $60$th percentile
of $20$-NN distances on a $12{,}000$-row subsample (bins with fewer than
$50$ pairs dropped; the resulting exponent clipped to $[0.05,1]$); the two-range check
refits the slope on the lower half of those bins. The margin exponent
$\gamma$ is the log-log slope of the empirical CDF of the smoothed
residual magnitude between its $5$th and $40$th percentiles.

\paragraph{Resolution selection for $\mathrm{WallB}_\rho$.}
The resolution is not a free parameter: it must match the fiber scale at
which the deployed readout pools. The protocol sets $\rho$ to the median
distance to the $k$-th retrieved neighbor under the deployed retrieval
($k=20$) over the window's unlabeled draws, the operating scale of the
corrector's local averaging, and reports $\mathrm{WallB}_\rho$ together
with the half- and double-resolution values $\mathrm{WallB}_{\rho/2}$ and
$\mathrm{WallB}_{2\rho}$ as a monotone sensitivity band (within-cell
estimation as in Section~\ref{sec:attribution-procedure}, Step~1).
Choosing $\rho$ below the retrieval scale reports capacity the readout
cannot use; above it, reachable structure is misattributed to the wall.
On the eight probe streams the resulting $\rho$ (median Euclidean
distance in the $192$-dimensional embedding to the $20$th retrieved
neighbor over the evaluation half, as recorded in the reproduction
deposit) ranges from $1.95$
(\textsc{homecredit-default}) to $3.97$ (\textsc{sberbank-housing}):
\textsc{weather} $2.28$, \textsc{ecom-offers} $2.20$,
\textsc{delivery-eta} $2.53$, \textsc{maps-routing} $2.53$,
\textsc{cooking-time} $2.79$, \textsc{homesite-insurance} $2.94$.

\paragraph{Worked attribution (Table~\ref{tab:shares}).}
The three-way attribution instantiates Section~\ref{sec:attribution-procedure}
with both scales out of sample: fit on the first half of the stream,
evaluate on the second, $k$-NN residual means at the deployed scale
($k=20$) and at the finest supported scale ($m=5$).
$\mathrm{WallB}_\rho=\max(0,R^2_{m=5}-R^2_{k=20})$; the fine arm's
variance penalty biases the difference downward, so a positive reading is
unlikely to be noise, at the price that a true sub-$\rho$ signal smaller
than that penalty reads as zero (the estimate is a detection, not a
bound). The worked example computes the strict-split analogue of the
protocol resolution ($\rho$ from the fixed fit-half retrieval rather than
the growing deployed buffer) and reports the single operating scale; the
$\rho/2$, $2\rho$ band of the selection protocol applies to the
within-cell estimator, not to this detection form. The $k=20$ column is
checked against the archived strict-split ceilings
behind Table~\ref{tab:channels} (numeric where printed, sign elsewhere)
before the shares are computed.

\paragraph{Screening-rule evaluation (Table~\ref{tab:certificate}).}
The eight-stream evaluation instantiates the two label-free
gates of Section~\ref{sec:certificate} with the rule fixed in advance of the
outcomes. The floor is the calibrated estimator of that section: Bernoulli
$\mathbb E[p(1-p)]$ over the frozen predictions for classification, and for
regression the short-range decoupled semivariance: the mean of
$\tfrac12(r_i-r_j)^2$ over the smallest $1\%$ (and at least $50$) of the
\emph{available} embedding distances among $400{,}000$ uniformly sampled
index pairs at least $1{,}000$ stream steps apart. This is not
the extrapolated nugget of Section~\ref{sec:twins}, which reads the
variogram down to zero distance where \textsc{weather} has no support.
The rule uses the non-extrapolated short-range value deliberately.
On \textsc{weather}, the one stream for which the extrapolated range is
reported, the short-range value is $0.89$ against $0.13$--$0.62$, so it
is the conservative reading for a gate that has to justify adapting. Both floors are divided by the residual
variance of the stream. The local-determinacy statistic is the out-of-sample $R^2$ of the
strictly-past $k$-NN estimate of the residual field under the deployed
retrieval (growing buffer, $k=20$, unweighted neighbor mean, block $64$,
first $100$ points reserved as buffer); every stream is scored in full, so the
evaluated count is $n_{\mathrm{test}}-100$ throughout (the longest stream,
\textsc{maps-routing}, has $n_{\mathrm{test}}=59{,}951$ and evaluates
$59{,}851$ points). The rule fires
only if $1-\text{floor}\ge\tau_{\mathrm{room}}$ and
$R^2_{\mathrm{causal}}>\tau_{\mathrm{sig}}(1-\text{floor})$, at the
operating point $\tau_{\mathrm{room}}=0.10$, $\tau_{\mathrm{sig}}=0$. The
reported verdicts hold for every
$(\tau_{\mathrm{room}},\tau_{\mathrm{sig}})$ in
$[0,0.108)\times[0,0.0616)$, the binding streams being \textsc{weather}
(room $0.108$) and \textsc{sberbank-housing}
($R^2/\text{room}=0.0616$) respectively; both thresholds are swept on a coarse
grid in the archived output, and the exact edges are computed there from
the archived $R^2$ and room columns. Decision-flip fractions, freeze
prices, and post-clip wall widths are reported over an assumed-budget grid
$\beta\in\{0.01,0.05,0.10,0.25\}$, since $\beta$ is a prior rather than a
measurement. The archived output also records, per
classification stream, the variogram floor used for the calibration
contrast and the prediction mean and maximum behind the flip-fraction
statement.

\paragraph{Reconstruction and bootstrap (Table~\ref{tab:bootstrap}).}
The reconstruction reimplements the archived streaming $k$-NN operator
(the corrector re-implemented in the reproduction deposit; past-only
buffer, $k=20$
Euclidean neighbors, \emph{unweighted} residual mean, correction
$p-0.7\,\hat r$ applied in the prediction space and clipped to
$[10^{-6},1-10^{-6}]$ for classification, block $64$, first $100$ points
reserved), applies it to the single precomputed prior of each probe
stream, and scores
\% RMSE reduction (regression) or AUC points (classification) against the
frozen arm. The streaming pass is run once on the true stream order; the
bootstrap resamples evaluation positions in moving blocks of length
$4\lceil n_{\mathrm{test}}^{1/3}\rceil$ ($B=1000$), scoring both arms on the same
resample so the interval is for the paired difference, and the reported
verdicts are unchanged at block multipliers $2$ and $8$. The
metric-resolution column applies the identical resampling to the frozen
arm alone. Deployed gains are \emph{not} recomputed here: the tuned
system is a seven-prior ensemble whose per-sample outputs are not
redistributed, and the reconstruction is offered as an auditable point
estimate of what the local channel alone delivers.

\paragraph{Threshold check.}
The proven sufficient branch for $\sqrt N$ regularity requires
$\gamma>\gamma^\star=d_0/(2\alpha_s)$, equivalently
\[
2\,\alpha_s\,\gamma \;>\; d_0 .
\]
The strict reverse inequality is the lower branch, conditional on gap
(2c) at $\gamma=0$ and open for $0<\gamma<\gamma^\star$
(Section~\ref{sec:dichotomy}); the equality case is unresolved by the
current logarithmic-rate proof. Under
the sensitivity inputs $(\alpha_s,\gamma)=(1,1)$, the check reduces to
$d_0>2$. This is useful for stress testing but cannot turn $d_0$ alone
into a data-set-specific phase measurement. Below threshold, the candidate
upper-rate exponent is $a(1+\gamma)$ with
$a=\alpha_s/(2\alpha_s+d_0)$; its matching scope is stated precisely in
Appendix~\ref{app:H} and Appendix~I of \OAname{}.

\subsection{The proxy and sensitivity display for the twenty-three data sets}
\label{app:J-table}

The full display is Figure~\ref{fig:map} in the main text, with its
inputs in Table~\ref{tab:J-map}
(Section~\ref{sec:map}). Its filled points are stream-level proxies; its
open points are favorable sensitivity scenarios. Both sets lie below the
threshold for their stated inputs, but only the former use observed drift
labels, and neither set carries uncertainty intervals.

\paragraph{How far below.}
We record, for each TabReD stream, the smoothness $\alpha_s$ it
would need for the $\sqrt N$ regime against the measured upper bound:
\textsc{sberbank-housing} needs $\alpha_s>1.42$ (measured
$\sim 0.77$), \textsc{ecom-offers} $>1.90$ ($\sim 0.14$),
\textsc{homecredit-default} $>2.03$ ($\sim 0.50$), and the remaining
five streams need $>5.44$--$9.22$ against measured values of
$0.05$--$0.95$. Even the two lowest-$d_0$ streams, which are exactly
the data sets closest to the boundary, fall short; this is
consistent with \textsc{sberbank-housing} being the leading empirical example of
the earlier probes (largest historical gain, $+4.6\%$;
Table~\ref{tab:bootstrap}, historical column; $+3.80\%$ in the
reconstruction). The
plug-in upper-rate exponent $a(1+\gamma)$ is small throughout:
$0.007$--$0.35$, mostly below $0.15$, with extremes
$0.007$ (\textsc{homesite-insurance}) and $0.347$
(\textsc{sberbank-housing}; from unrounded inputs, the printed inputs
$d_0=3.01$, $\alpha_s=0.77$, $\gamma=1.06$ give $0.349$). These values quantify the implication of the
proxy inputs, not a minimax rate measured from the streams.
Appendix~I of \OAname{} proves the growing-$K$ negative only conditionally at
$\gamma=0$; it does not justify a blanket claim that additional windows
never help these data sets.

\paragraph{If the rough-field bound fails.}
The bound $\alpha_s\le 1$ is an assumption (Section~\ref{sec:setup}), and
the verdicts above depend on it. For $\alpha_s>1$ the threshold
$\gamma^\star=d_0/(2\alpha_s)$ falls, and the required values just listed
are exactly the crossing points under the printed $d_0$ and $\gamma$
inputs: \textsc{sberbank-housing} would enter the sufficient branch at
$\alpha_s>1.42$, \textsc{ecom-offers} at $>1.90$,
\textsc{homecredit-default} at $>2.03$, and the remaining five streams
only at $>5.44$--$9.22$. Because the variogram estimator saturates at
slope $2$, it cannot detect $\alpha_s>1$, so such a crossing would be
invisible to the proxy of item~(iii): the map can place a stream below
threshold under the bound, but it cannot rule out a smoother drift field
that places the two lowest-$d_0$ streams above it.

\subsection{The deduplication correction}
\label{app:J-dedup}

\paragraph{Mechanism.}
Duplicate rows place many points at first-neighbor distance
$r_1\approx 0$ and bias both intrinsic-dimension estimators downward
(the mechanism is described in Section~\ref{sec:dedup}); the artifact is
a sibling of the temporal-twin variogram-nugget artifact of
Section~\ref{sec:twins}, both being duplicate-driven. The consequence for
protocol is stated in Section~\ref{sec:dedup} and enforced here:
intrinsic dimension on FM embeddings must be measured after
deduplication. Numerically, \textsc{unsw\_nb15} moves from $d_0=0.87$
to $3.58$ and \textsc{acs\_employment} from $1.62$ to $3.11$
(Table~\ref{tab:J-dedup}).

\paragraph{The two artifact cases.}
The raw readings placed two data sets in or
near the $\sqrt N$-learnable regime. Both readings were duplicate
inflation.

\textsc{unsw\_nb15} is $58.8\%$ unique ($41\%$ duplicate flows).
The raw MLE gives $d_0=0.87$; after deduplication
$d_0=3.58$, giving the favorable-scenario threshold $d_0/2=1.79$. The raw value is
itself estimator-unstable: a TwoNN pass on the same data gives
$2.93$. On low-uniqueness data the raw numbers are not reproducible
across estimators, which is part of the diagnosis.

\textsc{acs\_employment} is $84.3\%$ unique. On the raw data, TwoNN
gives $0.30$ (unstable, with $4.6\%$ zero-$1$NN pairs), the MLE gives
$1.62$, the PCA
participation ratio is $6.2$ ($11$ PCs for $90\%$ variance), and the
readout field is smooth ($\alpha(\eta_0)\approx 1.12$, a finite-sample
variogram slope
$\sim 2.2$, above the limiting value $2$; $89\%$ of $\eta_0$ in the interior $[0.02,0.98]$). On that
evidence alone the data set could be above threshold: at $d_0\approx 1.6$ and $\gamma\approx 1$
the requirement is $\alpha_s>0.81$, plausibly met if drift smoothness
tracks readout smoothness. The definitive $\alpha_s$ (which needs the
true labels) is not measured, since the effort would be disproportionate for a
boundary estimate. Deduplication closes the question from
the other side under the same sensitivity inputs:
$d_0=1.62\to 3.11$, so $d_0/2=1.56>1$. The apparent explanation of the
raw reading (a low-complexity binary
target concentrating the embedding on a near-$1$-D discriminative
direction) rests on the inflated estimate and does not remain valid after the correction.

Table~\ref{tab:J-dedup} gives the full re-measured block, including
\textsc{acs\_unemployment} ($93.1\%$ unique, $2.24\to 3.49$, raw reading
borderline) and six higher-uniqueness data sets whose values move
little. The lower block extends the identical protocol to the six
high-$d_0$ benchmarks, on which the correction is inert: all six
are $99.97$--$100\%$ unique at four decimals (five carry literally no
duplicate rows), so for them the raw-versus-deduplicated difference is
re-subsampling noise, which bounds the deduplication effect itself by $0.10$
(against the same-run raw re-estimate archived with the verification code; against the raw values of the measurement record the shift is at most
$0.04$).

\begin{table}[t]
\centering
\footnotesize
\caption{The deduplication correction: unique fraction at four-decimal
precision and raw versus deduplicated MLE $d_0$. Upper block: the nine
re-measured related benchmarks; two (\textsc{ieee\_fraud},
\textsc{acs\_income}) are already duplicate-free at this precision.
Lower block: the six
high-$d_0$ benchmarks, essentially duplicate-free, where the shift is a
noise bound rather than a correction. The final column computes
$d_0/2$, the threshold under the sensitivity value $\alpha_s=1$; it is not
a measured drift threshold.}
\label{tab:J-dedup}
\begin{tabular}{lrrrr}
\toprule
data set & unique\% & $d_0$ raw (MLE) & $d_0$ dedup & $d_0/2$ scenario \\
\midrule
\textsc{unsw\_nb15}          & 58.8\%  & 0.87 & 3.58 & 1.79 \\
\textsc{acs\_employment}     & 84.3\%  & 1.62 & 3.11 & 1.56 \\
\textsc{acs\_unemployment}   & 93.1\%  & 2.24 & 3.49 & 1.75 \\
\textsc{cicids}              & 96.0\%  & 2.86 & 2.86 & 1.43 \\
\textsc{acs\_publiccoverage} & 98.2\%  & 4.21 & 4.78 & 2.39 \\
\textsc{ieee\_fraud}         & 100\%   & 4.63 & 4.64 & 2.32 \\
\textsc{bike\_sharing}       & 99.9\%  & 4.88 & 4.90 & 2.45 \\
\textsc{credit\_card\_fraud} & 97.8\%  & 5.09 & 5.32 & 2.66 \\
\textsc{acs\_income}         & 100\%   & 6.47 & 6.49 & 3.25 \\
\midrule
\textsc{diabetes\_readmission} & 100\%   & 8.77  & 8.79  & 4.40 \\
\textsc{brfss\_diabetes}       & 99.97\% & 9.46  & 9.46  & 4.73 \\
\textsc{sepsis}                & 100\%   & 12.75 & 12.75 & 6.38 \\
\textsc{baf}                   & 100\%   & 12.77 & 12.81 & 6.41 \\
\textsc{lending\_club}         & 100\%   & 13.57 & 13.55 & 6.78 \\
\textsc{mimic\_iv\_mortality}  & 100\%   & 14.02 & 14.05 & 7.03 \\
\bottomrule
\end{tabular}
\end{table}

\paragraph{The TabReD-8 streams pass the same check.}
Applying the identical protocol (unique rows at four decimals,
recompute MLE $d_0$ on the unique subset, $12{,}000$-row subsample) to the eight
TabReD streams gives Table~\ref{tab:J-tabred-dedup}: all are
$99.8$--$100\%$ unique, every shift is at most $0.08$ in absolute value, and the
minimum deduplicated $d_0$ is $2.94$. Thus the dimension input for the
TabReD sensitivity check is duplicate-robust; only the low-uniqueness
related data sets (network flows, census microdata) were materially
affected. Note that this check uses
the MLE estimator, whose raw point values differ from the TwoNN
values of Table~\ref{tab:J-map} (for example \textsc{delivery-eta}
$14.82$ MLE against $16.63$ TwoNN). That is an estimator difference, not
a data difference; the quantity the check certifies is the
\emph{deduplication shift}, which is at most $0.08$ in absolute value everywhere.

\begin{table}[t]
\centering
\footnotesize
\caption{Deduplication robustness check on the eight TabReD streams (from the
measurement record). All streams are essentially duplicate-free, so their
intrinsic-dimension inputs are insensitive to this correction. Columns
are rounded independently, so a printed raw/dedup pair can differ from
the printed shift in the last digit.}
\label{tab:J-tabred-dedup}
\begin{tabular}{lrrrr}
\toprule
data set & unique\% & $d_0$ raw & $d_0$ dedup & shift \\
\midrule
\textsc{ecom-offers}        & 99.9\% & 2.95  & 2.94  & $0.00$ \\
\textsc{sberbank-housing}   & 99.9\% & 3.06  & 3.08  & $+0.02$ \\
\textsc{weather}            & 99.8\% & 11.06 & 11.13 & $+0.07$ \\
\textsc{cooking-time}       & 100\%  & 10.64 & 10.57 & $-0.07$ \\
\textsc{homesite-insurance} & 100\%  & 10.40 & 10.34 & $-0.06$ \\
\textsc{homecredit-default} & 100\%  & 12.23 & 12.31 & $+0.08$ \\
\textsc{delivery-eta}       & 100\%  & 14.82 & 14.85 & $+0.03$ \\
\textsc{maps-routing}       & 100\%  & 15.62 & 15.58 & $-0.04$ \\
\bottomrule
\end{tabular}
\end{table}

\paragraph{Corrected conclusion.}
Deduplication removes the apparent low-dimensional examples: the lowest
corrected estimates in the re-measured block are \textsc{cicids} at $2.86$ and
\textsc{acs\_employment} at $3.11$, while the minimum TabReD MLE is $2.94$.
Consequently none of these rows enters the regular branch under the
specific sensitivity values $(\alpha_s,\gamma)=(1,1)$. This does not show
that their true drift fields are below threshold, because neither input is
measured for the related benchmarks. It does rule out reading
\textsc{acs\_employment} or \textsc{unsw\_nb15} as an observed
$\sqrt N$-learnable demonstration data set.

\subsection{Reading the display}
\label{app:J-atlas}

\paragraph{Raw versus deduplicated readings.}
The raw and deduplicated columns of Table~\ref{tab:J-dedup} differ
materially on three data sets only (\textsc{unsw\_nb15},
\textsc{acs\_employment}, \textsc{acs\_unemployment}), all of low
uniqueness; on the twelve others the two readings agree to within
subsample noise. The raw readings placed the first two in or near the
regular branch under the sensitivity inputs and the third at its
border; the deduplicated readings place all three below it. Only the
deduplicated column enters Table~\ref{tab:J-map} and
Figure~\ref{fig:map}.

\paragraph{The dimension input drives the threshold spread.}
Under the common sensitivity value $\alpha_s=1$, the threshold input
$d_0/2$ ranges from $1.43$ (\textsc{cicids}) to $7.03$
(\textsc{mimic\_iv\_mortality}). This spread is driven by intrinsic
dimension. Interpreting it as wall height would additionally require
measuring the drift smoothness, so the display supports comparative stress
testing rather than a ranked empirical ordering of walls.

\paragraph{The ACS family gradient.}
The cleanest cross-data-set pattern is the controlled sweep within the
ACS census family (same data, same representation, same shift,
varying only the prediction target). The deduplicated $d_0$ rises with
target complexity: employment $3.11$, unemployment $3.49$, public
coverage $4.78$, income $6.49$. The ordering
(income $>$ coverage $>$ unemployment $>$ employment) already held in the raw
(pre-deduplication) values and persists after the correction; what the correction
removed is only the reading that the low end of the sweep crosses the
boundary. Under a common $(\alpha_s,\gamma)$ scenario this ordering also
orders the threshold inputs. It is a descriptive pattern, not an
independent test of the threshold of Theorem~\ref{thm:dichotomy}.

\paragraph{Reliability of the inputs.}
The deduplicated $d_0$ values and their cross-data-set ordering are the
strongest part of the record, although estimator sensitivity remains. The
TabReD $\alpha_s$ and $\gamma$ values are noisy proxies from a single
residual field, with two-range slope disagreement on five of eight streams and
upward bias from $k$-NN smoothing. For the related benchmarks, both drift
quantities are unobserved. A phase-validation study would need repeated
labeled windows, uncertainty intervals for all three inputs, and a
pre-specified rule for propagating them through
$\gamma^\star=d_0/(2\alpha_s)$.